\documentclass[screen,sigconf]{acmart}

\makeatletter
\def\bbordermatrix#1{\begingroup \m@th
  \@tempdima 4.75\p@
  \setbox\z@\vbox{%
    \def\cr{\crcr\noalign{\kern2\p@\global\let\cr\endline}}%
    \ialign{$##$\hfil\kern2\p@\kern\@tempdima&\thinspace\hfil$##$\hfil
      &&\quad\hfil$##$\hfil\crcr
      \omit\strut\hfil\crcr\noalign{\kern-\baselineskip}%
      #1\crcr\omit\strut\cr}}%
  \setbox\tw@\vbox{\unvcopy\z@\global\setbox\@ne\lastbox}%
  \setbox\tw@\hbox{\unhbox\@ne\unskip\global\setbox\@ne\lastbox}%
  \setbox\tw@\hbox{$\kern\wd\@ne\kern-\@tempdima\left[\kern-\wd\@ne
    \global\setbox\@ne\vbox{\box\@ne\kern2\p@}%
    \vcenter{\kern-\ht\@ne\unvbox\z@\kern-\baselineskip}\,\right]$}%
  \null\;\vbox{\kern\ht\@ne\box\tw@}\endgroup}
\makeatother

\newcommand{\mycircled}[2][none]{%
 \tikz[baseline=(a.base)]\node[draw,circle,inner sep=0.5pt, outer sep=0pt,fill=#1](a){#2};
}
\newcommand{\myboxed}[2][none]{%
 \tikz[baseline=(a.base)]\node[draw,rectangle,inner sep=1pt, outer sep=0pt,fill=#1](a){#2};
}
\makeatletter
\newcommand\myparagraph{\def\@toclevel{4}%
  \@startsection{paragraph}{4}{\z@}%
  {-.2\baselineskip \@plus -2\p@ \@minus -.2\p@}%
  {-3.5\p@}%
  {\ACM@NRadjust{\@parfont}}}
\makeatother

\newcommand{\bigO}[1]{\mathchoice{O\left(#1\right)}{O(#1)}{O(#1)}{O(#1)}} % big O for complexity
\newcommand{\softO}[1]{\mathchoice{\tilde{O}\left(#1\right)}{\tilde O(#1)}{O\tilde{~}(#1)}{O\tilde{~}(#1)}} % soft O for complexity
\newcommand{\expmm}{\omega} % exponent for the cost of matrix multiplication
\newcommand{\Card}[1]{\# #1} % cardinality of a set
\newcommand{\proba}[1]{\mathbb{P}\left(#1\right)}
\newcommand{\probacst}{\rho} % constant for proba bounds
\newcommand{\ZZ}{\mathbb{Z}} % relative integers
\newcommand{\ZZp}{\mathbb{Z}_{> 0}} % positive integers
\newcommand{\field}{\mathbb{K}} % base field
\newcommand{\matspace}[2]{\field^{#1 \times #2}} % scalar matrix space
\newcommand{\genBy}[1]{\langle #1 \rangle} % ideal/module generated by #1
\newcommand{\module}{\mathcal{M}} % module for submatrix with zero left entries
\newcommand{\ideal}{\mathcal{I}} % ideal of multivariate polynomials
\newcommand{\xring}{\field[x]} % univariate polynomial ring in x
\newcommand{\xvecspace}[1]{\xring^{#1}} % univariate polynomial vectors in x
\newcommand{\xmatspace}[2]{\xring^{#1 \times #2}} % univariate polynomial matrix space in x
\newcommand{\xyring}{\field[x,y]} % bivariate polynomial ring in x,y
\newcommand{\trsp}[1]{#1^{\mathsf{T}}} % transpose
\newcommand{\itrsp}[1]{#1^{-\mathsf{T}}} % transpose
\newcommand{\diag}[1]{\operatorname{diag}(#1)}  % diagonal matrix with diagonal entries #1
\newcommand{\ident}[1]{\mathrm{I}_{#1}} % identity matrix of size #1 x #1

\newcommand{\rank}[1]{\operatorname{rank}(#1)} % matrix rank
\newcommand{\adj}[1]{\operatorname{adj}(#1)} % matrix rank

\newcommand{\perm}{\pi} % permutation matrix
\newcommand{\rspan}[1]{\operatorname{RowSpan}(#1)} % module spanned by rows of matrix
\newcommand{\cspan}[1]{\operatorname{ColSpan}(#1)} % module spanned by rows of matrix
\newcommand{\hyprow}{\mathcal{H}_{\mathrm{row}}} % generic row HNF property
\newcommand{\hypcol}{\mathcal{H}_{\mathrm{col}}} % generic column HNF property
\newcommand{\genl}{S} % generic column HNF property
\newcommand{\genr}{T} % generic column HNF property
\newcommand{\sddots}{\scalebox{.75}{$\ddots$}} % to scale ddots, e.g. in smallmatrix
\newcommand{\sumTuple}[1]{|#1|} % sum of entries in a tuple
\newcommand{\rdeg}[2][]{\mathrm{rdeg}_{{#1}}(#2)} % shifted row degree
\newcommand{\lmat}[2][]{\mathrm{lm}_{{#1}}(#2)} % shifted row degree
\newcommand{\dd}{D} % determinantal degree
\newcommand{\da}{\bar{D}} % degree of adjugate
\newcommand{\indset}{J} % set of indices
\makeatletter
\newcommand*{\rem}{%
  \nonscript\mskip-\medmuskip\mkern5mu%
  \mathbin{\operator@font rem}\penalty900\mkern5mu%
  \nonscript\mskip-\medmuskip
}
\makeatother
\newcommand{\lc}[1]{\operatorname{lc}(#1)} % leading coefficient
\newcommand{\lm}[1]{\operatorname{lm}(#1)}

\newcommand{\ydeg}[1]{\mathrm{deg}{}_{y}(#1)} % y-degree of bivariate poly
\newcommand{\xdeg}[1]{\mathrm{deg}{}_{x}(#1)} % x-degree of bivariate poly
\newcommand{\tdeg}[1]{\mathrm{tdeg}(#1)} % total degree of bivariate poly
\newcommand{\flagsfont}[1]{\textsf{#1}} % font for algo flags
\newcommand{\FlagSing}{\flagsfont{Singular}}
\newcommand{\FlagFail}{\flagsfont{Fail}}
\newcommand{\Output}{\flagsfont{out}}
\newcommand{\FlagCert}{\flagsfont{Cert}}
\newcommand{\FlagTrue}{\flagsfont{True}}

\newcommand{\FlagUnknown}{\flagsfont{Unknown}}
\newcommand{\sampleset}{\mathcal{S}}
\newcommand{\Glex}{\mathcal{G}_{\operatorname{lex}}}
\newcommand{\Gdrl}{\mathcal{G}_{\operatorname{drl}}}
\newcommand{\ord}{\preccurlyeq} % monomial order
\newcommand{\ordneq}{\prec} % monomial order (strict)
\newcommand{\lelex}{\mathrel{\ord_{\operatorname{lex}}}} % lexicographic order on the polynomial ring
\newcommand{\ltlex}{\mathrel{\ordneq_{\operatorname{lex}}}} % lexicographic order on the polynomial ring
\newcommand{\ledrl}{\mathrel{\ord_{\operatorname{drl}}}} % degree reverse lexicographic order on the polynomial ring
\newcommand{\ltdrl}{\mathrel{\ordneq_{\operatorname{drl}}}} % degree reverse lexicographic order on the polynomial ring
\usepackage{algorithm}
\usepackage{algpseudocodex}      %%% this is for the algorithms
\newcommand{\algoitem}{\textbullet\,} % item for algo input/output
\algrenewcommand\textproc{\textsf}   %%% was textsc
\algnewcommand{\algorithmicand}{\textbf{and} }
\algnewcommand{\algorithmicor}{\textbf{or} }
\algnewcommand{\OR}{\algorithmicor}
\algnewcommand{\AND}{\algorithmicand}
\algnewcommand{\InlineIf}[2]{% single line if-then
   \algorithmicif\ #1\ \algorithmicthen\ #2}
\algnewcommand{\InlineElse}[1]{% single line else
   \algorithmicelse\ #1}
\algnewcommand{\InlineFor}[2]{\algorithmicfor\ #1\ \algorithmicdo\ #2} % single line for loop
\algrenewcommand\Call[2]{\nameref{#1}\ifthenelse{\equal{#2}{}}{}{\ensuremath{(#2)}}}%
\newcommand{\algoName}[1]{Algorithm \nameref{#1}}
\makeatletter
\newcommand{\algoCaptionLabel}[2]{
     \caption[\textproc{#1}]{\textproc{#1}\ifthenelse{\equal{#2}{}}{}{$(#2)$} }%
     \NR@gettitle{\textproc{#1}}%
      \label{algo:#1}
     }%
\makeatother
\usetikzlibrary{patterns}
\usetikzlibrary{patterns.meta}

\usepackage[shortlabels]{enumitem}

\theoremstyle{acmdefinition}
\usepackage[capitalise]{cleveref}

\AddToHook{cmd/lemma/before}{\crefalias{theorem}{lemma}}
\AddToHook{cmd/proposition/before}{\crefalias{theorem}{proposition}}
\AddToHook{cmd/corollary/before}{\crefalias{theorem}{corollary}}
\AddToHook{cmd/appendix/before}{\crefalias{section}{appendix}}
\makeatletter
\AddToHook{env/algorithmic/begin}{\def\@currentcounter{ALG@line}}
\makeatother
\crefalias{ALG@line}{line}

\title%
[Computing submatrices of the HNF of a structured polynomial matrix]%
{Computing submatrices of the Hermite normal form \texorpdfstring{\\}{} of a structured polynomial matrix}

\author{J\'er\'emy Berthomieu}
\orcid{0000-0002-9011-2211}
\affiliation{
\institution{Sorbonne Université, CNRS, LIP6}
\city{F-75005 Paris}
  \country{France}
}

\author{Vincent Neiger}
\orcid{0000-0002-8311-9490}
\affiliation{
\institution{Sorbonne Université, CNRS, LIP6}
\city{F-75005 Paris}
  \country{France}
}

\author{Hugo Passe}
\orcid{0009-0005-4622-8531}
\affiliation{
\institution{Sorbonne Université, CNRS, LIP6}
\city{F-75005 Paris}
  \country{France}
}

\copyrightyear{2026}
\acmYear{2026}
\setcopyright{cc}
\setcctype{by}
\acmConference[ISSAC '26]{51st International Symposium on Symbolic and Algebraic Computation}{July 13--17, 2026}{Oldenburg, Germany}
\acmBooktitle{51st International Symposium on Symbolic and Algebraic Computation (ISSAC '26), July 13--17, 2026, Oldenburg, Germany}
\acmDOI{10.1145/3815436.3815457}
\acmISBN{979-8-4007-2595-1/2026/07}

\keywords{Hermite normal form; Structured polynomial matrix; Gr\"obner basis.}
\begin{document}

\begin{abstract}
Following several decades of successive algorithmic improvements, works from the 2010s have showed how to compute the Hermite normal form (HNF) of a univariate polynomial matrix within a complexity bound which is essentially that of polynomial matrix multiplication. Recently, several results on bivariate polynomials and Gr\"obner bases have highlighted the interest of computing determinants or HNFs of polynomial matrices that happen to be structured, with a small displacement rank. In such contexts, a small leading principal submatrix of the HNF often contains all the sought information. In this article, we show how the displacement structure can be exploited in order to accelerate the computation of such submatrices. To achieve this, we rely on structured linear algebra over the field thanks to evaluation-interpolation. This allows us to recover some rows of the inverse of the input matrix, from which we deduce the sought HNF submatrix via bases of relations.
\end{abstract}

\maketitle

\section{Introduction}
\label{sec:intro}

The echelonization of univariate polynomial matrices, through the Hermite
normal form (HNF), can be used to compute Gröbner bases of bivariate ideals.
Given polynomials \(f,g\) in \(\xyring\), under some mild assumptions, the HNF
of their Sylvester matrix \(M \in \xmatspace{n}{n}\) reveals the lexicographic
basis of \(\langle f,g\rangle\) \cite[Thm.\,4]{Lazard1985}. This
has recently been generalized to 
zero-dimensional ideals
\(\langle f_1,\ldots,f_{\ell}\rangle \subseteq \xyring\)
with \(\ell \ge 2\), by considering the HNF of a block-Toeplitz polynomial matrix
defined from the \(f_i\)'s \cite{SchostStPierre2023}. Exploiting fast
algorithms for polynomial matrices, this leads to improved complexity bounds,
specifically \(\softO{{\ell}^{\expmm}D^{\expmm}d}\) operations in \(\field\) if
\(d\) bounds the total degree of all \(f_i\)'s and \(\dd\) bounds the degree of
the ideal \cite[Sec.\,2]{SchostStPierre2023}; see also \cite{Dahan2024}. Here,
\(\expmm\) is an exponent for matrix multiplication, with \(2 < \expmm
\le 3\); one may take \(\expmm < 2.372\) \cite{VWXXZ24,AlmanEtAl2025}.

Sylvester and block-Toeplitz matrices are both structured matrices with small
displacement rank \cite{Pan2001}. In the above cost bound, this structure is
not exploited: the computation runs a general-purpose HNF algorithm, which uses
\(\softO{n^{\expmm} \deg(M)}\) operations in \(\field\)
\cite{LabahnNeigerZhou2017}.

Besides matrix structure, a particularity of the above situations is that one
is often satisfied with an \(m \times m\) leading principal submatrix of the
HNF of \(M\). Indeed, this submatrix contains the sought lexicographic basis as
soon as \(m\) exceeds the \(y\)-degree of all elements in this basis. One
usually has \(m \ll n\); in fact, for bases that have a generic form called
shape position, one has \(m=2\).

Even \(m=1\) is interesting, as the computed HNF entry provides a
minimal-degree univariate polynomial in \(\xring\) from the ideal. This is
closely related to the question of computing the determinant of \(M\), or its
largest Smith factor. If \(M\) is arbitrary, the best known cost bounds
for the latter tasks are also in \(\softO{n^{\expmm} \deg(M)}\)
\cite{ZhouLabahnStorjohann2015,LabahnNeigerZhou2017}. This might suggest
pessimism regarding the possibility of designing algorithms that
compute the \(m\times m\) submatrix of the HNF faster than by basically
finding the whole HNF and extracting the submatrix.

However, specifically for Sylvester matrices \(M \in \xmatspace{n}{n}\),
faster algorithms are known. The classical bivariate resultant algorithm
\cite[Sec.\,11.2]{GathenGerhard1999} finds the determinant of \(M\) in
\(\softO{n \dd}\) operations in \(\field\), where \(\dd\) is an a priori
bound on \(\deg(\det(M))\). This algorithm exploits the matrix structure thanks
to an evaluation-interpolation scheme that allows one to rely on fast
operations for structured matrices over \(\field\); Sylvester matrices have
displacement rank \(\alpha = 2\). Even for the most favorable degree patterns
in \(M\), using a general-purpose HNF algorithm would have a worse cost, in
\(\softO{n^{\expmm-1}\dd}\).

In this article, we generalize this approach. We design an algorithm which, for
any \(\alpha\) and \(m\), supports an input \(M\) with displacement rank \(\le
\alpha\) and computes the leading \(m \times m\) principal submatrix of the HNF
of \(M\). Its cost bound extends that of the bivariate resultant: these bounds
match as soon as \(\alpha\) and \(m\) are in \(\bigO{1}\). Our algorithm
follows an evaluation-interpolation scheme, and relies on fast, Las Vegas
randomized, structured linear algebra tools
\cite{BostanJeannerodMouilleron2017}. It includes optimizations under the next
assumptions:

\noindent\hfill
\begin{tabular}{c|c}
  $\hypcol$: generic \emph{column} HNF & $\hyprow$: generic \emph{row} HNF \\ \hline
  \scalebox{0.8}{%
    \begin{minipage}{0.54\columnwidth}
      \begin{equation}
        \label{eqn:generic_col_hnf}%
        MV =
        \begin{bmatrix}
            \bar{h}_0 & \bar{h}_1 & \cdots & \bar{h}_{n-1} \\
                      & 1  \\
                      &  & \ddots \\
                      & & & 1
        \end{bmatrix}
      \end{equation}
    \end{minipage}%
  }
  &
  \scalebox{0.8}{%
    \begin{minipage}{0.54\columnwidth}
      \begin{equation}
        \label{eqn:generic_row_hnf}%
        UM =
        \begin{bmatrix}
            h_0               \\
            h_1 & 1  \\
            \vdots &  & \ddots \\
            h_{n-1} & & & 1
        \end{bmatrix}
      \end{equation}
    \end{minipage}%
  }
\end{tabular}
\hfill { } \\
where $U, V \in \xmatspace{n}{n}$ are unimodular and
\(h_i, \bar{h}_j \in \xring\).
We refer to \cref{sec:structured_system:prelim,sec:hnf:prelim}
for some relevant definitions and notations.

\begin{theorem}
  \label{thm:hnf}%
  \algoName{algo:HermiteFormSubmatrix} takes as input displacement generators
  \(\genl,\genr \in \xmatspace{n}{\alpha}\) of degree at most \(d\) for a
  matrix \(M \in \xmatspace{n}{n}\), an integer \(m \in \{0,\ldots,n-1\}\), and
  bounds \(\dd \geq \deg(\det(M))\) and \(\da \geq \deg(\adj{M})\). Let
  $\Delta = \dd + \da + 1$. If \(M\) is singular, the algorithm returns
  \FlagFail{} or \FlagSing, using \(\softO{\alpha^{\expmm-1}n\Delta + \alpha n
  d}\) operations in \(\field\). If \(M\) is nonsingular, it uses
  \(
    \softO{\lceil m / \alpha \rceil\alpha^{\expmm-1}n\Delta + m^{\expmm-1}n\dd + \alpha n d}
  \)
  operations in \(\field\). It chooses elements at random from a finite
  \(\mathcal{S} \subseteq \field\) and returns \FlagFail{} or
  \FlagSing{} with probability at most \(1/2\) if $\Card{\sampleset} \geq 8
  \Delta \max(n (n+1) , 2\dd)$. If it does not return \FlagFail{} or
  \FlagSing, it returns the leading principal \(m \times m\) submatrix of the
  HNF of \(M\). Moreover, if the input satisfies \(\dd = \deg(\det(M))\),
  the cost reduces
  \begin{itemize}
    \item to \(\softO{\alpha^{\expmm-1}n\Delta + \alpha n d}\) if both \(\hypcol\)
      and \(\hyprow\) are true;
    \item to \(\softO{\alpha^{\expmm-1}n\Delta + m^{\expmm-1} \dd + \alpha n d}\)
      if only \(\hypcol\) is true;
    \item to \(\softO{\left\lceil m / \alpha \right\rceil\alpha^{\expmm-1}n\Delta
      + \alpha n d}\) if only \(\hyprow\) is true.
  \end{itemize}
\end{theorem}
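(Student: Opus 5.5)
The proof follows the structure of \algoName{algo:HermiteFormSubmatrix}. The main ingredients are evaluation-interpolation, structured linear algebra over \(\field\) \cite{BostanJeannerodMouilleron2017}, and a relation-basis computation.

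\textbf{Recovering rows of the inverse.} The key identity is that, if \(H\) is the (row) HNF of \(M\), then \(H = UM\) with \(U = HM^{-1}\) polynomial. Its leading \(m \times m\) principal block is therefore determined by the first \(m\) rows of \(M^{-1} = \adj{M}/\det(M)\). Concretely, the lower-triangular HNF means the leading block \(H_m\) is the HNF of the module of vectors \(p \in \xvecspace{m}\) such that \([p\ 0]\,M^{-1}\) is polynomial.

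To obtain these rows, I would first choose \(\Delta\) random points in \(\sampleset\). At each point \(a\), I evaluate \(\genl,\genr\), which takes \(\softO{\alpha n d}\) operations in total via multipoint evaluation over \(\alpha n\) entries. I then apply the structured solver of \cite{BostanJeannerodMouilleron2017} to get \(\det(M(a))\) together with the first \(m\) rows of \(\adj{M(a)}\). The \(m\) rows are obtained \(\alpha\) at a time, which costs \(\softO{\lceil m/\alpha\rceil\alpha^{\expmm-1}n}\) per point. Interpolation then recovers \(\det(M)\) and the \(m\times n\) submatrix of \(\adj{M}\), since the degree bounds give \(\Delta\) points as enough.

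Singularity is detected when every evaluated determinant vanishes, or when the solver reports failure. In either case the algorithm stops after a single batch, at cost \(\softO{\alpha^{\expmm-1}n\Delta+\alpha nd}\).

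\textbf{Probability bound.} There are two failure sources. First, a point \(a\) may be a root of \(\det(M)\), which has degree at most \(\dd\). Second, the internal randomization of the structured solver may fail; it uses preconditioners whose failure polynomials have degree \(\bigO{n^2}\) per point. A union bound over the \(\Delta\) points with Schwartz--Zippel then gives the condition \(\Card{\sampleset}\ge 8\Delta\max(n(n+1),2\dd)\) for failure probability at most \(1/2\).

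\textbf{Computing the HNF block.} Write \(A\in\xmatspace{m}{n}\) for the recovered rows of \(\adj{M}\) and \(\delta=\det(M)\). The module above is the module of relations \(p\) with \(pA \equiv 0 \bmod \delta\), restricted to the first coordinates. Its HNF is obtained from a minimal relation basis modulo \(\delta\) and a normal-form conversion. I would first compress \(A\) by right-multiplication with a random constant \(n\times m\) matrix, so that the problem becomes \(m\times m\) modulo \(\delta\). Relation-basis and HNF algorithms then cost \(\softO{m^{\expmm-1}\cdot\text{(size)}}\). Since the size is \(n\dd\) in general, this accounts for the \(m^{\expmm-1}n\dd\) term.

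\textbf{Special cases.} Under \(\hypcol\), with \(\dd=\deg\det M\), the adjugate has a particular form: its rows are determined modulo \(\delta\) by a single vector. This reduces the relation problem to one of size \(\dd\), giving the \(m^{\expmm-1}\dd\) term. Under \(\hyprow\), the block is \(\bigl[\begin{smallmatrix}h_0\\ h_i & 1\end{smallmatrix}\bigr]\) with \(h_0=\delta\) up to a constant. Each \(h_i\) is then read directly as the ratio of appropriate adjugate entries modulo \(\delta\), so no relation basis is needed. Under both \(\hypcol\) and \(\hyprow\), only one row, or \(\alpha\) rows, of the adjugate suffices to express every \(h_i\), which removes the factor \(\lceil m/\alpha\rceil\).

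\textbf{Main obstacle.} The hardest step is the relation-basis phase. It must be shown that the compression, or the direct formulas in the \(\hyprow\) and \(\hypcol\) cases, are correct, including the justification that the first \(m\) rows of \(M^{-1}\) determine the leading HNF block. The cost analysis must also be kept within \(m^{\expmm-1}n\dd\) rather than \(m^{\expmm-1}n\Delta\). I would first try to reduce to a Hermite–Padé/relation-basis instance with modulus \(\delta\) and input of total size \(mn\dd\). I would then invoke the known \(\softO{m^{\expmm-1}\cdot \text{size}}\) algorithms.
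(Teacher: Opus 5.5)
\textbf{Main gap: the $\hypcol$ cases.} Your algorithm only ever computes rows of $\adj{M}$. Your claim that, under $\hypcol$ and $\hyprow$, ``one row, or $\alpha$ rows'' of the adjugate determine every $h_i$ is false. With $\delta=\det(M)$, the fact that the HNF row $H_{i*}$ lies in $\rspan{M}$ reads $h_i\adj{M}_{0*}+\adj{M}_{i*}\equiv 0 \bmod \delta$, so $h_i$ is tied to row $i$ of the adjugate. Concretely, $M=H=\left[\begin{smallmatrix}\delta&0&0\\ h_1&1&0\\ h_2&0&1\end{smallmatrix}\right]$ satisfies both $\hypcol$ and $\hyprow$, yet $\adj{M}_{0*}=[1\;0\;0]$ carries no information on $h_1,h_2$. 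A rows-only method therefore keeps the factor $\lceil m/\alpha\rceil$ and cannot reach the first two bullet bounds. It also offers no runtime test for $\hypcol$, which by \cref{lem:verify_hypcol} is read off the denominator of the whole column $N_{*0}$.

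The missing idea is the paper's column-first phase:
\begin{enumerate}
\item Compute $(\mu,c)=(\mu,\mu N_{*0})$ by right solving, at cost $\softO{\alpha^{\expmm-1}n\Delta+\alpha nd}$.
\item Detect $\hypcol$ by checking $\deg(\mu)=\dd$.
\item Use $\rspan{M}=\mathcal{R}(\mu,c)$ (\cref{lem:exploit_hypcol}), which makes $\module_\indset=\mathcal{R}(\mu,c_\indset)$ an $m\times 1$ relation problem costing $\softO{m^{\expmm-1}\dd}$.
\item If moreover $\gcd(\mu,c_0)=1$ (equivalent to $\hyprow$ given $\hypcol$), take $B_{i0}=-c_{j_i}/c_0\bmod\mu$ (\cref{lem:hrow_relbas_onecol}).
\end{enumerate}
The rows $N_{\indset*}$ are computed only when $\hypcol$ is not detected. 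Your union bound must then cover two independent solver calls, which is how the paper reaches $4/\probacst=1/2$ with $\probacst=8$.

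\textbf{Secondary issues.}
\begin{enumerate}
\item Under $\hyprow$, only $\gcd(\delta,\adj{M}_{0*})=1$ holds, so no single entry need be invertible modulo $\delta$. For $\delta=x(x-1)$ and $M=WH$ with $W^{-1}=\left[\begin{smallmatrix}x&x-1\\1&1\end{smallmatrix}\right]$ (determinant $1$), one gets $\adj{M}_{0*}=[x\;\;x-1]$, and your ``ratio of entries'' fails. A Bezout vector, or the paper's $2\times n$ relation bases (\cref{lem:hrow_relbas_manycol}), fixes this within $\softO{mn\dd}$.
\item Your random compression $A\mapsto AC$ adds a failure probability you never analyze. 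Worse, a failure silently yields a wrong $B$, contradicting the Las Vegas claim unless you verify $BA\equiv 0\bmod\delta$. The paper instead reduces to $\bigO{m}$ columns deterministically via a column basis (\cref{prop:PopovRelBas}).
\item The identification of $H_{\indset\indset}$ with the HNF basis of $\module_\indset$ is exactly \cref{lem:hnf_submat}, which you leave unproved.
\item Interpolating $\adj{M}$ requires $\det(M(a_i))$ at every point, an extra output you must justify from the structured routine. The paper needs only $M(a_i)^{-1}Y(a_i)$ and recovers $\mu$ and $\mu N_{\indset*}$ by rational reconstruction with the bounds $\dd,\da$ (\cref{prop:columns_of_inverse}).
\end{enumerate}
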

%It chooses a finite subset \(\sampleset\) of \(\field\) of cardinality at least
%\(\Delta\), draws uniformly at random a subset of cardinality \(\Delta\) of
%\(\sampleset\), and picks at most \((2n-2)\Delta\) elements uniformly at random
%from \(\sampleset\)
%
Concerning the conditions in these items, note that the algorithm detects at
runtime whether \(\hypcol\) or \(\hyprow\) holds, in conjunction with \(\dd =
\deg(\det(M))\): none of these properties has to be known a priori. Note also
that frequent cases involve \(\dd\) and \(\da\) in \(\bigO{nd}\).

A natural perspective for further improvements comes from a recent breakthrough
concerning bivariate resultants
\cite{Villard2018,PernetSignargoutVillard2024}: under a genericity assumption,
the determinant of an \(n \times n\) Sylvester matrix of degree \(d\) can be
computed in \(\softO{n^{2 - 1/\expmm} d}\). In the shape position case, this
also gives access to the lexicographic basis \cite[Sec.\,7]{Villard2018}.
Whether the genericity assumption and algorithmic approach can be extended to
our context is beyond the scope of this paper.

We apply our algorithm to the change of monomial order.
A classical approach to find a lexicographic basis \(\Glex\) is to
first compute a Gröbner basis \(\Gdrl\) for a degree-refining order such as
\(\ledrl\), and then deduce \(\Glex\) from \(\Gdrl\) using a change of order
algorithm. We show how to construct, from \(\Gdrl\), a matrix \(M \in
\xmatspace{n}{n}\) of small displacement rank, with known degree of
determinant, and such that the HNF of \(M\) yields \(\Glex\). This leads to the
next result.

\begin{theorem}
  \label{thm:change_order}%
  Let \(\ideal \subseteq \xyring\) be a zero-dimensional ideal of degree
  \(\dd\) and \(\Gdrl\) be a minimal \(\ledrl\)-Gröbner basis of \(\ideal\)
  with \(x \ltdrl y\). We denote by $d_y$ the maximal \(y\)-degree of a
  polynomial in \(\Gdrl\) and $\ell$ the number of polynomials in \(\Gdrl\).
  \algoName{algo:HermiteFormSubmatrix} can be used to compute a reduced
  \(\lelex\)-Gröbner basis \(\Glex\) of \(\ideal\) with \(x \ltlex y\). It
  chooses elements at random from a finite \(\mathcal{S} \subseteq
  \field\) and returns \FlagFail{} or \FlagSing{} with probability at most
  \(1/2\) if $\Card{\sampleset} \geq 16 (\dd + d_y) \max(d_y (d_y+1) , 2\dd)$.
  It uses \(\softO{(\ell^{\expmm-1}  + m^{\expmm-1}) d_y \dd}\) operations in
  \(\field\), where \(m\) bounds the \(y\)-degree of all polynomials
  in \(\Glex\).
\end{theorem}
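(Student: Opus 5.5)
The plan is to reduce \cref{thm:change_order} to \cref{thm:hnf} by building, from \(\Gdrl\), a matrix \(M\in\xmatspace{n}{n}\) with \(n=\bigO{d_y}\), displacement rank \(\alpha=\bigO{\ell}\), generators of degree \(d=\bigO{\dd/d_y+\text{small}}\), and \(\deg(\det(M))=\dd\) known exactly. The leading \(m\times m\) principal submatrix of its HNF should then encode \(\Glex\).

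\textbf{Construction of \(M\).} View each \(g\in\Gdrl\) as a polynomial in \(y\) over \(\xring\), of \(y\)-degree at most \(d_y\). As in \cite{SchostStPierre2023}, use a block-Toeplitz (generalized Sylvester) matrix. Its rows are the coefficient vectors, in the basis \(1,y,\dots,y^{n-1}\), of the multiples \(y^j g_i\) truncated to \(y\)-degree \(<n\), with \(n\) of order \(2d_y\). Columns are ordered by increasing power of \(y\), so that the leading principal submatrix corresponds to small \(y\)-degrees.

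To make \(\rspan{M}\) equal exactly \(\ideal\cap \xring[y]_{<n}\) modulo the truncation, exploit the \(\ledrl\) structure. For a minimal drl basis of a zero-dimensional bivariate ideal, the leading monomials form a staircase \(x^{a_i}y^{b_i}\). With \(x\ltdrl y\), the \(y\)-degrees of the \(g_i\) are bounded by \(\deg g_i\), and the staircase has at most \(d_y+1\) steps, so \(\ell\le d_y+1\).

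Each block of rows \(\{y^jg_i\}_j\) is Toeplitz, and \(M\) is a vertical stack of \(\ell\) Toeplitz blocks. Hence \(M\) has displacement rank \(\bigO{\ell}\) for the standard Sylvester-type displacement operator. Its generators \(\genl,\genr\) are read off directly from the coefficients of the \(g_i\). Their \(x\)-degree is at most \(\max\deg g_i\le d_y+\dd/d_y\)-ish; more usefully, the total size of the generators is \(\bigO{\ell\, n\, d}\), which must be shown to be \(\softO{\ell d_y \dd}\) or better. To make \(M\) square and nonsingular, balance the number of shifts per block. The degree of \(\det(M)\) is then controlled via the staircase: I expect to show \(\deg(\det(M))=\dd\), the number of monomials under the staircase. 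The natural route is to show that \(M\) is, up to unimodular transformations, a presentation matrix of the \(\xring\)-module \(\xyring/\ideal\), which is free of rank \(\le d_y\)-ish with \(\xring\)-length \(\dd\).

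\textbf{Parameters and applying \cref{thm:hnf}.} Set \(\dd\) to the ideal degree, which is exact and computable from the staircase. Use the exactness \(\dd=\deg(\det(M))\) both to fix the probability bound and to enable the \(\hypcol\) fast path. Bound \(\da\le(n-1)d\), which is \(\bigO{\dd+d_y}\) by the degree pattern, so \(\Delta=\bigO{\dd+d_y}\). This matches the constant \(16(\dd+d_y)\) in the sample-set condition after substituting \(n\sim d_y\) into \(8\Delta\max(n(n+1),2\dd)\).

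Plugging \(\alpha=\bigO{\ell}\) and \(n=\bigO{d_y}\) into the nonsingular cost \(\lceil m/\alpha\rceil\alpha^{\expmm-1}n\Delta+m^{\expmm-1}n\dd+\alpha nd\) gives the claimed bound. For \(m\le\alpha\) the first term is \(\ell^{\expmm-1}d_y\dd\); for \(m>\alpha\) it is \(m\alpha^{\expmm-2}d_y\dd\le m^{\expmm-1}d_y\dd\). The remaining terms fit as well, since \(\alpha nd=\bigO{\ell d_y \dd}\).

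\textbf{Recovering \(\Glex\) and the main obstacle.} Finally, show that the rows of the \(m\times m\) leading HNF submatrix, with \(m\) exceeding the \(y\)-degree of every element of \(\Glex\), give \(\Glex\) after reading rows as polynomials in \(y\) and taking those with a nontrivial diagonal pivot. This uses the correspondence between the Hermite form and the lex basis, as in \cite[Thm.\,4]{Lazard1985} and \cite{SchostStPierre2023}. If \(m\) is unknown, doubling \(m\) costs only a log factor.

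I expect the main obstacle to be the structural claim that this truncated block-Toeplitz matrix has \(\rspan{M}\) matching \(\ideal\) in low \(y\)-degrees and \(\deg(\det(M))=\dd\) exactly. The first thing I would try is to prove that the \(\ledrl\) staircase guarantees every element of \(\ideal\) of \(y\)-degree \(<n\) reduces by \(\Gdrl\) without exceeding \(y\)-degree \(n\), which is a bounded-degree reduction argument. Then \(\det(M)\) equals, up to a constant, the characteristic polynomial of multiplication by \(x\) on \(\xyring/\ideal\), which has degree \(\dd\).
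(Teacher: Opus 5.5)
Your plan is the paper's reduction: build a block-Toeplitz matrix from the shifts $y^kg_i$, use $\deg(\det(M))=\dd$, and apply \cref{thm:hnf} with $\alpha=\ell$ and $n=\bigO{d_y}$. However, the parameter step fails. You take $\da=(n-1)d$ and claim it is $\bigO{\dd+d_y}$, resting on $d=\bigO{\dd/d_y+\text{small}}$; this is false. For $\ideal=\langle x^a,xy,y^b\rangle$ with drl basis $(x^a,xy,y^b)$, any such matrix has an entry $x^a$ and at least $b+1$ columns. Concretely $M=\diag{x^a,x,\ldots,x,1}$ with $n=b+1$, so $(n-1)d=ab$, while $\dd=a+b-1$ and $d_y=b$. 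With $a=b$ this gives $\Delta=\Theta(\dd^2)$. Then neither the cost bound with $d_y\dd$ nor the sample-set condition with $16(\dd+d_y)$ follows from \cref{thm:hnf}. The missing ingredient is a structural adjugate bound. If $M$ is in $s$-weak Popov form for $s=(0,1,\ldots,n-1)$, the predictable degree property applied to $\adj{M}M=\det(M)\ident{n}$ gives $\deg(\adj{M})\le\dd+n-1$ (\cref{lem:degdet_degadj}). Hence $\Delta=\bigO{\dd+d_y}$.

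That weak Popov property, which also gives $\deg(\det(M))=\dd$ directly, depends on a precise construction that you leave open. You only say ``balance the number of shifts'' and allow rows ``truncated to $y$-degree $<n$''. Truncation is not allowed: a truncated $y^kg_i$ is generally not in $\ideal$, so even $\rspan{M}\subseteq\ideal_{<(\cdot,n)}$ fails. The paper sorts $\Gdrl$ by the $y$-degree $b_i$ of $\lm{g_i}=x^{a_i}y^{b_i}$ and takes exactly $n_i=b_{i+1}-b_i$ shifts of $g_i$ for $i<\ell-1$. It chooses $n_{\ell-1}$ so that every $y^{n_i-1}g_i$ already has $y$-degree $<n$. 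The drl leading monomials of the rows then lie on the diagonal, and since drl refines total degree, $\deg(M_{rj})+j\le\deg(M_{rr})+r$, strictly for $j>r$. Hence $\deg(\det(M))=\sum_i n_ia_i$, which is the number of standard monomials, namely $\dd$. Spanning follows because any basis of $\ideal_{<(\cdot,n)}$ also has determinant degree $\dim_\field(\xyring_{<(\cdot,n)}/\ideal_{<(\cdot,n)})=\dd$, as all standard monomials have $y$-degree $<n$. Your drl-reduction idea is a valid alternative for spanning once these $n_i$ are fixed: every nonstandard monomial of $y$-degree $<n$ is then the leading monomial of some $x^ty^kg_i$ with $k<n_i$.

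\emph{Smaller points.}
\begin{itemize}
\item $\xyring/\ideal$ is a torsion $\xring$-module, not a free one.
\item The bound $d\le\dd$ comes from $\tdeg{g_i}=\tdeg{\lm{g_i}}\le\dd$, since a minimal generator $x^ay^b$ has at least $a+b$ standard monomials below it. Your estimate $d_y+\dd/d_y$ fails, e.g.\ for $\langle x^{\dd-1},xy,y^2\rangle$.
\item \cref{thm:hnf} uses the Hankel-type operator $Z_0M-M\trsp{Z_1}$, so you must reverse the rows and work with $LM$. This has the same HNF and the same adjugate degree.
\end{itemize}
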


The parameters \(\ell\), \(d_y\), and \(\dd\) only depend on the set of
monomials involved in \(\Gdrl\). Since \(\ell \in
\bigO{\dd^{1/2}}\) and \(d_y \leq \dd\), for ideals that have the worst-case
monomials in \(\Gdrl\) the cost bound above becomes \(\softO{\dd^{(\expmm + 3)
/2}}\) as soon as \(m \in\bigO{\dd^{1/2}}\). Even in this unfavorable and
rather uncommon worst case, this improves upon the FGLM
algorithm \cite{FaugereGianniLazardMora1993}, which runs in \(\bigO{D^3}\).
Our result is also to be compared with a Gaussian elimination-based approach in
$\bigO{(d_{\max}-d_{\min})\dd^{\expmm}}$ \cite[Prop\,4.10]{Huot2013}, where
\(d_{\min}\) (resp.\ \(d_{\max}\)) is the minimal (resp.\ maximal) total degree
of the leading monomials of \(\Gdrl\). More precisely, the latter bound is for
computing so-called multiplication matrices, from which \(\Glex\) can
be found in \(\bigO{\dd^\expmm \log(\dd)}\) \cite[Ex.\,1.3,
Thm.\,1.7]{NeigerSchost2020}. There are wide ranges of parameters for which our
cost bound provides an improvement.
Finally, under assumptions, the change of order has been carried out in
\cite{BerthomieuNeigerSafey2022} by constructing a polynomial matrix
(unstructured, in that case) and computing its
HNF. Our result complements this with an
efficient HNF-based solution  in cases where the assumptions made in
\cite{BerthomieuNeigerSafey2022} do not hold. In particular, we do not
require that \(\Gdrl\) be reduced, and, most importantly, we remove
the so-called stability assumption.

\myparagraph{Outline.}

\Cref{sec:structured_system} combines an evaluation-interpolation scheme with
structured \(\field\)-linear algebra to solve modular polynomial matrix
equations. Based on this, \Cref{sec:hnf} describes the main algorithm
\nameref{algo:HermiteFormSubmatrix} and proves \cref{thm:hnf}. Finally,
\cref{sec:change_order} applies this algorithm to the bivariate change of
monomial order.

\section{Structured linear systems over \texorpdfstring{\(\xring/\genBy{A}\)}{K[x]/<A>}}
\label{sec:structured_system}

Our main algorithm in \cref{sec:hnf} relies on solving linear systems with
matrix \(M \in \xmatspace{n}{n}\), modulo some polynomial \(A \in \xring\).
Specifically, for some wide matrix \(X \in \xmatspace{m}{n}\) and tall matrix
\(Y \in \xmatspace{n}{m}\), it exploits the system solutions \(X M^{-1} \bmod
A\) and \(M^{-1} Y \bmod A\). In this section, we design a fast algorithm
for finding \(M^{-1} Y \bmod A\); this is easily modified to cover \(X M^{-1}
\bmod A\) (see \cref{app:left_rows_versions}). As highlighted above, our focus
is on matrices \(M\) with small displacement rank, and therefore, on obtaining
a complexity bound that takes this quantity into account.

We require that \(A\) splits into known distinct linear factors, that
is, \(A = \prod_{0 \le i < \Delta} (x - a_i)\) for known distinct
points \(a_0,\ldots,a_{\Delta-1} \in \field\). This is not a restriction in the
context of the later sections, where \(A\) can be chosen freely as long as its
degree is ``large enough''; note that this may require to work in a
small-degree extension of \(\field\) to guarantee that the field cardinality is
at least \(\Delta\). Using this special form of \(A\) allows us to follow an
evaluation-interpolation strategy. Once evaluated, \(M(a_i)\) is a matrix of
small displacement rank over the field \(\field\) and one can rely on existing
fast algorithms for structured matrices over fields
\cite{Pan2001,BostanJeannerodMouilleron2017}. Applying these
algorithms directly on \(M\), as a matrix over \(\field(x)\), would lead to
fractions of large degree and thus to bad performance.

\subsection{Preliminaries: structured matrices}
\label{sec:structured_system:prelim}

For univariate polynomials, we denote by \(\xring_{\dd}\) (resp.\
\(\xring_{<\dd}\)) the set of those of degree equal to \(\dd\) (resp.\ less
than \(\dd\)).

We use standard tools and terminology concerning structured matrices; the
reader may refer to \cite{Pan2001}. Define \(n \times n\) matrices
\[
  Z_0 =
  \left[\begin{smallmatrix}
    & & & 0 \\
    1 \\[-0.15cm]
    & \sddots \\
    & & 1
  \end{smallmatrix}\right]
  ,\quad
  Z_1 =
  \left[\begin{smallmatrix}
    & & & 1 \\
    1 \\[-0.15cm]
    & \sddots \\
    & & 1
  \end{smallmatrix}\right],
  \quad\text{and }
  L =
  \left[\begin{smallmatrix}
    & & & 1 \\[-0.15cm]
    & & \reflectbox{$\sddots$} \\
    & 1 \\
    1
  \end{smallmatrix}\right].
\]
For an \(n \times n\) matrix $M$ with
coefficients in \(\field\) (resp.\ \(\xring\)), its displacement rank is
the rank of \(Z_0 M - M \trsp{Z_1}\). Furthermore, for a given upper bound
\(\alpha\) on the displacement rank of \(M\), one calls displacement
generators for \(M\) any pair of \(n\times \alpha\) matrices \((\genl,\genr)\)
with coefficients in \(\field\) (resp.\ \(\xring\)) such that \(Z_0 M -
M\trsp{Z_1} = \genl \trsp{\genr}\). These generators are used as a compact
representation of \(M\), which is valid because the operator \(M \mapsto Z_0 M - M
\trsp{Z_1}\) is invertible: the data of \(\genl\) and \(\genr\) describes a unique
corresponding \(M\). This representation of \(M\) through \((\genl,\genr)\) is
compact in the sense that it only requires to store \(2\alpha n\) matrix
entries, instead of \(n^2\) for the dense representation of \(M\). In the case
of matrices over \(\xring\), there exist pairs of generators whose degrees do
not exceed that of \(M\), so that the compact representation uses \(2 \alpha n
(\deg(M) + 1)\) field elements.
%% NOTE sketch of proof:
% Indeed, one can show that the displaced
% polynomial matrix \(Z_0 M - M \trsp{Z_1}\) can be written as \(\genl
% \trsp{\genr}\) with \(\trsp{\genr}\) in so-called reduced form, and both
% \(\genl\) and \(\genr\) having degree at most \(\deg(M)\).
Proving this can be done through the notion of reduced forms (as in
\cref{sec:hnf:prelim} but for rectangular matrices); we will not detail
this here, both for conciseness and because the matrices we consider in our target
context have obvious small-degree generators (see \cref{sec:change_order}).

\subsection{Solving by evaluation-interpolation}
\label{sec:structured_system:solver}

One may have noted that our problem of computing \(M^{-1} Y \bmod A\) only
makes sense if \(M\) is invertible modulo \(A\). Equivalently, the determinant
of \(M\) should be coprime with \(A\). This property is not easily tested when
building \(A\), since one lacks information on \(M\) at this stage; for
example, \(\det(M)\) is not known yet. For this reason, we do not assume \(M\)
to be invertible modulo \(A\): our algorithm detects if that is not the
case and returns a flag ``\FlagSing'' accordingly.

One step of \cref{algo:StructuredRightSolve-WithPoints} computes the inverse of some
structured matrix over the base field, say \(M(a_i) \in \matspace{n}{n}\).
While this means that \(Z_0 M(a_i) - M(a_i) \trsp{Z_1}\) has small rank, the
inverse \(M(a_i)^{-1}\) has small displacement rank with respect to a
variant of the displacement operator from \cref{sec:structured_system:prelim}. For this
reason, the output of the inversion algorithm we call consists of generators
\((\bar{\genl},\bar{\genr})\) such that \(\trsp{Z_1} M(a_i)^{-1} - M(a_i)^{-1} Z_0 =
\bar{\genl} \trsp{\bar{\genr}}\) (note the permuted \(\trsp{Z_1}\) and \(Z_0\)).

\begin{algorithm}[ht]
	\algoCaptionLabel{StructuredRightSolve-WithPoints}{\genl,\genr,Y,(a_i)_{0 \le i < \Delta}}
  \begin{algorithmic}[1]

    \Require{%
        \algoitem displacement generators $\genl,\genr \in\xmatspace{n}{\alpha}$ that represent
        $M\in\xmatspace{n}{n}$ through \(Z_0 M - M \trsp{Z_1} = \genl \trsp{\genr}\) (see \cref{sec:structured_system:prelim}); \\
        \algoitem a matrix $Y\in\xmatspace{n}{m}$ of degree less than \(\Delta\); \\
        \algoitem pairwise distinct points $a_0,\ldots,a_{\Delta-1} \in \field$, for \(\Delta \in \ZZp\).%
    }

    \Ensure{Any one of \FlagFail;   % bad luck in the structured linear algebra preconditioning
      \((i,\FlagSing)\) for some \(0 \le i < \Delta\);     % not Failure and \(\det(M)(a_i) = 0\) and \(\det(M)(a_j) \neq 0\) for \(j < i\)
      and $F \in \xmatspace{n}{m}_{<\Delta}$ such that $M F = Y \bmod \prod_{0 \le i < \Delta} (x-a_i)$.}

	  \LComment{multipoint evaluation}
    \State $\genl^{(i)} \in \matspace{n}{\alpha} \gets \genl(a_i)$ for $0 \le i < \Delta$%
        \label{algo:StructuredRightSolve-WithPoints:evalG}%
    \State $\genr^{(i)} \in \matspace{n}{\alpha} \gets \genr(a_i)$ for $0 \le i < \Delta$%
        \label{algo:StructuredRightSolve-WithPoints:evalH}%
    \State $Y^{(i)} \in \matspace{n}{m} \gets Y(a_i)$ for $0 \le i < \Delta$%
      \label{algo:StructuredRightSolve-WithPoints:evalv}%

	  \LComment{at each point, structured linear algebra over \(\field\)}
    \State \(\sampleset \gets\) a finite subset of \(\field\)
	  \For{$i \in\{0,\ldots,\Delta-1\}$}
      \LComment{compute generators \((\bar{\genl},\bar{\genr})\) for the inverse of \(M(a_i)\):}
      \LComment{precisely, \(\trsp{Z_1} M(a_i)^{-1} - M(a_i)^{-1} Z_0 =
      \bar{\genl} \trsp{\bar{\genr}}\)}
      %% NOTE the tranposes are really like this, this is not a typo
      \State \(\flagsfont{gen\_inv} \gets \textproc{inv}(\genl^{(i)}, \genr^{(i)}, \sampleset)\)
      \Comment{Algorithm in \cite[Fig.\,7]{BostanJeannerodMouilleron2017}}
        \label{algo:StructuredRightSolve-WithPoints:inv}
      \State\InlineIf{\(\flagsfont{gen\_inv} = \FlagFail\)}{\Return \FlagFail}
      \State\InlineIf{\(\flagsfont{gen\_inv} = \FlagSing\)}{\Return \((i,\FlagSing)\)}
      \State write \(\flagsfont{gen\_inv} = (\bar{\genl},\bar{\genr})\), with \(\bar{\genl},\bar{\genr} \in \matspace{n}{\alpha}\)
      \LComment{multiply \(F^{(i)} \gets M(a_i)^{-1} Y(a_i)\)}
      \State \(F^{(i)} \in \matspace{n}{m} \gets\) product of the
      \((\trsp{Z_1},Z_0)\)-structured matrix \((\bar{\genl},\bar{\genr})\) by the dense
      matrix \(Y^{(i)}\)
      \Comment{algorithm of \cite[Sec.\,5]{BostanJeannerodMouilleron2017}}
        \label{algo:StructuredRightSolve-WithPoints:mul}
	  \EndFor

	  \LComment{interpolation}
    \State $F \in \xmatspace{n}{m} \gets$ the unique matrix of degree \(< \Delta\) such that \(F(a_i) = F^{(i)}\) for $0 \le i < \Delta$%
      \label{algo:StructuredRightSolve-WithPoints:interp}%

	\State \Return $F$
  \end{algorithmic}
\end{algorithm}

\begin{proposition}
  \label{prop:StructuredModularSystem}%
  \Cref{algo:StructuredRightSolve-WithPoints} uses \(\softO{\lceil m/\alpha \rceil
  \alpha^{\expmm-1} n \Delta + \alpha n d}\)
  % NOTE if wanting more details, logarithmic factors could be given
  operations in \(\field\), where \(d\) is an upper bound on both \(\deg(\genl)\)
  and \(\deg(\genr)\). It chooses a finite subset \(\sampleset\) of \(\field\) and
  picks at most \((2n-2)\Delta\) elements uniformly at random from \(\sampleset\).
  If \(\Card{\sampleset} \ge n(n+1)\), it returns \FlagFail{} with probability at most \(1 - (1 -
  n(n+1) / \Card{\sampleset})^\Delta\). If \(\Card{\sampleset} \ge \probacst n(n+1)\Delta\) for
  some \(\probacst > 1\), this bound is at most \(1/\probacst\).
  Assuming the algorithm does not return \FlagFail, it returns
  \((i,\FlagSing)\) if and only if \(\det(M)(a_i) = 0\) and \(\det(M)(a_j) \neq
  0\) for \(j < i\). If it returns neither \FlagFail{} nor \FlagSing, then
  \(M\) is invertible modulo \(A = \prod_{0 \le i < \Delta} (x-a_i)\) and it
  correctly returns the unique $F = M^{-1} Y \bmod A$ of degree \(< \Delta\).
\end{proposition}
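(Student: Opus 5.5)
The proof splits into three parts: cost, failure probability, and correctness, handled in that order, each reduced to a known ingredient.

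\textbf{Cost.} Lines~\ref{algo:StructuredRightSolve-WithPoints:evalG}--\ref{algo:StructuredRightSolve-WithPoints:evalv} are multipoint evaluations at \(\Delta\) points. For \(\genl,\genr\) we evaluate \(2\alpha n\) polynomials of degree \(\le d\). Splitting each into \(\lceil (d+1)/\Delta\rceil\) chunks of degree \(<\Delta\), or reducing modulo \(A\) first when \(d\ge\Delta\), each evaluation costs \(\softO{d+\Delta}\), for a total of \(\softO{\alpha n (d+\Delta)}\). For \(Y\) we evaluate \(nm\) polynomials of degree \(<\Delta\), costing \(\softO{nm\Delta}\). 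The final interpolation likewise costs \(\softO{nm\Delta}\). Both \(nm\Delta\) and \(\alpha n\Delta\) are bounded by \(\lceil m/\alpha\rceil\alpha^{\expmm-1}n\Delta\), since \(\expmm\ge 2\).

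Inside the loop, the inversion algorithm of \cite[Fig.\,7]{BostanJeannerodMouilleron2017} on an \(n\times n\) matrix of displacement rank \(\alpha\) costs \(\softO{\alpha^{\expmm-1}n}\). The structured-times-dense product of \cite[Sec.\,5]{BostanJeannerodMouilleron2017} with \(m\) columns costs \(\softO{\lceil m/\alpha\rceil \alpha^{\expmm-1} n}\): split \(Y^{(i)}\) into \(\lceil m/\alpha\rceil\) blocks of \(\alpha\) columns, each block costing \(\softO{\alpha^{\expmm-1}n}\). Summing over \(\Delta\) points gives the claimed bound.

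\textbf{Randomness.} I will quote from \cite{BostanJeannerodMouilleron2017} the facts that each call to \textproc{inv} draws at most \(2n-2\) random elements (for the Toeplitz-like preconditioners) and returns \FlagFail{} with probability at most \(n(n+1)/\Card{\sampleset}\). This comes from a Schwartz--Zippel bound on the degree-\(n(n+1)\) polynomial whose nonvanishing guarantees generic rank profile of the preconditioned matrix. The draws at distinct points are independent, so no failure occurs with probability at least \((1-n(n+1)/\Card{\sampleset})^\Delta\). Bernoulli's inequality then gives \(1-(1-x)^\Delta\le \Delta x\le 1/\probacst\) when \(\Card{\sampleset}\ge\probacst n(n+1)\Delta\).

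The subtle point is that the algorithm stops at the first \FlagFail{} or \FlagSing. The failure probability is therefore bounded by the probability that some executed iteration fails, which is at most the independent-trial bound above.

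\textbf{Correctness.} Assuming no \FlagFail, the cited inversion algorithm is Las Vegas: it reports \FlagSing{} exactly when \(M(a_i)\) is singular, i.e.\ \(\det(M)(a_i)=\det(M(a_i))=0\). Since the loop runs in increasing \(i\) and stops at the first such point, this gives the stated characterization of \((i,\FlagSing)\). One must also check that \(\genl(a_i),\genr(a_i)\) are valid generators for \(M(a_i)\); this holds because evaluation is a ring morphism and commutes with the constant matrices \(Z_0,Z_1\).

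If the algorithm never returns a flag, then \(\det(M)(a_i)\neq0\) for all \(i\). Hence \(\det M\) is coprime to \(A\) and \(M\) is invertible modulo \(A\). The returned \(F\) satisfies \(M(a_i)F(a_i)=Y(a_i)\) for every \(i\), so each entry of \(MF-Y\) vanishes at all the \(a_i\), giving \(MF\equiv Y \bmod A\). Uniqueness of the solution of degree \(<\Delta\) follows from invertibility modulo \(A\) and the Chinese remainder theorem.

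The main obstacle is quoting precisely the cost and probability statements of \cite{BostanJeannerodMouilleron2017} for the \((Z_0,\trsp{Z_1})\) operator. In particular, one must confirm the \(n(n+1)\) degree bound and the count of \(2n-2\) random elements per call.
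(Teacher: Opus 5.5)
Your proposal is correct and follows essentially the same route as the paper's proof. It uses the same split into cost, failure probability and correctness, the same per-call guarantees for structured inversion and the structured-times-dense product over \(\field\), and the same evaluation/interpolation argument for \(MF \equiv Y \bmod A\). The differences are only in presentation, and all are valid:
\begin{itemize}
\item you use Bernoulli's inequality where the paper uses monotonicity of \((1-1/(\probacst\Delta))^\Delta\) in \(\Delta\);
\item you split \(Y^{(i)}\) into blocks of \(\alpha\) columns instead of quoting the \(\softO{\min(\alpha,m)^{\expmm-2}\max(\alpha,m)n}\) product bound;
\item you state explicitly the independence and early-termination argument, and the fact that evaluating the generators gives valid generators for \(M(a_i)\), both of which the paper leaves implicit.
\end{itemize}
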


\begin{proof}
  At each iteration, the call to \textproc{inv} at
  \cref{algo:StructuredRightSolve-WithPoints:inv} picks \(2n-2\) elements at random
  from \(\sampleset\), and this call succeeds (i.e., does not return \FlagFail) with
  probability at least \(1 - n(n+1) / \Card{\sampleset}\), according to
  \cite[Thm.\,6.6]{BostanJeannerodMouilleron2017} (and its proof, for the
  probability bound). Thus, over all iterations, at most \((2n-2)\Delta\) elements
  are picked at random from \(\sampleset\), and \cref{algo:StructuredRightSolve-WithPoints}
  returns something else than \FlagFail{} with probability at least \((1 -
  n(n+1) / \Card{\sampleset})^\Delta\).
  Assuming \(\Card{\sampleset} \ge \probacst n(n+1)\Delta\), this bound
  is at least \((1 - 1/(\probacst\Delta))^\Delta\), and the latter quantity grows as a function
  of \(\Delta\), with a value \(1 - 1/\probacst\) when \(\Delta=1\).

  If the \(i\)th call to \textproc{inv} yields \FlagSing, then the rank of the
  matrix represented by the generators \((\genl^{(i)},\genr^{(i)})\) is in
  \(\{0,\ldots,n-1\}\). That matrix is \(M(a_i)\), so in other words,
  \(\det(M)(a_i) = 0\). Conversely, if \(\det(M)(a_i) = 0\), the correctness of
  \textproc{inv} guarantees that \cref{algo:StructuredRightSolve-WithPoints} returns
  either \FlagFail{} or \FlagSing. As a result, when it returns neither of both
  flags, we have \(\det(M)(a_i) \neq 0\) for all \(i\), hence \(M\) is
  invertible modulo \(A\). In that case, by construction the output matrix
  \(F\) satisfies \(F(a_i) = M(a_i)^{-1} Y(a_i)\) and \(\deg(F) < \Delta\), hence
  \(F = M^{-1} Y \bmod A\).

  Having proved correctness, we turn to the cost bound. The evaluation and
  interpolation steps have quasi-linear complexity with respect to the size of
  the objects \cite[Chap.\,10]{GathenGerhard1999}. This means
  \(\softO{\strut (\alpha + m) n \Delta + \alpha n d}\)
  operations in \(\field\) for
  \cref{algo:StructuredRightSolve-WithPoints:evalG,algo:StructuredRightSolve-WithPoints:evalH,algo:StructuredRightSolve-WithPoints:evalv},
  and \(\softO{m n \Delta}\) operations in \(\field\) for
  \cref{algo:StructuredRightSolve-WithPoints:interp}. In a single iteration of the
  loop, the call to \textproc{inv} at \Cref{algo:StructuredRightSolve-WithPoints:inv}
  costs \(\softO{\alpha^{\expmm-1} n}\)
  \cite[Thm.\,6.6]{BostanJeannerodMouilleron2017}. If it succeeds and provides
  \((\trsp{Z_1}, Z_0)\)-generators \((\bar{\genl},\bar{\genr})\) for \(M(a_i)^{-1}\),
  then \cref{algo:StructuredRightSolve-WithPoints:mul} multiplies this
  \((\trsp{Z_1},Z_0)\)-structured matrix by the dense matrix \(Y^{(i)}\) using
  \(
    \softO{\min(\alpha,m)^{\expmm-2} \max(\alpha,m) n},
  \)
  operations, according to \cite[Thm.\,1.1]{BostanJeannerodMouilleron2017}.
  Thus, one iteration costs \(\softO{\lceil m/\alpha \rceil \alpha^{\expmm-1} n}\),
  and the whole loop uses \(\softO{\lceil m/\alpha \rceil \alpha^{\expmm-1} n \Delta}\)
  operations in \(\field\).
\end{proof}

\subsection{Build the modulus and solve}
\label{sec:structured_system:modulus}

Incorporating the choice of the points \(a_i\)'s, we obtain the modular
structured system solving procedure in \cref{algo:StructuredModularRightSolve}.

\begin{algorithm}[ht]
	\algoCaptionLabel{StructuredModularRightSolve}{\genl,\genr,Y,\Delta}
  \begin{algorithmic}[1]

    \Require{%
        \algoitem displacement generators $\genl,\genr \in\xmatspace{n}{\alpha}$ that represent
        $M\in\xmatspace{n}{n}$ through \(Z_0 M - M \trsp{Z_1} = \genl \trsp{\genr}\) (see \cref{sec:structured_system:prelim}); \\
        \algoitem a matrix $Y\in\xmatspace{n}{m}$ of degree less than \(\Delta\); \\
        \algoitem an integer \(\Delta \in \ZZp\).%
    }

    \Ensure{Any one of \FlagFail;
      $((a_j)_{0\le j \le i}, \FlagSing)$
      with \(0 \le i < \Delta\) and $a_j \in \field$;
      and \((A,F)\) where \(A \in \xring_{\Delta}\)
      is coprime with \(\det(M)\) and $F \in \xmatspace{n}{m}_{<\Delta}$
      is such that $MF = Y \bmod A$.}

    \State\InlineIf{\(\field\) has cardinality \(< \Delta\)}{\Return \FlagFail}

    \State $\sampleset \gets$ a finite subset of $\field$ of cardinality \(\ge \Delta\)
      \label{algo:StructuredModularRightSolve:finite_subset}

    \State $\{a_0,\ldots,a_{\Delta-1}\} \gets$ subset of \(\sampleset\), sampled uniformly
    at random among all subsets of \(\sampleset\) of cardinality \(\Delta\)
      \label{algo:StructuredModularRightSolve:choose_points}

    \State $F \gets \Call{algo:StructuredRightSolve-WithPoints}{\genl,\genr,Y,(a_i)_{0 \le i < \Delta}}$
      \label{algo:StructuredModularRightSolve:solve}
    \State\InlineIf{$F = \FlagFail$}{\Return \FlagFail}
      \label{algo:StructuredModularRightSolve:fail}
    \State\InlineIf{$F = (i,\FlagSing)$}{\Return $((a_j)_{0\le j \le i}, \FlagSing)$}
      \label{algo:StructuredModularRightSolve:singular}
    \State\Return \((\prod_{0 \le i < \Delta} (x-a_i), F)\)
      \label{algo:StructuredModularRightSolve:success}
  \end{algorithmic}
\end{algorithm}

\begin{proposition}
  \label{prop:StructuredModularSolve}%
  Assume \(\field\) has cardinality \(\ge \Delta\).
  \Cref{algo:StructuredModularRightSolve} uses \(\softO{\lceil m/\alpha
    \rceil \alpha^{\expmm-1} n \Delta + \alpha n d}\)
  % NOTE if wanting more details, logarithmic factors could be given
  operations in \(\field\), where \(d\) is an upper bound on both \(\deg(\genl)\)
  and \(\deg(\genr)\). It chooses a finite subset \(\sampleset\) of \(\field\) of
  cardinality at least \(\Delta\), draws uniformly at random a subset of
  cardinality \(\Delta\) of \(\sampleset\), and picks at most \((2n-2)\Delta\) elements
  uniformly at random from \(\sampleset\). It always returns \FlagFail{} or \FlagSing{}
  if \(\det(M) = 0\). Otherwise, defining \(\dd = \deg(\det(M)) \ge 0\),
  if for some \(\probacst > 1\) one has
  \[
    \Card{\sampleset} \ge \max(\probacst n(n+1) \Delta, \probacst\dd \Delta + \min(\dd,\Delta) - 1),
  \]
  then the algorithm returns \FlagFail{} or \FlagSing{} with probability
  \(\le 2/\probacst\).

  If it returns $((a_j)_{0\le j \le i}, \FlagSing)$, the $a_j$'s are
  pairwise distinct points in \(\field\) such that \(\det(M)(a_i) = 0\)
  and \(\det(M)(a_j) \neq 0\) for \(j < i\). If it returns \((A,F)\), then \(A \in
  \xring\) has degree \(\Delta\) and is coprime with \(\det(M)\), and $F \in
  \xmatspace{n}{m}$ is the unique matrix of degree \(<
  \Delta\) such that $F = M^{-1} Y \bmod A$.
\end{proposition}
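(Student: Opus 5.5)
The proof reduces almost entirely to \cref{prop:StructuredModularSystem}, applied to the points chosen at \cref{algo:StructuredModularRightSolve:choose_points}. The only new ingredient is a probability estimate: the random points should avoid the roots of \(\det(M)\).

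\emph{Cost.} Checking the field cardinality and building \(\sampleset\) are negligible. Sampling a uniformly random \(\Delta\)-subset of \(\sampleset\) takes \(\softO{\Delta}\) operations, for instance by sequential rejection sampling or a partial Fisher--Yates shuffle; I take this as standard and will state it. Forming \(A=\prod(x-a_i)\) by a subproduct tree costs \(\softO{\Delta}\). Everything else is the call at \cref{algo:StructuredModularRightSolve:solve}, whose cost \(\softO{\lceil m/\alpha\rceil\alpha^{\expmm-1}n\Delta+\alpha n d}\) is given by \cref{prop:StructuredModularSystem}.

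\emph{Correctness.} Since the \(a_i\) form a subset, they are pairwise distinct. The statements about \FlagSing{} and about \((A,F)\) then transfer verbatim from \cref{prop:StructuredModularSystem}. In particular, \(M\) invertible modulo \(A\) means \(\det(M)(a_i)\ne0\) for all \(i\), i.e.\ \(\gcd(A,\det M)=1\), and \(\deg A=\Delta\). If \(\det(M)=0\), then \(\det(M)(a_0)=0\). Absent \FlagFail, the iteration \(i=0\) returns \FlagSing{}, so the algorithm always outputs \FlagFail{} or \FlagSing{}.

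\emph{Probability.} For nonsingular \(M\), let \(E_1\) be the event that \textproc{inv} fails somewhere, and \(E_2\) the event that some \(a_i\) is a root of \(\det(M)\). Given \(\neg E_1\), \FlagSing{} occurs exactly when \(E_2\) holds, so the failure probability is at most \(\mathbb P(E_1)+\mathbb P(E_2)\).
\begin{itemize}
\item \(\mathbb P(E_1)\le 1/\probacst\), from \cref{prop:StructuredModularSystem} with \(\Card{\sampleset}\ge\probacst n(n+1)\Delta\). Its conditioning holds for every fixed choice of points, since the randomness inside \textproc{inv} is independent of the points.
\item \(\mathbb P(E_2)\le 1/\probacst\) is the main obstacle. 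Let \(r\le\dd\) be the number of roots of \(\det(M)\) in \(\sampleset\), and write \(N=\Card{\sampleset}\). The probability that a uniform \(\Delta\)-subset misses all roots is \(\binom{N-r}{\Delta}/\binom{N}{\Delta}=\prod_{j=0}^{\Delta-1}\frac{N-r-j}{N-j}\).
\end{itemize}
I would avoid bounding this product and use a union bound instead. Each individual \(a_i\) is uniform in \(\sampleset\), so \(\mathbb P(E_2)\le \Delta r/N\le \Delta\dd/N\le1/\probacst\) once \(N\ge\probacst\dd\Delta\). This is implied by the stated hypothesis \(N\ge\probacst\dd\Delta+\min(\dd,\Delta)-1\); the extra term presumably comes from the authors bounding the hypergeometric product more tightly, and it only helps. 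If \(r=0\) or \(\dd=0\), then \(E_2\) is empty. Summing gives the bound \(2/\probacst\).
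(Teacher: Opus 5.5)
Your proof is correct. Its skeleton matches the paper's: cost, correctness, the singular case and the \FlagFail{} bound all come from \cref{prop:StructuredModularSystem}, and the bad event is covered by ``some call to \textproc{inv} fails'' together with ``the random subset meets the roots of \(\det(M)\)''.

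The difference is how you bound the second event. The paper computes the exact miss probability \(\binom{\Card{\sampleset}-\dd}{\Delta}/\binom{\Card{\sampleset}}{\Delta}\) and writes it as a product in two ways. It lower-bounds this by \(\max\{(1-\dd/(\Card{\sampleset}+1-\Delta))^{\Delta},(1-\Delta/(\Card{\sampleset}+1-\dd))^{\dd}\}\), then concludes with Bernoulli's inequality \((1-1/(\probacst\Delta))^{\Delta}\ge 1-1/\probacst\). The extra \(\min(\dd,\Delta)-1\) in the hypothesis is there only to absorb the shifted denominators \(\Card{\sampleset}+1-\Delta\) or \(\Card{\sampleset}+1-\dd\).

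Your union bound gives \(\dd\Delta/\Card{\sampleset}\) directly. It therefore needs only \(\Card{\sampleset}\ge\probacst\dd\Delta\), a slightly weaker hypothesis implied by the stated one when \(\dd\ge1\); you handle \(\dd=0\) separately. The paper's product bound is sharper in principle, but its final Bernoulli step throws that away. So your shortcut reaches the same \(1/\probacst\) with less computation.

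One phrasing fix: ``each individual \(a_i\) is uniform in \(\sampleset\)'' depends on how the drawn subset is labeled. State it instead as follows: each fixed root \(\gamma\in\sampleset\) of \(\det(M)\) lies in a uniform \(\Delta\)-subset with probability \(\binom{\Card{\sampleset}-1}{\Delta-1}/\binom{\Card{\sampleset}}{\Delta}=\Delta/\Card{\sampleset}\), and you sum over at most \(\dd\) such roots.

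Your remark that the randomness inside \textproc{inv} is independent of the sampled points is worth keeping. It is what lets the \FlagFail{} bound of \cref{prop:StructuredModularSystem} hold for every choice of points, and the paper leaves it implicit.
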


\begin{proof}
  From the properties of \cref{algo:StructuredRightSolve-WithPoints} stated in
  \cref{prop:StructuredModularSystem}, one directly obtains:
  the complexity bound and correctness along with the claimed properties of the
  output;
  the fact that \cref{algo:StructuredModularRightSolve:solve} always
  returns \FlagFail{} or $(0,\FlagSing{})$ when \(\det(M) = 0\);
  and the upper bound \(1/\probacst\) on the probability that the algorithm
  returns \FlagFail, under the stated assumption on \(\Card{\sampleset}\).
  Thus, it is enough to show that it returns \FlagSing{} with
  probability \(\le 1/\probacst\), when \(\det(M) \neq 0\).

  If the call at \cref{algo:StructuredModularRightSolve:solve} yields $(i,
  \FlagSing)$, then \(\det(M)(a_i) = 0\). In particular, the probability that \cref{algo:StructuredModularRightSolve}
  returns \FlagSing{} is less than or equal to the probability that
  the sampled set \(\{a_0,\ldots,a_{\Delta-1}\}\) intersects with the set of
  roots of \(\det(M)\). Let \(\Gamma \subseteq \field\) be this set of roots.
  Our assumption \(\det(M) \neq 0\) ensures that \(\Card{\Gamma} \le D\),
  hence
  \[
    \proba{\{a_0,\ldots,a_{\Delta-1}\} \cap \Gamma = \emptyset}
    = \frac{\binom{\Card{(\sampleset \setminus \Gamma)}}{\Delta}}{\binom{\Card{\sampleset}}{\Delta}}
    \geq \frac{\binom{\Card{\sampleset} - \dd}{\Delta}}{\binom{\Card{\sampleset}}{\Delta}}.
  \]
  By basic manipulations, the right-hand side can be rewritten as
  \[
    \frac{\binom{\Card{\sampleset}-\dd}{\Delta}}{\binom{\Card{\sampleset}}{\Delta}}
    = \prod_{i=0}^{\dd-1} \frac{\Card{\sampleset} - \Delta - i}{\Card{\sampleset} - i}
    = \prod_{i=0}^{\Delta-1} \frac{\Card{\sampleset} - \dd - i}{\Card{\sampleset} - i} .
  \]
  All terms involved in these products are positive, since our assumption
  on \(\Card{\sampleset}\) implies \(\Card{\sampleset} \ge \dd + \Delta\). This leads to the lower bound
  \[
    \frac{\binom{\Card{\sampleset}-\dd}{\Delta}}{\binom{\Card{\sampleset}}{\Delta}}
    \ge
    \max\left\{
      \left(1 - \frac{\dd}{\Card{\sampleset} + 1 - \Delta}\right)^{\Delta}
      ,
      \left( 1 - \frac{\Delta}{\Card{\sampleset} + 1 - \dd} \right)^{\dd}
    \right\}
    .
  \]
  Then, the assumption
  \(\Card{\sampleset} \ge \probacst \dd \Delta + \min(\dd,\Delta) - 1\) implies that at
  least one of the terms in the above \(\max\) is greater than or equal to \(1
  - 1/\probacst\). Indeed, say for example \(\Card{\sampleset} \ge \probacst\dd
  \Delta + \Delta - 1\), then one has
  \[
    \left( 1 - \frac{\dd}{\Card{\sampleset} + 1 - \Delta} \right)^{\Delta}
    \ge
    \left( 1 - \frac{1}{\probacst\Delta} \right)^{\Delta}
    \ge
    1 - \frac{1}{\probacst}.
  \]
  As a result, \(\proba{\{a_0,\ldots,a_{\Delta-1}\} \cap \Gamma \neq
  \emptyset} \le 1/\probacst\).
\end{proof}

\section{Computing submatrices of the HNF}
\label{sec:hnf}

In this section, we present our main algorithm. Given a nonsingular matrix $M
\in\xmatspace{n}{n}$, it computes a canonical basis \(B \in \xmatspace{m}{m}\)
of some module defined from \(M\) (see \cref{eqn:module}). We identify several conditions
under which the HNF of \(B\) coincides with a submatrix of the HNF of \(M\); in
particular, this is the case 
when one targets 
for 
the leading principal submatrix.
The algorithm starts by computing parts of 
\(M^{-1}\), and then reconstructs \(B\) via a relation basis. Some optimizations are described
in favorable cases, such as when \(M\) satisfies some kind of genericity
property (see \cref{sec:hnf:problem}), and when 
\(\deg(\det(M))\)
is known. The latter occurs for example when \(M\) is in shifted
reduced form (see \cref{sec:change_order} for a concrete such case).

\subsection{Preliminaries: polynomial matrices}
\label{sec:hnf:prelim}

For an \(m\times n\) matrix \(M\), we write \(M_{ij}\) for its entry \((i,j)\),
with indexing \(0 \le i < m\) and \(0 \le j < n\). We use standard submatrix
notation such as \(M_{*j}\) for the column \(j\), and \(M_{*J}\)
for the submatrix formed by columns \(j \in J \subseteq \{0,\ldots,n-1\}\).
For a polynomial matrix $M \in\xmatspace{n}{n}$ and a degree shift \(s =
(s_j)_j \in \ZZ^n\), the \(s\)-row degree of \(M\) is the tuple
denoted by $\rdeg[s]{M} = (t_i)_i \in (\ZZ \cup \{-\infty\})^n$,
whose \(i\)th entry is $t_i = \max_{0\le j < n} (\deg(M_{ij}) + s_j)$.
The $s$-leading matrix of $M$ is the constant matrix $\lmat[s]{M} \in
\matspace{n}{n}$ whose entry $(i,j)$ is the coefficient of degree $t_i - s_j$
of $M_{ij}$ (which we take as $0$ if $t_i = -\infty$).

The matrix $M$ is said to be in \(s\)-reduced form if \(\lmat[s]{M}\) is
invertible; in $s$-weak Popov form if $\lmat[s]{M}$ is invertible and lower
triangular; and in \(s\)-Popov form if it is in \(s\)-weak Popov form and
in addition \(\lmat[(0,\ldots,0)]{\trsp{M}} = \ident{n}\)
\cite{Popov1972,Kailath1980,BeckermannLabahnVillard1999}. The latter condition
means, equivalently, that $\deg(M_{ij}) < \deg(M_{ii})$ for all \(i\neq j\) and
$M_{ii}$ is monic. Similarly, \(M\) is said to be in Hermite normal form
(shortened as HNF) if \(H\) is lower triangular with \(\lmat[0]{\trsp{M}} =
\ident{n}\) \cite{Hermite1851,Kailath1980}. One can check that a matrix \(M\) in HNF
is also in \(s\)-Popov form for the shift \(s = (0,\dd,2\dd,\ldots,(n-1)\dd)\),
where \(\dd = \deg(\det(M))\).

We call \(\xring\)-row span of \(M\), denoted by \(\rspan{M}\), the
\(\xring\)-module generated by its rows, that is,  \(\{u M \mid u \in
\xmatspace{1}{n}\}\).
Let \(\module\) be a \(\xring\)-submodule of \(\xmatspace{1}{n}\). Then
\(\module\) is free \cite[Chap.\,12]{DummitFoote2004}, and if it has rank
\(n\), then there exists \(M \in\xmatspace{n}{n}\) which is nonsingular
(meaning \(\det(M)\neq0\)) and such that \(\module = \rspan{M}\). In this
situation, we say that \(M\) is a basis of \(\module\). There is a unique such
basis \(H \in \xmatspace{n}{n}\) in HNF, and for a given \(s \in
\ZZ^n\), there is a unique basis \(P \in \xmatspace{n}{n}\) of \(\module\) in
\(s\)-Popov form \cite{BeckermannLabahnVillard1999}.

Bases of \(\module\) are related via left-unimodular equivalence: \(P =
UM\) for some matrix \(U\) which is unimodular, that is, \(\det(U) \in
\field\setminus\{0\}\). We say that \(H\) is the HNF of \(M\), and \(P\) is the
\(s\)-Popov form of \(M\). In this paper we also consider the \emph{column} HNF
of \(M\), which is a similar notion corresponding the column span of \(M\): one
may simply consider that it is the transpose of the HNF of \(\trsp{M}\).

In our HNF algorithms, we will make use of relation bases. The relation module
of $\mu \in\xring\setminus\{0\}$ and $F \in\xmatspace{m}{n}$ is
defined as
\[
  \mathcal{R}(\mu,F) = \left\{ p \in \xmatspace{1}{m} \mid pF = 0 \bmod \mu\right\}.
\]
This is a free module of rank $m$, with bases represented as nonsingular
matrices in $\xmatspace{m}{m}$. Its \(s\)-Popov bases, and in
particular its HNF basis, can be computed efficiently (see \cite{Neiger2016} and \cref{app:relbas}):

\begin{proposition}
  \label{prop:HNFRelBas}%
  There is an algorithm $\textproc{HNFRelBas}(\mu, F)$ which, given $\mu
  \in\xring_{\dd}$ and $F \in\xmatspace{m}{n}_{<\dd}$ for some \(\dd
  \in \ZZp\), returns the HNF basis of $\mathcal{R}(\mu,F)$ using
  \(\softO{m^{\expmm-1}n\dd}\) operations in \(\field\).
\end{proposition}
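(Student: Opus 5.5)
This is the standard result on computing the Popov (in particular Hermite) basis of a relation module, as in \cite{Neiger2016}. I would reduce it to the computation of a shifted Popov approximant, or relation, basis. The reduction goes through a vector Padé-type formulation. Consider the \((m+n)\times n\) matrix \(\left[\begin{smallmatrix} F \\ -\mu\ident{n}\end{smallmatrix}\right]\). Its left kernel consists of the rows \([p \;\; q]\) with \(pF = q\mu\), and the projection onto the first \(m\) coordinates maps this kernel bijectively onto \(\mathcal{R}(\mu,F)\).

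The plan has three steps.

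First, I would fix the shift. For the HNF of a module whose determinantal degree is at most \(\dd\), I take \(s=(0,\dd,2\dd,\ldots,(m-1)\dd)\). This works because \(\mu\ident{m}\) lies in the row span of the relation basis, so the determinant of any basis divides \(\mu^m\). Moreover, the sum of the diagonal degrees of the \(s\)-Popov form equals \(\deg\det \le m\dd\). The key point is sharper: the module contains \(\mu\ident{m}\), so each diagonal entry \(h_{ii}\) of the HNF divides \(\mu\) and has degree at most \(\dd\). As noted in \cref{sec:hnf:prelim}, the HNF is then the \(s\)-Popov basis for this shift.

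Second, I would compute the \(s\)-Popov relation basis with the algorithm of \cite{Neiger2016}. That algorithm computes \(s\)-Popov bases of \(\mathcal{R}(\mu,F)\) in \(\softO{m^{\expmm-1}\sigma}\) operations, where \(\sigma = n\dd\) is the total degree of the modulus (\(\mu\) on each of the \(n\) columns). It handles arbitrary shifts: it first finds the degrees of the diagonal (the "minimal degree" step) via shift-reduction and partial linearization, and then recovers the basis through a single approximation with a known shift. This gives exactly \(\softO{m^{\expmm-1}n\dd}\), and the output is normalized directly into Popov, hence Hermite, form.

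The main obstacle is the unbalanced shift \(s\): its amplitude \((m-1)\dd\) is far larger than \(\dd\), and naive approximant algorithms pay for the amplitude. I would overcome this with the standard argument that the \(s\)-Popov basis has total size at most \(m\dd\), since the sum of its column degrees is \(\deg\det\le m\dd\). Then I would apply the divide-and-conquer on the diagonal degrees from \cite{Neiger2016} (in the style of Jeannerod–Neiger–Schost–Villard), which is designed for arbitrary shifts and whose cost depends only on \(\sigma\) and \(m\), not on the shift.

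If one prefers a self-contained route, an alternative is to compute any \(0\)-Popov relation basis, then convert it to HNF using the known-determinant Hermite algorithm of \cite{LabahnNeigerZhou2017}. Its cost \(\softO{m^{\expmm}\dd}\) is within the bound since \(m \le n\) might fail; hence the direct \(s\)-Popov approach is the one I would commit to.
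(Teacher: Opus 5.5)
Your first step is sound. Since $\mu\ident{m}$ lies in $\mathcal{R}(\mu,F)$, each diagonal entry of its HNF divides $\mu$. So the HNF is the $s$-Popov basis for $s=(0,\dd,\ldots,(m-1)\dd)$, and it suffices to compute a shifted Popov relation basis.

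The gap is in your second step. You take the bound $\softO{m^{\expmm-1}\sigma}$ with $\sigma=n\dd$ from \cite[Thm.\,1.4]{Neiger2016} for arbitrary $m$ and $n$. The paper invokes that result only when the number $n$ of equations is in $\bigO{m}$ (see the proof of \cref{prop:PopovRelBas}). For $n\gg m$ the citation, as the paper uses it, does not give $\softO{m^{\expmm-1}n\dd}$.

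This is not a corner case. In \cref{algo:HermiteFormSubmatrix}, \textproc{HNFRelBas} is applied to $R\in\xmatspace{m}{n}$ with typically $m\ll n$, and to $2\times n$ matrices. Your ``main obstacle'' paragraph targets the amplitude of the shift, which the cited theorem already handles for arbitrary shifts; the difficulty the paper has to deal with is the column dimension. Your fallback through a $0$-Popov basis and \cite{LabahnNeigerZhou2017} does not avoid this. It first needs some relation basis within $\softO{m^{\expmm-1}n\dd}$ when $n\gg m$, which is the same missing step.

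The missing idea is to compress the columns of $F$, as the paper does.
\begin{itemize}
\item For $m\le n$, compute $C\in\xmatspace{m}{r}$ whose columns form a basis of the column span of $F$, with $r=\rank{F}\le m$ and $\deg(C)\le\deg(F)$. This costs $\softO{m^{\expmm-1}n\dd}$ operations \cite{ZhouLabahn2013}.
\item There is a unimodular $V\in\xmatspace{n}{n}$ with $FV=[C\;\;0]$. Since $V$ and $V^{-1}$ are polynomial, for $p\in\xmatspace{1}{m}$ one has $pF\in\mu\,\xmatspace{1}{n}$ if and only if $[pC\;\;0]=pFV\in\mu\,\xmatspace{1}{n}$. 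Hence $\mathcal{R}(\mu,F)=\mathcal{R}(\mu,C)$.
\item Now \cite[Thm.\,1.4]{Neiger2016} applies to $C$, which has $r\in\bigO{m}$ columns. It gives the $s$-Popov basis in $\softO{m^{\expmm-1}r\dd}$, which is within $\softO{m^{\expmm}\dd}\subseteq\softO{m^{\expmm-1}n\dd}$ because $m\le n$.
\end{itemize}
Combined with the direct application when $n\in\bigO{m}$, this proves the proposition.
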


\subsection{Problem and notation}
\label{sec:hnf:problem}

Fix \(n \in \ZZp\), and consider a parameter \(m \in \{1,\ldots,n\}\).
Hereafter, we call \(m\)-index tuple any tuple of indices \(\indset =
(j_0,\ldots,j_{m-1})\) such that \(0 \le j_0 < \cdots < j_{m-1} < n\).
Our goal is to find the canonical basis of the module formed by vectors in the
row span of \(M\) whose entries with index not in \(\indset\) are zero.
Denoting the complement set \(\{0,\ldots,n-1\} \setminus \indset\) by \(\{k_0,
\ldots, k_{n-m-1}\}\), where \(k_0 < \cdots < k_{n-m-1}\), we define \(\perm\)
as the \(n \times n\) permutation matrix such that
\[
  [j_0 \; \cdots \; j_{m-1} \; k_0 \; \cdots \; k_{n-m-1}] \perm = [0 \; 1 \; \cdots \; n-1],
\]
that is, the only nonzero entries of \(\perm\) are \(1\)'s at indices
\((i,j_i)\) for \(0
\le i < m\) and \((i,k_i)\) for \(m \le i < n\). The \(\xring\)-module we are
interested in is
\begin{equation}
  \label{eqn:module}%
  \module_{\indset} = \{p \in \xmatspace{1}{m} \mid [p \;\; 0] \perm \in \rspan{M} \}.
\end{equation}
Since \(\det(M) \xmatspace{1}{m} \subseteq \module_{\indset} \subseteq
\xmatspace{1}{m}\), this is a free module of rank \(m\) \cite[Chap.\,12,
Thm.\,4]{DummitFoote2004}.

In particular, when \(\indset = (0,\ldots,m-1)\), \(\perm\) is the identity
and \(\module_\indset\) is the row span of the leading principal \(m \times m\)
submatrix of the HNF of \(M\), as we prove in \cref{lem:hnf_submat}.
Although it may be possible to reduce to this case through row and column
permutations of \(M\), we prefer to avoid this approach and keep a general
\(\indset\), since these permutations would typically increase the displacement rank
of \(M\).

Our algorithm is more efficient and has better properties for matrices \(M\)
whose column or row span has some generic property. These are exactly the
assumptions \(\hypcol\) and \(\hyprow\) of
\cref{eqn:generic_col_hnf,eqn:generic_row_hnf}.
In that case, \(h_0\) is a
constant multiple of \(\det(M)\), and the first row of \(M^{-1}\) is
\(U_{0*}/h_0\); similar remarks hold for \(\bar{h}_0\) and \(V_{*0}\).

We start by presenting an ingredient common to both cases in
\cref{sec:hnf:inverse}, which consists in computing such subsets of rows or
columns of \(M^{-1}\) thanks to the algorithm of \cref{sec:structured_system}.
Then, in \cref{sec:hnf:cols,sec:hnf:rows}, we state the main properties that
underlie the correctness of the algorithm, which is described in
\cref{sec:hnf:algo}.

\subsection{Finding rows or columns of the inverse of \texorpdfstring{\(M\)}{M}}
\label{sec:hnf:inverse}

Using \cref{sec:structured_system} to find $F = M^{-1} \trsp{[1 \; 0 \;
\cdots \; 0]} \bmod A$ allows us to recover the first column of \(M^{-1}\), if
\(\deg(A)\) is large enough. More generally, \(m\) columns or
\(m\) rows of \(M^{-1}\) can be computed, without impacting the cost as long as
\(m \in \bigO{\alpha}\). The algorithm requires some a priori bound on
\(\deg(\det(M))\) and \(\deg(\adj{M})\): general ones are \(\dd = n \deg(M)\)
and \(\da = (n-1) \deg(M)\), but better bounds may be available, for example if
\(M\) has unbalanced row or column degrees, or if it is in shifted reduced or
shifted Popov form (\cref{sec:hnf:prelim}).

\begin{algorithm}[ht]
	\algoCaptionLabel{InverseCols}{\genl,\genr,\indset,\dd,\da}
  \begin{algorithmic}[1]

    \Require{%
        \algoitem displacement generators $\genl,\genr \in\xmatspace{n}{\alpha}$ that represent
        $M\in\xmatspace{n}{n}$ through \(Z_0 M - M \trsp{Z_1} = \genl
        \trsp{\genr}\) (see \cref{sec:structured_system:prelim}); \\
        \algoitem \(\indset = (j_0,\ldots,j_{m-1})\) with \(0 \le j_0 < j_1 < \cdots < j_{m-1} < n\); \\
        \algoitem degree bounds \(\dd \ge \deg(\det(M))\) and \(\da \ge \deg(\adj{M})\).%
    }

    \Ensure{Any one of \FlagFail; \FlagSing;
      and \((\mu, \mu N_{*\indset})\) where \(N = M^{-1}\) and
      \(\mu\) is the monic least common denominator of \(N_{*\indset}\).}

    \State \(\Delta \in \ZZp \gets \dd + \da + 1\)
    \label{algo:InverseOfColumns:params}

    % \State\InlineIf{\(\Card{\field} < \Delta\)}{\Return \FlagFail}
    \State \(Y \in \matspace{n}{m} \gets\) submatrix of columns \((j_0,\ldots,j_{m-1})\) of \(\ident{n}\)
    \label{algo:InverseOfColumns:Y}

    \State \(\Output \gets \Call{algo:StructuredModularRightSolve}{\genl, \genr, Y, \Delta}\)
    \label{algo:InverseOfColumns:solve}

    \State\InlineIf{\(\Output = \FlagFail\)}{\Return \FlagFail}
    \label{algo:InverseOfColumns:fail}

    \State\InlineIf{\(\Output = (*, \FlagSing)\)}{\Return \FlagSing}
    \label{algo:InverseOfColumns:sing}

    \State \((A,F) \gets \Output\), with \(A \in \xring_{\Delta}\) and \(F \in \xmatspace{n}{m}_{<\Delta}\)
    \label{algo:InverseOfColumns:AF}

    \For{\(i \in \{0,\ldots,n-1\}\)} \Comment{rational reconstruction of \(N_{ij_k}\)}
      \For{\(k \in \{0,\ldots,m-1\}\)}
        \State \((f_{ik},g_{ik}) \gets\) nonzero polynomials with
        \(\deg(g_{ik}) \le \dd\), \(\deg(f_{ik}) \le \da\), and \(g_{ik} F_{ik} = f_{ik} \bmod A\)
        \label{algo:InverseOfColumns:ratrecon}
      \EndFor
    \EndFor
    \State \(g \in \xring_{\le \dd} \gets\) least common multiple of \((g_{ik})_{0 \le i < n, 0 \le k < m}\)
    \label{algo:InverseOfColumns:lcm}

    \State\Return \((g , [f_{ik}g/g_{ik}]_{0 \le i < n, 0 \le k < m})\)
  \end{algorithmic}
\end{algorithm}

\begin{proposition}
  \label{prop:columns_of_inverse}%
  \Cref{algo:InverseCols} uses \(\softO{\lceil m/\alpha
  \rceil\alpha^{\expmm-1} n \Delta + \alpha n d}\) operations in \(\field\),
  where \(d\) is an upper bound on both \(\deg(\genl)\) and \(\deg(\genr)\).
  If it does not return \FlagFail{} nor \FlagSing, it correctly computes
  $(\mu,\mu N_{*\indset})$.
\end{proposition}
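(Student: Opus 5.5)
We split the argument into correctness and cost. For correctness, assume the call at \cref{algo:InverseOfColumns:solve} returns neither \FlagFail{} nor \FlagSing. Then \cref{prop:StructuredModularSolve} applies with \(Y\) the selected columns of \(\ident{n}\), which has degree \(0 < \Delta\). It gives \(A \in \xring_{\Delta}\) coprime with \(\det(M)\), together with \(F = M^{-1} Y \bmod A = N_{*\indset} \bmod A\) of degree \(< \Delta\).

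Write \(N = \adj{M}/\det(M)\). Each entry \(N_{ij_k}\) equals \(\adj{M}_{ij_k}/\det(M)\), with numerator degree \(\le \da\) and denominator degree \(\le \dd\). We also have \(\det(M) F_{ik} = \adj{M}_{ij_k} \bmod A\), and \(\deg(A) = \Delta = \dd + \da + 1\). The standard uniqueness property of rational reconstruction then applies. If \((f,g)\) is any nonzero solution with \(\deg g \le \dd\), \(\deg f \le \da\) and \(gF_{ik} = f \bmod A\), then \(g\,\adj{M}_{ij_k} - f\det(M) = 0 \bmod A\). The left-hand side has degree \(\le \dd + \da < \Delta\), so it vanishes, giving \(f/g = N_{ij_k}\).

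One point needs care: \(g\) must be nonzero. If \(g = 0\), then \(f = 0 \bmod A\) with \(\deg f < \Delta\) forces \(f = 0\), which excludes the trivial solution. This is where coprimality of \(A\) with \(\det(M)\) enters. It guarantees that \(A\) is coprime with the reduced denominator, so a solution with \(g\) invertible modulo \(A\) exists and is found by the extended Euclidean algorithm stopped at the right degree.

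It follows that \(f_{ik}/g_{ik} = N_{ij_k}\) exactly, though \(g_{ik}\) need not be the reduced denominator. To handle this, we would normalize each pair by \(\gcd(f_{ik},g_{ik})\) and make \(g_{ik}\) monic. Equivalently, we read \cref{algo:InverseOfColumns:ratrecon} as returning the reduced pair, which the half-gcd reconstruction does. Then \(g = \operatorname{lcm}(g_{ik})\) is exactly the monic least common denominator \(\mu\) of \(N_{*\indset}\), and \(f_{ik} g/g_{ik} = \mu N_{ij_k}\). Since each \(g_{ik}\) divides \(\det(M)\), we also get \(\deg g \le \dd\).

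For the cost, \cref{prop:StructuredModularSolve} gives \(\softO{\lceil m/\alpha\rceil \alpha^{\expmm-1} n\Delta + \alpha n d}\) for the solve step. The \(nm\) rational reconstructions cost \(\softO{\Delta}\) each via half-gcd, for \(\softO{nm\Delta}\) in total. This is within \(\softO{\lceil m/\alpha\rceil \alpha^{\expmm-1} n\Delta}\), since \(m \le \lceil m/\alpha\rceil \alpha \le \lceil m/\alpha\rceil\alpha^{\expmm-1}\).

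The main obstacle is the lcm and the final products. Computing the lcm of \(nm\) polynomials by a subproduct tree would cost \(\softO{nm\dd}\) only if intermediate degrees stayed bounded. They do, because every partial lcm divides \(\det(M)\) and so has degree \(\le \dd\); a balanced tree of lcms with gcd computations then costs \(\softO{nm\dd}\). Each product \(f_{ik}\cdot(g/g_{ik})\) involves an exact division of degree \(\le \dd\) followed by a multiplication of degree \(\le \dd + \da\), for \(\softO{nm\Delta}\) overall. All these terms are absorbed into the stated bound.
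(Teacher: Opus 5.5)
Your proposal is correct and follows essentially the same route as the paper. Both use the same degree argument: $g\,\adj{M}_{ij_k} - f\det(M) \equiv 0 \bmod A$ with both terms of degree $< \Delta$, which shows each reconstructed fraction equals $N_{ij_k}$ exactly. The lcm of the denominators is then identified with $\mu$, and the cost accounting is the same.

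Your explicit requirement that the reconstructed pairs be reduced is what the paper uses implicitly when it invokes $\gcd(g_{ik},f_{ik})=1$. Your $\softO{nm\dd}$ bound for the lcm step is also the right one for general $m$.
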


\begin{proof}
  By \cref{prop:StructuredModularSolve}, \cref{algo:InverseOfColumns:solve}
  uses \(\softO{\lceil m/\alpha \rceil \alpha^{\expmm-1} n \Delta + \alpha n
  d}\) operations, and if \cref{algo:InverseOfColumns:AF} is
  reached, \(A\) is coprime with \(\det(M) \neq 0\) and \(F = M^{-1} Y \bmod A\). By
  choice of \(Y\), this means \(F = N_{*\indset} \bmod A\).

  The rational reconstruction at \cref{algo:InverseOfColumns:ratrecon} costs
  \(\softO{\Delta}\) \cite{BrentGustavsonYun1980}, and has a solution since the
  sum of degree constraints is precisely \(\Delta-1 = \deg(A) - 1\)
  \cite[Sec\,5.7]{GathenGerhard1999}. For all reconstructions, this is
  a total of \(\softO{m n \Delta}\) operations in \(\field\),
  which is bounded by \(\softO{\lceil m/\alpha \rceil \alpha^{\expmm-1} n
  \Delta}\).

  Combining the
  equation \(g_{ik} N_{ij_k} = g_{ik} F_{ik} =
  f_{ik}\bmod A\) with
  \(\det(M) N_{ij_k} = \adj{M}_{ij_k}\), we
  get \(f_{ik} \det(M) = g_{ik} \adj{M}_{ij_k} \bmod A\). By construction,
  both sides have degree less than \(\Delta = \deg(A)\), hence \(f_{ik} \det(M)
  = g_{ik} \adj{M}_{ij_k}\). Since \(\det(M)M^{-1}\) is a polynomial matrix, \(\mu\) divides \(\det(M)\),
  and 
  multiplying by $\mu /
  \det(M)$ gives \(\mu f_{ik} = g_{ik} \mu N_{ij_k}\). The latter is a
  polynomial identity since \(\mu N_{ij_k} \in \xring\).
  In particular
  \(g_{ik}\) divides \(\mu\), since \(\gcd(g_{ik},f_{ik}) = 1\).
  This proves that the least common multiple \(g\) at
  \cref{algo:InverseOfColumns:lcm} divides \(\mu\); besides, its computation
  costs \(\softO{n \dd}\). Now we prove that \(\mu\) divides \(g\). Following
  \(\mu f_{ik} = g_{ik}\mu N_{ij_k}\) we have \([f_{ik}g/g_{ik}]_{i,k} = g
  N_{*\indset}\). The left-hand side has entries in \(\xring\), hence \(\mu\)
  divides \(g\), and \(g = \mu\) since both are monic. This also shows that the
  returned matrix \([f_{ik}g/g_{ik}]_{i,k}\) is \(\mu N_{*\indset}\). Computing this
  matrix from \(f_{ik}\)'s, \(g_{ik}\)'s, and \(g\) costs \(\softO{m n
  \Delta}\) operations, which is in \(\softO{\lceil m/\alpha \rceil
  \alpha^{\expmm-1} n \Delta}\).
\end{proof}

\subsection{Case of generic column HNF}
\label{sec:hnf:cols}

The above algorithm yields an efficient verification that \(M\) has the
generic column HNF property, when \(\deg(\det(M))\) is known:

\begin{lemma}
  \label{lem:verify_hypcol}%
  Let \(M\in \xmatspace{n}{n}\) be nonsingular and \(N = M^{-1}\). If
  \(\mu \in \xring\setminus\{0\}\) is the least common denominator of
  \(N_{*0}\), then \(M\) satisfies \(\hypcol\) if and only if
  \(\deg(\mu) = \deg(\det(M))\). If \(\mu \in \xring\setminus\{0\}\) is the
  least common denominator of \(N_{0*}\), then \(M\) satisfies \(\hyprow\) if
  and only if \(\deg(\mu) = \deg(\det(M))\).
\end{lemma}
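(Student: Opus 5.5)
The plan is to prove the statement for \(\hypcol\) and \(N_{*0}\) directly, then deduce the \(\hyprow\) statement by transposition. For the transposition step, \(N_{0*}\) is the transpose of the first column of \((\trsp{M})^{-1}\), and \(\hyprow\) for \(M\) is exactly \(\hypcol\) for \(\trsp{M}\): transpose \(UM\) in \cref{eqn:generic_row_hnf}. Since \(\det(\trsp{M}) = \det(M)\), nothing more is needed there. Throughout, write \(\mathcal{C} = \cspan{M} \subseteq \xmatspace{n}{1}\) and \(e_0,\ldots,e_{n-1}\) for the columns of \(\ident{n}\).

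The central observation is a characterization of \(\mu\). Let \(c \in \xring\). Then \(c\, e_0 \in \mathcal{C}\) if and only if \(c N_{*0}\) is a polynomial vector. Indeed, if \(c e_0 = M w\) with \(w\) polynomial, then \(w = c N_{*0}\); conversely \(M (c N_{*0}) = c e_0\). Hence the ideal \(\{c : c e_0 \in \mathcal{C}\}\) is generated by the least common denominator \(\mu\). In other words, \(\mu\) is the annihilator of the image \(\bar e_0\) of \(e_0\) in the quotient \(Q = \xmatspace{n}{1}/\mathcal{C}\). I will use the standard fact that \(\dim_{\field} Q = \deg(\det(M))\), for example via the Smith form. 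The cyclic submodule \(\xring\,\bar e_0 \cong \xring/\genBy{\mu}\) has dimension \(\deg(\mu)\), so \(\deg(\mu) \le \deg(\det(M))\) always. Equality holds if and only if \(Q\) is generated by \(\bar e_0\).

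(\(\Rightarrow\)) Assume \(MV = H\) as in \cref{eqn:generic_col_hnf}. Then \(N = V H^{-1}\). The first column of \(H^{-1}\) is \(e_0/\bar h_0\), because the lower-right block of \(H\) is the identity, so \(N_{*0} = V_{*0}/\bar h_0\). Since \(V\) is unimodular, the entries of \(V_{*0}\) have gcd \(1\). So no proper divisor of \(\bar h_0\) clears the denominators, which gives \(\mu = \bar h_0\) up to a constant. Finally, \(\deg(\bar h_0) = \deg(\det(H)) = \deg(\det(M))\).

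(\(\Leftarrow\)) If \(\deg(\mu) = \deg(\det(M))\), the observation above shows that \(Q = \xring\,\bar e_0\). So for each \(j \ge 1\) there is \(q_j \in \xring\) with \(e_j - q_j e_0 \in \mathcal{C}\); we may reduce \(q_j\) modulo \(\mu\) because \(\mu e_0 \in \mathcal{C}\). Consider the matrix \(H\) with columns \(\mu e_0\) and \(e_j - q_j e_0\) for \(j \ge 1\). It has the shape of \cref{eqn:generic_col_hnf}, with first row \((\mu, -q_1, \ldots, -q_{n-1})\) and the identity below. Its column span is contained in \(\mathcal{C}\), and \(\deg(\det(H)) = \deg(\mu) = \deg(\det(M))\). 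A submodule of \(\mathcal{C}\) whose quotient in \(\xmatspace{n}{1}\) has the same \(\field\)-dimension must equal \(\mathcal{C}\). Therefore \(\cspan{H} = \cspan{M}\), and since both matrices are nonsingular, \(H = MV\) for some unimodular \(V\). This is \(\hypcol\).

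The main obstacle is the converse direction: turning "\(Q\) is cyclic, generated by \(\bar e_0\)" into the explicit shape of \cref{eqn:generic_col_hnf}. This rests on the two dimension facts, \(\dim Q = \deg(\det(M))\) and equality of nested full-rank modules whose determinants have the same degree. I will state both explicitly, citing the Smith normal form.
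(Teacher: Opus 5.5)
Your proof is correct. Your forward direction is the same computation as the paper's: \(V_{*0} = \bar h_0 N_{*0}\) together with \(\gcd(V_{*0}) = 1\) forces \(\mu = \bar h_0\). Your converse, however, takes a different and heavier route.

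The paper does not run that computation under the hypothesis \(\hypcol\). It runs it on the column HNF \(MV\) of an arbitrary nonsingular \(M\). Since \(MV\) is upper triangular, its first column is \(h e_0\), so \(\mu = h\) holds unconditionally. Both directions then follow at once. The diagonal entries of the column HNF are monic with product \(\det(M)/\lc{\det(M)}\), so \(\deg(h) = \deg(\det(M))\) holds if and only if all other diagonal entries equal \(1\). The HNF degree constraints then force rows \(1,\ldots,n-1\) to be rows of the identity, which is exactly the shape in \cref{eqn:generic_col_hnf}.

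You instead read \(\mu\) as the annihilator of \(\bar e_0\) in \(Q = \xmatspace{n}{1}/\cspan{M}\). You then invoke \(\dim_\field Q = \deg(\det(M))\) via the Smith form to get that \(Q\) is cyclic, and write the column HNF down explicitly. This costs an extra structural fact. In return it gives a conceptual reading of the lemma: \(\deg(\mu) \le \deg(\det(M))\) always, with equality if and only if \(\bar e_0\) generates \(Q\). It also gives an explicit first row \((\mu, -q_1, \ldots, -q_{n-1})\), in the spirit of \cref{lem:exploit_hypcol,lem:hrow_relbas_onecol}.

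Your final step can also be shortened. From \(\cspan{H} \subseteq \cspan{M}\) you get \(H = MV\) with \(V\) polynomial, and \(\deg(\det(V)) = \deg(\det(H)) - \deg(\det(M)) = 0\) makes \(V\) unimodular directly, without a second dimension argument.
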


\begin{proof}
  Let \(V \in \xmatspace{n}{n}\) be the unimodular matrix such that \(MV\) is
  in column HNF. Then, the first diagonal entry $h \in
  \xring\setminus\{0\}$ of \(MV\) is such that $M V_{*0} = \trsp{[h
  \; 0 \; \cdots \; 0]}$. Left-multiplying by \(N\) gives $V_{*0} = h
  N_{*0}$,  hence \(h\) is a multiple of \(\mu\). From the identity
  $\frac{h}{\mu} V_{*0} = \mu N_{*0}$, it follows that \(\mu/h\) divides all
  entries of \(V_{*0}\); yet $\gcd(V_{*0}) = 1$ since \(\deg(\det(V)) = 0\).
  Thus \(\mu = h\).
  The stated equivalence follows from the fact that
  \(M\) satisfies \(\hypcol\) if and only if \(h = \det(M)/\lc{\det(M)}\).
  The statement with \(\hyprow\) is proved in the same way, by considering
  \(U_{0*} M = [h \; 0 \; \cdots \; 0]\) where \(U \in \xmatspace{n}{n}\) is
  the unimodular matrix such that \(UM\) is in (row) HNF.
\end{proof}

Having computed \((\mu, \mu N_{*0})\), and assuming we have been able to check
that \(M\) satisfies \(\hypcol\), a canonical basis of \(\module_\indset\) can
be obtained as a relation basis for the subvector \(\mu N_{\indset0}\),
modulo \(\mu\):

\begin{lemma}
  \label{lem:exploit_hypcol}%
  Let \(M\in \xmatspace{n}{n}\) satisfy the property \(\hypcol\). Let \(N =
  M^{-1}\) and \(\mu = \det(M)\). Then, one has \(\rspan{M} =
  \mathcal{R}(\mu, \mu N_{*0})\) and \(\module_\indset =
  \mathcal{R}(\mu, \mu N_{\indset0})\) for any \(m\)-index tuple \(\indset\).
\end{lemma}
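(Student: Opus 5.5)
Under \(\hypcol\) there is a unimodular \(V\) with \(MV = H\), where \(H\) is the column HNF displayed in \cref{eqn:generic_col_hnf}. Here \(\bar h_0\) is monic of degree \(\deg(\det(M))\), and after rescaling we may take \(\mu = \det(M)\) equal to \(\bar h_0\) up to a constant. Replacing \(\mu\) by a nonzero constant multiple changes neither relation module, so we may assume \(\mu = \bar h_0\). Since \(V\) is unimodular, \(\rspan{M}\) and \(\rspan{H}\) differ only by right multiplication by \(V\): \(\rspan{M} = \rspan{H}\,V^{-1}\). The plan is to first describe \(\rspan{H}\) explicitly, and then transport the description through \(V\) using \(\mu N_{*0} = V_{*0}\), which holds by the proof of \cref{lem:verify_hypcol}.

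\textbf{Step 1.} Show that \(\rspan{M} \subseteq \mathcal{R}(\mu, \mu N_{*0})\). For \(p = uM\) we have \(p\,\mu N_{*0} = \mu\, u M N_{*0} = \mu u_0\), which is \(0 \bmod \mu\).

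\textbf{Step 2.} Prove the reverse inclusion. Take \(p \in \xmatspace{1}{n}\) with \(p V_{*0} \equiv 0 \bmod \mu\), and set \(q = pV\), so that \(q_0 = pV_{*0} \in \mu\,\xring\). We must show \(q \in \rspan{H}\); then \(p = qV^{-1} \in \rspan{H}V^{-1} = \rspan{M}\). The row span of \(H\) is \(\{(c\,\bar h_0 + \sum_{j\ge1} r_j \bar h_j,\, r_1, \ldots, r_{n-1})\}\). Given \(q\), choose \(r_j = q_j\) for \(j \ge 1\). It then suffices that \(q_0 - \sum_{j \ge 1} q_j \bar h_j\) is divisible by \(\bar h_0\). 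Since \(q_0 \equiv 0\), this amounts to \(\sum_{j\ge1} q_j \bar h_j \equiv 0 \bmod \mu\), which does not obviously hold.

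So I would redo the argument more carefully. The condition should really be \(p\cdot \mu N_{*0} \equiv 0\), and \(\mu N_{*0}\) is the first column of \(\mu M^{-1} = V(\mu H^{-1})\). Computing \(H^{-1}\), its first column is \(\trsp{[1/\bar h_0\;0\;\cdots\;0]}\), so \(\mu N_{*0} = V_{*0}\) as claimed, and hence \(p\mu N_{*0} = q_0\). Moreover \(H\) is \emph{not} in Hermite form with respect to rows in the sense needed. The right description is \(\rspan{M} = \{p : pM^{-1} \in \xmatspace{1}{n}\}\), that is, \(qH^{-1}\) is polynomial. Here \(H^{-1}\) has first row \([1/\bar h_0,\, -\bar h_1/\bar h_0, \ldots]\) and identity elsewhere. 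So the condition is \(\bar h_0 \mid q_0\) together with \(\bar h_0 \mid q_0\bar h_j\) for each \(j\), and the latter follows from the former. Thus \(\rspan{M} = \{p : \mu \mid pV_{*0}\} = \mathcal{R}(\mu,\mu N_{*0})\). This cleaner route replaces the Step 1–2 computation.

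\textbf{Step 3.} For \(\module_\indset\), use \cref{eqn:module}: \(p \in \module_\indset\) if and only if \([p\;0]\perm \in \rspan{M} = \mathcal{R}(\mu,\mu N_{*0})\). By construction \(\perm\) sends the entries of \(p\) to the positions \(\indset\), so \([p\;0]\perm\,\mu N_{*0} = p\,\mu N_{\indset 0}\). This gives the second equality directly.

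The main obstacle is Step 2, correctly characterizing \(\rspan{M}\) via polynomiality of \(pM^{-1}\) and computing the first column of \(H^{-1}\). Everything else is bookkeeping.
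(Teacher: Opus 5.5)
Your final argument, the ``cleaner route'' that replaces your first attempt at Step 2, is correct and is essentially the paper's proof. The paper writes $UMV = \diag{\mu,1,\ldots,1}$ and reads off $p \in \rspan{M} \Leftrightarrow \mu \mid pV_{*0} = p\mu N_{*0}$, which is exactly what you get by checking when $pVH^{-1}$ is polynomial, and the $\perm$ step for $\module_\indset$ is identical. In a write-up, drop the abandoned first computation: the set you called the row span of $H$ there is actually its column span, which is why it did not close.
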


\begin{proof}
  By assumption there is a unimodular matrix \(V\) such that \(MV\) is upper
  triangular with diagonal \((\mu,1,\ldots,1)\), and thus \(UMV =
  \diag{\mu,1,\ldots,1}\) for some unimodular \(U\). Note that \(V_{*0} = \mu
  N_{*0} = \adj{M}_{*0}\). Then, for a row vector \(p \in \xmatspace{1}{n}\), one has
  \begin{align*}
    p \in \rspan{M} & \Leftrightarrow p V = u U M V \text{ for some } u \in \xmatspace{1}{n} \\
                    & \Leftrightarrow p \mu N_{*0} = u_0 \mu \text{ for some } u_0 \in \xring.
  \end{align*}
  This proves \(\rspan{M} = \mathcal{R}(\mu, \mu N_{*0})\).
  Then, by definition of \(\module_\indset\), for any
  \(p \in \xmatspace{1}{m}\) one has
  \[
    p \in \module_\indset
    \Leftrightarrow [p \;\; 0] \perm \in \rspan{M}
    \Leftrightarrow [p \;\; 0] \perm \mu N_{*0} = 0 \bmod \mu .
  \]
  The latter is equivalent to \(p \in \mathcal{R}(\mu, \mu N_{\indset0})\), since
  the definition of \(\perm\) along with the zeroes in \([p \;\; 0]\) imply
  that \([p \;\; 0] \perm N_{*0} = p N_{\indset0}\).
\end{proof}

\subsection{Using rows of the inverse}
\label{sec:hnf:rows}

When \(M\) does not satisfy \(\hypcol\), we make use of the whole submatrix
\(N_{\indset*}\) of \(M^{-1}\), instead of only \(N_{\indset0}\) in the
favorable case of \cref{sec:hnf:cols}. Our approach has two ingredients,
presented in the next statements. First, we express \(\module_\indset\) as a
module of relations involving \(\mu\) and \(\mu N_{\indset*}\). Second, we
describe several cases where the HNF basis of \(\module_\indset\) is equal to
the submatrix \(H_{\indset\indset}\) of the HNF of \(M\).

\begin{lemma}
  \label{lem:invrows_module}%
  Let \(M \in \xmatspace{n}{n}\) be nonsingular, let \(\indset = (j_k)_{0\le k
  < m}\) be an \(m\)-index tuple, and let \(\mu \in
  \xring\setminus\{0\}\) be such that \(\mu N_{\indset*}\)
  has entries in \(\xring\). Then, \(\module_\indset =
  \mathcal{R}(\mu,\mu N_{\indset*})\).
\end{lemma}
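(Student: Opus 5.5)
The proof reduces membership in \(\rspan{M}\) to polynomiality of \(pN\), then passes to congruences modulo \(\mu\).

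\textbf{Step 1: characterize \(\rspan{M}\).} For a row vector \(q \in \xmatspace{1}{n}\), we have \(q \in \rspan{M}\) if and only if \(qN \in \xmatspace{1}{n}\). Indeed, \(q = uM\) gives \(qN = u\). Conversely, if \(u = qN\) is polynomial, then \(q = uM\) because \(M\) is nonsingular.

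\textbf{Step 2: specialize to \(\module_\indset\).} Take \(q = [p \;\; 0]\perm\) with \(p \in \xmatspace{1}{m}\). From the definition of \(\perm\), the entry \(p_k\) sits at position \(j_k\) and the remaining entries of \(q\) are zero. Hence \(qN = \sum_k p_k N_{j_k *} = p N_{\indset*}\), and \(p \in \module_\indset\) if and only if \(p N_{\indset *}\) is a polynomial row vector.

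\textbf{Step 3: rewrite as a congruence.} Since \(\mu\) is a nonzero polynomial, \(pN_{\indset*} \in \xmatspace{1}{n}\) is equivalent to \(p(\mu N_{\indset*})\) being divisible entrywise by \(\mu\). Here we use that \(\mu N_{\indset *}\) has polynomial entries, so \(p\mu N_{\indset*}\) is a polynomial vector. Divisibility by \(\mu\) means exactly \(p\,(\mu N_{\indset*}) = 0 \bmod \mu\), that is, \(p \in \mathcal{R}(\mu, \mu N_{\indset*})\).

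Both inclusions follow from this chain of equivalences. The only point needing care is the identification \([p\;\;0]\perm N = pN_{\indset*}\), which is the same permutation bookkeeping already used at the end of the proof of \cref{lem:exploit_hypcol}. There is no real obstacle.
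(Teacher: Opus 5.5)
Your proposal is correct and follows essentially the same route as the paper: the paper also characterizes \([p \;\; 0]\perm \in \rspan{M}\) through \([p \;\; 0]\perm \mu N = u\mu\), uses the same identity \([p \;\; 0]\perm N = pN_{\indset*}\), and concludes with the congruence modulo \(\mu\). The only difference is presentational: you first pass through polynomiality of \(pN_{\indset*}\) and then multiply by \(\mu\), whereas the paper multiplies by \(\mu N\) directly.
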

\begin{proof}
  Let \(p \in \xmatspace{1}{m}\). By definition, \(p\) is in \(\module_{\indset}\) if
  and only if there exists \(u \in \xmatspace{1}{n}\) such that \([p \;\; 0] \perm = uM\),
  or equivalently, \([p \;\; 0] \perm \mu N = u\mu\). By construction of
  \(\perm\) and due to the zeroes in \([p \;\; 0]\), one has \([p \;\; 0] \perm
  \mu N = p \mu N_{\indset*}\). Therefore \(p\) is in \(\module_{\indset}\)
  if and only if \(p \mu N_{\indset*} = 0 \bmod \mu\), which means
  \(p \in \mathcal{R}(\mu,\mu N_{\indset*})\).
\end{proof}

\begin{lemma}
  \label{lem:hnf_submat}%
  Let \(M \in \xmatspace{n}{n}\) be nonsingular and \(H \in
  \xmatspace{n}{n}\) be its HNF. Let \(\indset = (j_k)_{0\le k < m}\) be an
  \(m\)-index tuple and let \(B \in \xmatspace{m}{m}\) be the HNF
  basis of \(\module_{\indset}\). Consider the assertions:
  \begin{enumerate}[({\ref*{lem:hnf_submat}.}1)]   % ,leftmargin=0.2cm,itemindent=0.7cm]
    \item\label{lem:hnf_submat:ones}
      \(H_{jj} = 1\) for all \(j \in \{0,1,\ldots,j_{m-1}\} \setminus \indset\);
    \item\label{lem:hnf_submat:leading}
      \(\indset = (0,\ldots,m-1)\);
    \item\label{lem:hnf_submat:fills_space}
      one has \(\deg(B_{00}) + \cdots + \deg(B_{\bar{m}\bar{m}}) =
      \deg(\det(M))\) for some \(\bar{m} \ge 0\) such that
      \(\{0,\ldots,\bar{m}\} \subseteq \indset\).
  \end{enumerate}
  If any of
  \ref{lem:hnf_submat:ones},
  \ref{lem:hnf_submat:leading},
  and \ref{lem:hnf_submat:fills_space}
  holds, then \(H_{\indset\indset} = B\).
\end{lemma}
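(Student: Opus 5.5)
The plan is to first show that the rows of \(H_{\indset\indset}\) always lie in \(\module_\indset\) under (1), and then compare \(H_{\indset\indset}\) with \(B\) through triangularity and degrees. Note that (2) is a special case of (1), since the set \(\{0,\ldots,j_{m-1}\}\setminus\indset\) is then empty. So it suffices to treat (1) and (3).

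For (1), write \(r = j_{m-1}\) and consider the leading principal \((r+1)\times(r+1)\) block of \(H\). Since \(H\) is lower triangular, its rows \(0,\ldots,r\) are supported on columns \(0,\ldots,r\). For a row index \(j\notin\indset\) with \(j\le r\), we have \(H_{jj}=1\), and by the HNF degree condition the off-diagonal entries of column \(j\) reduce to zero. In other words, column \(j\) of \(H\) is the unit vector \(e_j\) restricted to the rows of interest.

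Now take any row \(H_{j_k*}\) with \(j_k\in\indset\). Its entries at indices \(j\notin\indset\) with \(j<j_k\) are already zero, because they are off-diagonal entries of a column whose diagonal entry is \(1\). Its entries at indices \(j > j_k\) are zero by triangularity. Hence \(H_{j_k*}=[H_{j_k\indset}\;\;0]\perm\) lies in \(\rspan{M}\), so the rows of \(H_{\indset\indset}\) belong to \(\module_\indset\).

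Conversely, let \(p\in\module_\indset\), so that \([p\;\;0]\perm = uH\) for some \(u\). Arguing by descending index, \(u\) must vanish on rows \(j>r\): the last nonzero coordinate of \(uH\) would otherwise appear at a column outside \(\indset\) or beyond \(r\). Likewise, \(u\) vanishes on rows \(j\le r\) with \(j\notin\indset\), since column \(j\) of \(H\) restricted to rows \(\le r\) is \(e_j\), so \(u_j\) equals entry \(j\) of \(uH\), which is \(0\). Therefore \(p = u_\indset H_{\indset\indset}\). This shows that \(H_{\indset\indset}\) is a basis of \(\module_\indset\). It is lower triangular with the HNF degree conditions inherited from \(H\), so it is in HNF, and uniqueness of the HNF basis gives \(B=H_{\indset\indset}\).

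For (3), the idea is to use degree counting. Let \(\bar m\) be as in the assumption, so that \(j_k=k\) for \(k\le\bar m\). Then \(\module_{(0,\ldots,\bar m)}\) is the row span of \(H_{[0..\bar m],[0..\bar m]}\); this is the (2) case already proved. Its intersection-like relation with \(\module_\indset\) is the following: vectors of \(\module_\indset\) that vanish beyond position \(\bar m\) form exactly \(\module_{(0..\bar m)}\). Taking the HNF of \(\module_\indset\), its leading \((\bar m+1)\times(\bar m+1)\) block spans this submodule, so \(B_{kk}=H_{kk}\) for \(k\le\bar m\).

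The hypothesis then gives \(\sum_{k\le\bar m}\deg H_{kk}=\deg\det M=\sum_k \deg H_{kk}\). Hence \(H_{jj}=1\) for all \(j>\bar m\), in particular for all \(j\le j_{m-1}\) outside \(\indset\). This is exactly (1), and we conclude by the previous case.

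The main obstacle is the careful bookkeeping in (1), namely showing that \(u\) is forced to vanish on rows outside \(\indset\). I would handle this by induction from the largest index downward, using that the diagonal entries in those positions are \(1\) and that the triangular structure makes each such coefficient visible in the corresponding entry of \(uH\).
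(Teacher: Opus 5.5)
Your proof is correct and follows the paper's argument. You obtain (2) as the special case of (1) with no exceptional indices. You prove both inclusions for (1) via triangularity and the fact that column \(j\) of \(H\) is the unit vector whenever \(H_{jj}=1\). You reduce (3) to (1) by showing \(H_{jj}=1\) for all \(j>\bar m\).

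The only variation is in (3), with \(I=\{0,\ldots,\bar m\}\). You get \(B_{II}=H_{II}\) from equality of row spans and uniqueness of the HNF. The paper proves only \(\module_I \subseteq \rspan{B_{II}}\) and concludes via \(\det(B_{II}) \mid \det(H_{II}) \mid \det(M)\) together with the degree hypothesis. Both routes work; yours additionally needs the reverse inclusion, which is easy by triangularity of \(B\) but which you state without detail.
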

\begin{proof}
  Observe that \(H_{\indset\indset}\) is in HNF. Thus, for all three
  cases, all that needs to be proved is that the row span of
  \(H_{\indset\indset}\) is \(\module_{\indset}\).

  \emph{Case \ref{lem:hnf_submat:ones}.} Let \(p \in \module_{\indset}\). Then
  \([p \;\; 0] \perm\) is in \(\rspan{M}\), hence \([p \;\; 0] \perm = u H\) for some \(u \in
  \xmatspace{1}{n}\). It is enough to prove \(u_j = 0\) for all \(j \not \in
  \indset\): indeed, this gives \([p \;\; 0] \perm = u_\indset H_{\indset*}\),
  and restricting to the columns in \(\indset\) yields the sought conclusion
  \(p = u_\indset H_{\indset\indset}\). First, since the last nonzero entry of
  \([p \;\; 0] \perm\) is at index \(j_{m-1}\) and \(H\) is lower triangular
  with nonzero diagonal entries, we deduce \(u_j = 0\) for all \(j > j_{m-1}\).
  Second, for \(j \in \{0,1,\ldots,j_{m-1}\} \setminus \indset\), since by
  assumption the \(j\)th column of \(H\) is the \(j\)th column of the identity
  matrix, the fact that the \(j\)th entry of \([p \;\; 0] \perm\) is zero
  implies \(u_j=0\).

  The above paragraph proves \(\rspan{H_{\indset\indset}} \supseteq
  \module_{\indset}\) and we now prove the other inclusion, by taking \(p =
  H_{j_k\indset}\) for \(0 \le k < m\) and showing \(p \in
  \module_{\indset}\). For this, it is enough to observe that \(H_{j_kj} = 0\)
  for all \(j \not \in \indset\): indeed, this gives \([p \;\; 0] \perm =
  H_{j_k*}\), hence \([p \;\; 0] \perm\) is in the row span of \(M\).
  This observation \(H_{j_kj} = 0\) follows from the triangularity of \(H\) if
  \(j > j_k\), and from the fact that the \(j\)th column of \(H\) is the
  \(j\)th column of the identity matrix otherwise.

  \emph{Case \ref{lem:hnf_submat:leading}} is a direct consequence since
  \(\{0,1,\ldots,j_{m-1}\} \setminus \indset = \emptyset\).

  \emph{Case \ref{lem:hnf_submat:fills_space}.} Let $I = \{0,\ldots,\bar{m}\}$.
  Then $H_{II}$ is the HNF basis of $\module_I$ thanks to the case
  \ref{lem:hnf_submat:leading}. Let $p \in \module_I$ and $\bar{p} = [p \;\; 0]
  \in \xmatspace{1}{m}$. Then $\bar{p} \in \module_{\indset}$, and thus
  $\bar{p} = \bar{v}B$ for some \(\bar{v} \in \xmatspace{1}{m}\). Since $B$ is
  nonsingular and lower triangular, $\bar{v}_i=0$ for all $i > \bar{m}$, hence
  $p = \bar{v}_I B_{II}$. We have proved $\module_I \subseteq
  \rspan{B_{II}}$, which implies that \(\det(B_{II})\) divides
  $\det(H_{II})$.
  Besides, $\det(H_{II})$ divides $\det(M)$ since $M$ is lower
  triangular. The equality of degrees in \ref{lem:hnf_submat:fills_space} then
  ensures $\det(B_{II}) = \det(H_{II}) = \det(M)/\lc{\det(M)}$. Thus $H_{jj}
  = 1$ holds for all $j \in\{0,\ldots,n-1\}$ that are not in $I$, and in
  particular for those not in $\indset$: using the case \ref{lem:hnf_submat:ones}
  concludes the proof.
\end{proof}

The assertion \ref{lem:hnf_submat:ones} is not easy to test without a priori
knowledge on \(H\). On the other hand, as soon as \(\deg(\det(M))\) is known,
one may easily build an algorithm upon the assertion
\ref{lem:hnf_submat:fills_space}.

\subsection{Main algorithm and proof of Theorem~\ref{thm:hnf}}
\label{sec:hnf:algo}

\begin{algorithm}[ht]
	\algoCaptionLabel{HermiteFormSubmatrix}{\genl,\genr,\indset,\dd,\da}
  \begin{algorithmic}[1]

    \Require{%
        \algoitem displacement generators $\genl,\genr \in\xmatspace{n}{\alpha}$ that represent
        $M\in\xmatspace{n}{n}$ through \(Z_0 M - M \trsp{Z_1} = \genl \trsp{\genr}\) (see \cref{sec:structured_system:prelim}); \\
        \algoitem \(\indset = (j_0,\ldots,j_{m-1})\) with \(0 = j_0 < j_1 < \cdots < j_{m-1} < n\); \\
        \algoitem degree bounds \(\dd \ge \deg(\det(M))\) and \(\da \ge \deg(\adj{M})\).%
    }

    \Ensure{Any one of \FlagFail; \FlagSing; and \((B,\FlagCert)\) where
      \(B \in \xmatspace{m}{m}\) is the HNF basis of \(\module_{\indset}\),
      and the flag \(\FlagCert\) is in \(\{\FlagTrue, \FlagUnknown\}\) with
    \(\FlagCert=\FlagTrue \Rightarrow B = H_{\indset\indset}\) where \(H\) is
  the HNF of \(M\).}

    \LComment{use column \(N_{*0}\) to detect case \(\hypcol\) (generic column HNF)}

    \State \(\Output \gets \Call{algo:InverseCols}{\genl,\genr,(0),\dd,\da}\)
    \label{algo:HermiteFormSubmatrix:inv_cols}
    \State\InlineIf{\(\Output = \FlagFail\) \OR \(\Output = \FlagSing\)}{\Return \Output}
    \label{algo:HermiteFormSubmatrix:cols_failure}

    \State \((\mu, c) \in \xring \times \xmatspace{n}{1} \gets \Output\)
    \label{algo:HermiteFormSubmatrix:mu_c}

    \If{\(\deg(\mu) = \dd\) \AND \(\gcd(\mu,c_0) = 1\)} \Comment{\(\hypcol\) and \(\hyprow\)}
    \label{algo:HermiteFormSubmatrix:if_hypcol_hyprow}
      \State \(B \in \xmatspace{m}{m} \gets \diag{\mu,1,\ldots,1}\)
      \State\InlineFor{\(1 \le i < m\)}{\(B_{i0} \gets - c_{j_i} / c_{0} \bmod \mu\)}
        \label{algo:HermiteFormSubmatrix:hrow_hcol_relbas}
      \State \Return \(B\), \FlagTrue
        \label{algo:HermiteFormSubmatrix:return_hrow_hcol}
    \EndIf
    \If{\(\deg(\mu) = \dd\)} \Comment{\(\hypcol\)}
    \label{algo:HermiteFormSubmatrix:if_hypcol}
      \State \(B \in \xmatspace{m}{m} \gets \textproc{HNFRelBas}(\mu, c_\indset)\)
        \label{algo:HermiteFormSubmatrix:hcol_relbas}
      \State\InlineIf{\(\indset = \{0,\ldots,m-1\}\)}{\Return \(B\), \FlagTrue}
          \label{algo:HermiteFormSubmatrix:hcol:test_leading}

      \State\InlineIf{\(\deg(B_{00}) + \cdots + \deg(B_{\bar{m}\bar{m}}) = \dd\) for some
            \(\bar{m} \ge 0\) such that \(\{0,\ldots,\bar{m}\} \subseteq \indset\)}{%
          \Return \(B\), \FlagTrue}
          \label{algo:HermiteFormSubmatrix:hcol:test_fills_space}
      \State \Return \(B\), \FlagUnknown
    \EndIf

    \LComment{\(\hypcol\) not satisfied: use rows \(N_{\indset*}\)}

    \State \(\Output \gets
    \Call{algo:InverseRows}{\genl,\genr,\indset,\dd,\da}\) \Comment{defined in
    \cref{app:left_rows_versions}}
    \label{algo:HermiteFormSubmatrix:inv_rows}

    \State\InlineIf{\(\Output = \FlagFail\) \OR \(\Output = \FlagSing\)}{\Return \Output}
    \label{algo:HermiteFormSubmatrix:rows_failure}

    \State \((\mu, R) \in \xring \times \xmatspace{m}{n} \gets \Output\)
    \label{algo:HermiteFormSubmatrix:mu_r}

    \If{\(\deg(\mu) = \dd\) and \(\gcd(\mu,R_{00},\ldots,R_{0,n-1}) = 1\)}\Comment{\(\hyprow\)}
      \label{algo:HermiteFormSubmatrix:gcd}%
      \State \(B \in \xmatspace{m}{m} \gets \diag{\mu,1,\ldots,1}\)
      \For{\(1 \le i < m\)}
        \State \([\begin{smallmatrix} \mu & 0 \\ b & 1\end{smallmatrix}]
            \gets \textproc{HNFRelBas}\left(\mu, [\begin{smallmatrix} R_{0*} \\ R_{i*} \end{smallmatrix}]\right)\);
            \;\;
            \(B_{i0} \gets b\)
        \label{algo:HermiteFormSubmatrix:hrow_relbas}
      \EndFor
      \State \Return \(B\), \FlagTrue
        \label{algo:HermiteFormSubmatrix:return_hrow_rel_bas}
    \EndIf

    \State \(B \in \xmatspace{m}{m} \gets \textproc{HNFRelBas}(\mu, R)\)
        \label{algo:HermiteFormSubmatrix:general_relbas}

    \State\InlineIf{\(\indset = \{0,\ldots,m-1\}\)}{\Return \(B\), \FlagTrue}
        \label{algo:HermiteFormSubmatrix:test_leading}

    \State\InlineIf{\(\deg(B_{00}) + \cdots + \deg(B_{\bar{m}\bar{m}}) = \dd\) for some
          \(\bar{m} \ge 0\) such that \(\{0,\ldots,\bar{m}\} \subseteq \indset\)}{%
        \Return \(B\), \FlagTrue}
        \label{algo:HermiteFormSubmatrix:test_fills_space}

    \State \Return \(B\), \FlagUnknown
  \end{algorithmic}
\end{algorithm}

Based on \cref{algo:HermiteFormSubmatrix}, we now prove \cref{thm:hnf}.

\myparagraph{Probability.} \Cref{algo:HermiteFormSubmatrix} returns
\(\FlagFail\) or \(\FlagSing\) if \cref{algo:HermiteFormSubmatrix:inv_cols} or
\cref{algo:HermiteFormSubmatrix:inv_rows} returns \(\FlagFail\) or \(\FlagSing\).
The probability that they do so is given by
\cref{prop:StructuredModularSolve}. If \(M\) is nonsingular, for \(\rho >
1\) and a choice of \(\sampleset\) such as in
\cref{prop:StructuredModularSolve}, this probability is bounded by
\(\frac{2}{\rho}\). Hence the probability of \cref{algo:HermiteFormSubmatrix}
returning \(\FlagFail\) or \(\FlagSing\) is upper bounded by \(\frac{4}{\rho}\).
Taking \(\rho = 8\) and using that \(\Delta = \dd + \da + 1 > \dd\) yields the statement in
\cref{thm:hnf}.

\myparagraph{Correctness.} Assume that neither
\cref{algo:HermiteFormSubmatrix:inv_cols} nor
\cref{algo:HermiteFormSubmatrix:inv_rows} returns \(\FlagFail\) or
\(\FlagSing\).
% Case H_col and H_row: show hypcol
Then, at \cref{algo:HermiteFormSubmatrix:mu_c},
\cref{prop:columns_of_inverse} gives that \(\mu\) is the least common
denominator of \(N_{*0}\), and \(c = \mu N_{*0}\).
Assume \cref{algo:HermiteFormSubmatrix:if_hypcol_hyprow} is evaluated to true,
\cref{lem:verify_hypcol} guarantees that \(\hypcol\) is true and \(\mu =
\det(M)\).
% Case H_col and H_row: show hyprow
Moreover, \cref{lem:exploit_hypcol} gives that \(\rspan{M} =
\mathcal{R}(\det(M),c)\).  Then for any row \(p \in \rspan{M}\) there exists \(u
\in \xvecspace{n}\) such that \(pc = \det(M) u\). If \(p = [p_0 \; 0]\) with
\(\lambda \in \xring\), one has \(p_0 c_0 = \det(M) u_0\). Therefore, as
\(\gcd(\det(M),c_0) = 1\), then \(\det(M) \vert p_0\), thus \(\hyprow\) is true
by \cref{lem:verify_hypcol}.
% Correctness in case H_col H_row
Following \cref{lem:hrow_relbas_onecol}, \(B\) at
\cref{algo:HermiteFormSubmatrix:return_hrow_hcol} is a basis of
\(\mathcal{R}(\det(M),c_{\indset})\), that is \(\module_{\indset}\) by
\cref{lem:exploit_hypcol}. Moreover, \(\deg(B_{00}) = \deg(\det(M))\) and
\cref{lem:hnf_submat:fills_space} gives that \(B = H_{\indset\indset}\).

% Case H_col only
Otherwise, if \cref{algo:HermiteFormSubmatrix:if_hypcol} is evaluated to true,
\cref{lem:verify_hypcol} still guarantees that \(\hypcol\) is true.
% Correctness in case H_col only
\Cref{lem:exploit_hypcol} shows that \(\module_{\indset} =
\mathcal{R}(\mu,c_\indset)\). Furthermore, if the test
\cref{algo:HermiteFormSubmatrix:hcol:test_leading} (resp.
\cref{algo:HermiteFormSubmatrix:hcol:test_fills_space}) is true, then by
\cref{lem:hnf_submat:leading} (resp. \cref{lem:hnf_submat:fills_space}), \(B\)
is indeed \(H_{\indset\indset}\).

%% Case not H_col
Now, if \cref{algo:HermiteFormSubmatrix}, reaches
\cref{algo:HermiteFormSubmatrix:inv_rows}, still assuming it does not return
\FlagFail{} nor \FlagSing, \cref{algo:HermiteFormSubmatrix:mu_r} is reached and
the specification of \algoName{algo:InverseRows} gives that \(\mu\) is the least
common denominator of \(N_{\indset*}\), and \(R = \mu N_{\indset*}\).
% Case H_row
If \cref{algo:HermiteFormSubmatrix:gcd} is evaluated to true, then following
\cref{lem:verify_hypcol}, \(\hyprow\) is true and \(\mu = \det(M)\).
% Correctness of case H_row
By \cref{lem:invrows_module}, \(\module_{\indset} = \mathcal{R}(\det(M),R)\).
Let \(p = [p_0 \; 0] \in \mathcal{R}(\det(M),R)\) with \(p_0 \in \xring\). Then
\(p_0 R_{0*} = 0 \bmod \det(M)\), and since \(\gcd(\det(M),R_{0*}) = 1\),
\(\det(M)\) divides \(p_0\). Hence, the HNF basis of \(\module_{\indset}\)
satisfies \cref{lem:hnf_submat:fills_space} and is equal to \(H_{\indset
\indset}\). Now since \(H\) satisfies \(\hyprow\), so does the HNF basis of
\(\mathcal{R}(\det(M),R)\) and \cref{lem:hrow_relbas_manycol} gives that \(B\)
at \cref{algo:HermiteFormSubmatrix:return_hrow_rel_bas} is \(H_{\indset \indset}\).
% Case none of H_col H_row
Finally, if \cref{algo:HermiteFormSubmatrix} reaches
\cref{algo:HermiteFormSubmatrix:general_relbas} then by
\cref{lem:invrows_module}, \(B\) at
\cref{algo:HermiteFormSubmatrix:general_relbas} is a basis of
\(\module_{\indset}\). Moreover, if the test
\cref{algo:HermiteFormSubmatrix:test_leading} (resp.
\cref{algo:HermiteFormSubmatrix:test_fills_space}) is true, then by
\cref{lem:hnf_submat:leading} (resp. \cref{lem:hnf_submat:fills_space}), \(B\)
is indeed \(H_{\indset\indset}\).

%%% Complexity bound
\myparagraph{Complexity bound.} We use \(\Delta = \dd + \da + 1\).
\Cref{algo:HermiteFormSubmatrix:inv_cols} costs \(\softO{\alpha^{\expmm-1} n
\Delta + \alpha n d}\) by \cref{prop:StructuredModularSolve}. If
\cref{algo:HermiteFormSubmatrix:if_hypcol_hyprow} is true then both \(\hypcol\)
and \(\hyprow\) are true. In that case the cost of
\cref{algo:HermiteFormSubmatrix:hrow_hcol_relbas} is \(\softO{m \dd}\) (\(m\)
products of polynmoials of degree at most \(\dd\) and one modular inversion as in
\cite[Th\,11.10]{GathenGerhard1999}) which is bounded by the previous term.
Thus, if both \(\hypcol\) and \(\hyprow\) are true,
\cref{algo:HermiteFormSubmatrix} runs in \(\softO{\alpha^{\expmm-1} n
\Delta + \alpha n d}\).

If \cref{algo:HermiteFormSubmatrix} reaches
\cref{algo:HermiteFormSubmatrix:if_hypcol} and evaluates it to true then
\cref{algo:HermiteFormSubmatrix:hcol_relbas} costs \(\softO{m^{\expmm-1} \dd}\)
by \cref{prop:HNFRelBas}. The total cost of \cref{algo:HermiteFormSubmatrix}
is then \(\softO{\alpha^{\expmm-1} n \Delta + \alpha n d + m^{\expmm-1} \dd}\).

Now if \cref{algo:HermiteFormSubmatrix:inv_rows} is reached, it costs
\(\softO{\lceil m/\alpha \rceil\alpha^{\expmm-1} n \Delta + \alpha n d}\) by
\cref{prop:StructuredModularSolve}. The test at
\cref{algo:HermiteFormSubmatrix:gcd} costs \(\softO{n\dd}\) (\(n\) gcd's as in
\cite[Cor\,11.9]{GathenGerhard1999}) which is bounded by the cost of
\cref{algo:HermiteFormSubmatrix:inv_cols}.  If these tests are evaluated to
true, then \cref{algo:HermiteFormSubmatrix:hrow_relbas} is called \(m\)
times. Following \cref{prop:HNFRelBas}, this costs \(\softO{m n \dd}\). This gives a total
cost of \(\softO{\lceil m/\alpha \rceil\alpha^{\expmm-1} n \Delta + \alpha n
d}\) if only \(\hyprow\) is tested to true.

Finally, if none of the tests are true,
\cref{algo:HermiteFormSubmatrix:general_relbas} computes a relation basis in
\(\softO{m^{\expmm-1} n \dd}\), according to \cref{prop:HNFRelBas}.
In that case, the total cost of \cref{algo:HermiteFormSubmatrix} is
\(\softO{\lceil m/\alpha \rceil\alpha^{\expmm-1} n \Delta + m^{\expmm-1} n \dd +
\alpha n d}\).

\section{Change of monomial order for bivariate ideals}
\label{sec:change_order}

In this section, we apply the above results to efficiently compute
the lexicographic basis of a zero-dimensional ideal in \(\xyring\), given by a
\(\ledrl\)-Gr\"obner basis. In particular, we prove \cref{thm:change_order}.
The construction of \(M\) in \cref{sec:change_order:buildM} is
illustrated in \cref{app:biv_matrix_example}.

\subsection{Preliminaries: bivariate polynomials}
\label{sec:change_order:prelim}

For basic notions and definitions, we refer to \cite{CoxLittleOShea2010}. We
denote by \(\ledrl\) the degree reverse lexicographic order and by \(\lelex\)
the lexicographic order,  with $x \ltdrl y$ and $x\ltlex y$. For \(f \in
\xyring\), its \(\ledrl\)-leading monomial, \(x\)-degree, \(y\)-degree, and
total degree are written as \(\lm{f}\), \(\xdeg{p}\), \(\ydeg{p}\), and
\(\tdeg{p}\). For a given \(n\in\ZZp\), we define the \(\xring\)-module
\(\xyring_{<(\cdot,n)} = \{f \in \xyring \mid \ydeg{f} < n\}\).

An ideal \(\ideal \subseteq \xyring\) is said to be zero-dimensional if the
quotient \(\xyring/\ideal\) has finite dimension as a \(\field\)-vector space.
In this case, \(\dd = \dim_\field(\xyring/\ideal)\) is called the degree of
\(\ideal\). If \((g_0,\ldots,g_{\ell-1})\) is a \(\ledrl\)-Gr\"obner basis of
\(\ideal\) \cite[Ch.\,2]{CoxLittleOShea2010}, then the set of monomials that
are not divisible by any of the \(\lm{g_i}\)'s forms a vector space basis of
\(\xyring/\ideal\); thus there are \(\dd\) such monomials \cite[Ch.\,5,
\S3]{CoxLittleOShea2010}. The ideal \(\ideal\) is said to be in  shape position
if its \(\lelex\)-Gr\"obner basis has the form \((f_0(x), y - f_1(x))\) for
some \(f_0 \in \xring_{\dd}\) and \(f_1 \in \xring_{<\dd}\).

Having a matrix \(M \in \xmatspace{n}{n}\) in shifted reduced form, or in one
of the stronger above forms, yields useful information on the degree of its
determinant and of its adjugate \(\adj{M} \in \xmatspace{n}{n}\). We write
\(\sumTuple{s} = s_0 + \cdots + s_{n-1}\) for the sum of a tuple of integers.

\begin{lemma}
  \label{lem:degdet_degadj}%
  Let $s \in\ZZ^n$, let $M\in\xmatspace{n}{n}$ be in \(s\)-reduced form, and
  let $\dd = \sumTuple{\rdeg[s]{M}} - \sumTuple{s}$. The determinant of
  \(M\) has degree \(\dd\) and its leading coefficient is \(\lc{\det(M)} =
  \det(\lmat[s]{M}) \in \field\setminus\{0\}\). Furthermore, $\deg(\adj{M})
  \leq \dd + \max(s) - \min(s)$.
  %% NOTE regarding the adjugate, we also have
  % \(\cdeg[-s]{\adj{M}} = (\dd-t_0, \ldots, \dd-t_n)\)
  % \(\rdeg[t]{\adj{M}} = (\dd+s_0,\ldots,\dd+s_{n-1})\)
  % where \(t = \rdeg[s]{M} \in \ZZ^n\).
  % One is proved below, the other is a consequence after also considering
  % the fact that \(\adj{M}\) is \(-s\)-column reduced;
  % see for example \cite[Lem.\,3.1]{RosenkildeStorjohann2021}. But do
  % we need these stronger properties?
\end{lemma}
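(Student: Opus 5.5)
We reduce to the unshifted case by conjugating with the diagonal matrix $x^s = \diag{x^{s_0},\ldots,x^{s_{n-1}}}$ and, for the adjugate, use a cofactor-degree bound. Write $t = \rdeg[s]{M}$. The matrix $M x^{s}$ has row degrees exactly $t$ (as a Laurent polynomial matrix), and its row-wise leading coefficient matrix is $\lmat[s]{M}$.

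\emph{Determinant.} Expand
\[
  x^{-t} M x^{s} = \lmat[s]{M} + E(x^{-1}),
\]
where $E$ has entries in $x^{-1}\field[x^{-1}]$. This holds because entry $(i,j)$ equals $M_{ij}x^{s_j - t_i}$, and its degree-$0$ coefficient is precisely the leading-matrix coefficient. Taking determinants gives
\[
  x^{-\sumTuple{t}+\sumTuple{s}}\det(M) = \det(\lmat[s]{M}) + O(x^{-1}).
\]
Since $\lmat[s]{M}$ is invertible, $\det(M)$ has degree $\sumTuple{t}-\sumTuple{s}=\dd$ and leading coefficient $\det(\lmat[s]{M})$.

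\emph{Adjugate.} Fix $(i,j)$. Then $\adj{M}_{ij} = \pm \det(M_{\hat{j}\hat{i}})$, the minor obtained by deleting row $j$ and column $i$. Each term of the Leibniz expansion of this minor is a product $\prod_{r\neq j} M_{r\sigma(r)}$, where $\sigma$ is a bijection from the rows other than $j$ onto the columns other than $i$. Using $\deg(M_{r\sigma(r)}) \le t_r - s_{\sigma(r)}$, the degree of such a term is at most
\[
  \sum_{r\neq j} t_r - \sum_{c\neq i} s_c = \sumTuple{t}-\sumTuple{s} - t_j + s_i = \dd - (t_j - s_i).
\]
It remains to bound $t_j$ from below. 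By definition $t_j \ge \deg(M_{jk}) + s_k$ for all $k$, and $t_j$ is finite because $\lmat[s]{M}$ has no zero row. Moreover some entry $M_{jk}$ attains $t_j$, so $t_j = \deg(M_{jk}) + s_k \ge \min(s)$. Hence
\[
  \deg(\adj{M}_{ij}) \le \dd - \min(s) + s_i \le \dd + \max(s) - \min(s).
\]

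The only delicate point is the first expansion. One must check that the entries of $x^{-t} M x^{s}$ really lie in $\field[x^{-1}]$ with constant term given by $\lmat[s]{M}$, which follows directly from the definition of $t_i$ as a maximum. The determinant argument is then standard; see \cite{Kailath1980,BeckermannLabahnVillard1999} for the classical reduced-form determinant property. The adjugate bound is a routine term-by-term estimate.
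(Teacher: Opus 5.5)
Your proof is correct. For the determinant, you give the standard scaling argument $x^{-t} M x^{s} = \lmat[s]{M} + O(x^{-1})$, which is the classical proof that the paper simply cites from Kailath.

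For the adjugate you take a different route. The paper applies the predictable degree property to $\adj{M} M = \det(M)\ident{n}$ and obtains the exact shifted row degree $\rdeg[t]{\adj{M}} = (\dd+s_0,\ldots,\dd+s_{n-1})$. Your Leibniz expansion of the cofactors instead gives $\deg(\adj{M}_{ij}) \le \dd + s_i - t_j$ directly, which is the upper-bound half of that identity. Both proofs then finish with the same observation: $t_j \ge \min(s)$ because $M$ has no zero row.

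Your version is self-contained and more elementary, since it avoids the predictable degree property. Moreover, your adjugate inequality does not use reducedness at all: it needs only the definition $\dd = \sumTuple{t} - \sumTuple{s}$ and the absence of zero rows. Reducedness is needed only to identify $\dd$ with $\deg(\det(M))$. The paper's route, at the cost of invoking the predictable degree property, yields the exact value of $\rdeg[t]{\adj{M}}$, which is more than the lemma requires.
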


%\begin{lemma}[{\cite[Thm.\,6.3-13]{Kailath1980}}]
%  \label{lem:pdp}%
%  Let $M\in\xmatspace{m}{n}$ with $m \leq n$ be an $s$-reduced matrix for $s
%  \in\ZZp^n$. Then for every $v\in\xvecspace{m}$,
%  \[
%    \rdeg[s]{vM} = \max_{0\leq i < m} \left\{\deg(v_i) + \rdeg[s]{M_{i*}} \right\} = \rdeg[d]{v},
%  \]
%  where $d = \rdeg[s]{M}$.
%\end{lemma}

\begin{proof}
  The properties of \(\det(M)\) are proved in \cite[p.\,384]{Kailath1980}.
  Combining the identity \(\adj{M} M = \det(M) \ident{n}\) with the predictable
  degree property,
  see \cite[Thm.\,6.3-13]{Kailath1980}
  and \cite[Lem.\,3.6]{BeckermannLabahnVillard1999},
  yields \(\rdeg[t]{\adj{M}} = (\dd+s_0,\ldots,\dd+s_{n-1})\), where \(t =
  \rdeg[s]{M}\). Now, since the shift \(\bar{t} = (t_0 - \min(t), \ldots,
  t_{n-1} - \min(t))\) has nonnegative entries, we have \(\deg(\adj{M}) \le
  \max(\rdeg[\bar{t}]{\adj{M}})\). Since \(M\) has no zero row, we deduce
  \(\min(t) \ge \min(s)\), and therefore
  \begin{align*}
    & \max(\rdeg[\bar{t}]{\adj{M}}) = \max(\rdeg[t]{\adj{M}}) - \min(t) \\
    & \;\; \le \max(\rdeg[t]{\adj{M}}) - \min(s) = \dd + \max(s) - \min(s).
    \qedhere
  \end{align*}
\end{proof}

\subsection{Building the matrix \texorpdfstring{\(M\)}{M} from a Gr\"obner basis}
\label{sec:change_order:buildM}

Let $\ideal \subseteq \xyring$ be a zero-dimensional ideal of degree \(\dd\)
and $\mathcal{G} = (g_0,\ldots,g_{\ell-1})$ be a minimal \(\ledrl\)-Gröbner
basis of \(\ideal\) (hence with \(\ell \ge 2\)), sorted by increasing
$y$-degree of the leading monomial. Our goal is to compute $\Glex$, the
reduced \(\lelex\)-Gröbner basis of $\ideal$.

\begin{lemma}
  \label{lem:appli:gbprops}
  One has \(\ydeg{\lm{g_i}} < \ydeg{\lm{g_{i+1}}}\) for all \(i\) in \(\{0,\ldots,\ell-2\}\);
  $\lm{g_0}$ is a power of $x$; $\lm{g_{\ell-1}}$ is a power of $y$;
  and $\ydeg{g_{\ell-1}} = \ydeg{\lm{g_{\ell-1}}}$.
\end{lemma}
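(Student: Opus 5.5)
We argue from the definition of a minimal Gröbner basis of a zero-dimensional ideal in two variables, writing \(\lm{g_i} = x^{a_i} y^{b_i}\). By minimality, no leading monomial divides another. We first show that the \(b_i\) are pairwise distinct. If \(b_i = b_j\) with \(i \neq j\), then either \(a_i \le a_j\) or \(a_j \le a_i\), so one leading monomial would divide the other, which is a contradiction. Since the basis is sorted by increasing \(y\)-degree of leading monomials, this gives \(b_0 < b_1 < \cdots < b_{\ell-1}\). The same argument applied to exponents of \(x\) shows that the \(a_i\) are pairwise distinct as well. Moreover, if \(i<j\) and \(a_i \le a_j\), then \(\lm{g_i}\) divides \(\lm{g_j}\). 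Hence \(a_0 > a_1 > \cdots > a_{\ell-1}\).

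Next we use zero-dimensionality. The set of monomials not divisible by any \(\lm{g_i}\) is finite, of cardinality \(\dd\). Consequently some pure power \(x^k\) must lie in the monomial ideal \(\genBy{\lm{g_0},\ldots,\lm{g_{\ell-1}}}\), so some \(\lm{g_i}\) divides \(x^k\), which forces \(b_i = 0\). Because \(b_0\) is the smallest \(y\)-exponent and the \(b_i\) are nonnegative and distinct, we get \(b_0 = 0\), so \(\lm{g_0}\) is a power of \(x\). Symmetrically, some \(\lm{g_i}\) divides a power \(y^k\), so \(a_i = 0\). Since the \(a_i\) are strictly decreasing, this must be \(i = \ell-1\). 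Thus \(\lm{g_{\ell-1}} = y^{b_{\ell-1}}\).

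The last claim concerns the \(\ledrl\) order with \(x \ltdrl y\). Since \(\lm{g_{\ell-1}} = y^{b}\) with \(b = b_{\ell-1}\), every other monomial \(x^u y^v\) appearing in \(g_{\ell-1}\) is \(\ltdrl\)-smaller than \(y^b\), and therefore \(u+v \le b\). If \(u+v < b\), then \(v < b\). If \(u+v = b\), then among monomials of equal total degree the monomial \(y^b\) is the largest, because the degree reverse lexicographic order with \(x \ltdrl y\) orders such monomials by increasing power of \(y\). Since \(x^u y^v\) is a different monomial, \(v < b\) again. Hence \(\ydeg{g_{\ell-1}} = b\).

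The only point needing care is this comparison of equal-degree monomials under \(\ledrl\) in two variables, which reduces to lex order with \(x<y\). Everything else is a routine divisibility argument.
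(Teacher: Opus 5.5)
Your proof is correct and takes the same route as the paper. The paper's proof just cites the minimality and sorting of the basis, zero-dimensionality via the finiteness theorem of Cox--Little--O'Shea (which you rederive from the finiteness of the set of standard monomials), and the fact that \(g_{\ell-1}\) has the same total degree as \(\lm{g_{\ell-1}}\). One simplification for the last claim: your tie-breaking analysis under \(\ledrl\) is not needed, since \(v \le u+v \le b\) already gives \(v \le b\), and the presence of the monomial \(y^b\) then yields \(\ydeg{g_{\ell-1}} = b\).
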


\begin{proof}
  This follows from the minimality and chosen sorting of $\mathcal{G}$, from
  the zero-dimensionality of $\ideal$ \cite[Chap.\,5, \S3,
  Thm.\,6]{CoxLittleOShea2010}, and from the fact that \(g_{\ell-1}\) and
  \(\lm{g_{\ell-1}}\) have the same total degree.
\end{proof}

Define $n_i = \ydeg{\lm{g_{i+1}}} - \ydeg{\lm{g_i}} > 0$ for $0 \le i < \ell-1$ and
$n_{\ell-1} = \max(1, \max_{0\leq i\leq \ell-2} (n_i+\ydeg{g_i}) - \ydeg{g_{\ell-1}})$.
%% NOTE without the max, we still have \(n_{\ell-1} \ge 0\), indeed:
%% n_{\ell-1} >= n_{\ell-2} + \ydeg{g_{\ell-2}} - \ydeg{g_{\ell-1}}
%%            >= n_{\ell-2} + \ydeg{\lm{g_{\ell-2}}} - \ydeg{g_{\ell-1}}
%%             = n_{\ell-2} + \ydeg{\lm{g_{\ell-2}}} - \ydeg{\lm{g_{\ell-1}}}
%%             = 0
%% For simplicity we put the \max(1,..) which overshoots by at most 1,
%% in most cases we would be fine without the max, but by putting it
%% this avoids to discuss corner cases where e.g. GB_drl == GB_lex in which
%% we do not recover from M because we discarded g_{\ell-1}
By \cref{lem:appli:gbprops}, one has $\ydeg{\lm{g_i}} = n_0 + \cdots
+ n_{i-1}$ for $1 \le i < \ell$.
% follows from the telescopic sum
% $\sum_{i=0}^{k-1} n_i = \ydeg \lm{g_{k}} - \ydeg \lm{g_0}$
% which is $\ydeg \lm{g_{k}}$ by \cref{lem:appli:gbprops}.
Consider the tuple of polynomials of length $n = n_0 + \cdots + n_{\ell-1}$,
\[
  \bar{\mathcal{G}} =
  (g_0, yg_0, \ldots, y^{n_0-1}g_0,
  \ldots,
  g_{\ell-1}, \ldots, y^{n_{\ell-1}-1}g_{\ell-1}),
\]
and the matrix $M \in\xmatspace{n}{n}$ whose \(r\)th row represents
the \(r\)th polynomial of $\bar{\mathcal{G}}$, viewed in $\xring[y]$, on the
basis $(1,y,\ldots,y^{n-1})$. Thus, \(M\) is formed by
\(\ell\) stacked Toeplitz blocks: for $0 \le i < \ell$, $0 \le k < n_i$, and $0
\le j \le n-n_i$,
the entry \(M_{n_0 + \cdots + n_{i-1}+k,j+k}\) is the coefficient of $y^j$ in the polynomial \(g_i\), viewed in
\(\xring[y]\). For the latter
bound on \(j\), using \cref{lem:appli:gbprops} we get
\[
  \ydeg{g_i} \le n_{\ell-1} + \ydeg{g_{\ell-1}} - n_i = n_{\ell-1} + \ydeg{\lm{g_{\ell-1}}} - n_i,
\]
and the right-hand side is $n - n_i$.

\begin{lemma}
  \label{lem:block_sylvester_weak_popov}%
  The matrix \(M\) defined above has \(\deg(\det(M)) = \dd\) and is in
  \(s\)-weak Popov form for the shift \(s = (0, 1, \ldots, n-1)\).
\end{lemma}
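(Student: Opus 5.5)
The plan is to read off the \(s\)-leading matrix of \(M\) directly from the \(\ledrl\)-leading monomials of the rows of \(\bar{\mathcal{G}}\). Then \cref{lem:degdet_degadj} gives the degree of the determinant, and a count of standard monomials shows that this degree equals \(\dd\).

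\emph{Step 1: \(s\)-row degrees as total degrees.} Fix the row \(r = n_0+\cdots+n_{i-1}+k\), which represents \(y^k g_i\). Its entry in column \(j+k\) is the coefficient of \(y^j\) in \(g_i\), a polynomial in \(x\). For \(s=(0,1,\ldots,n-1)\), a term \(x^a\) in column \(c\) contributes \(a + c\) to the \(s\)-row degree, and \(a+c\) is the total degree of the monomial \(x^a y^{c}\) of \(y^kg_i\). Hence \(\rdeg[s]{M_{r*}} = \tdeg{y^kg_i} = k+\tdeg{g_i}\). Moreover, entry \((r,c)\) of \(\lmat[s]{M}\) is nonzero exactly when \(y^kg_i\) has a monomial of maximal total degree with \(y\)-degree \(c\).

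\emph{Step 2: triangularity of \(\lmat[s]{M}\).} For \(\ledrl\) with \(x\ltdrl y\), among monomials of the same total degree the largest is the one with the smallest \(x\)-exponent, that is, the largest \(y\)-exponent. So the rightmost nonzero entry of row \(r\) of \(\lmat[s]{M}\) sits at column \(\ydeg{\lm{y^kg_i}} = k+\ydeg{\lm{g_i}} = k + n_0+\cdots+n_{i-1} = r\), using \cref{lem:appli:gbprops}. Thus \(\lmat[s]{M}\) is lower triangular with nonzero diagonal, and \(M\) is in \(s\)-weak Popov form; in particular it is \(s\)-reduced.

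\emph{Step 3: degree of the determinant.} By \cref{lem:degdet_degadj},
\[
\deg(\det(M)) = \sum_r \bigl(\rdeg[s]{M_{r*}} - r\bigr) = \sum_r \xdeg{\lm{y^kg_i}} = \sum_{i=0}^{\ell-1} n_i\,\xdeg{\lm{g_i}}.
\]
It remains to show that this sum is \(\dd\), the number of monomials not divisible by any \(\lm{g_i}\). This counting is the main point to get right.

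Write \(\lm{g_i}=x^{a_i}y^{b_i}\). The \(b_i\) are strictly increasing by \cref{lem:appli:gbprops}. By minimality no leading monomial divides another, which forces the \(a_i\) to be strictly decreasing, with \(b_0 = 0\) and \(a_{\ell-1}=0\). Fix a \(y\)-exponent \(b\) with \(b_i\le b< b_{i+1}\) (or \(b \ge b_{\ell-1}\)). The monomial \(x^ay^b\) is standard if and only if \(a<\min_{t\le i} a_t = a_i\), so there are exactly \(a_i\) standard monomials with \(y\)-degree \(b\). Summing over the \(n_i = b_{i+1}-b_i\) values of \(b\) for each \(i<\ell-1\) gives \(\sum_{i<\ell-1} n_i a_i\). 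The last block contributes \(n_{\ell-1}a_{\ell-1}=0\) regardless of the value of \(n_{\ell-1}\). Hence the total is \(\dd\), which concludes the proof.
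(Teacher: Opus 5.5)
Your proof is correct and follows essentially the paper's route: both read the lower-triangular, invertible $s$-leading matrix off the $\ledrl$-leading monomials of the rows of $M$, then apply \cref{lem:degdet_degadj} to get $\deg(\det(M))=\sum_i n_i\,\xdeg{\lm{g_i}}$ and identify this sum with the number of standard monomials. The paper phrases the first step as the entrywise inequality $\deg(M_{rj})+j \le \deg(M_{rr})+r$, strict for $j>r$. Your staircase count supplies the detail that the paper only asserts ``by construction of the $n_i$'s''. You also correctly use $\xdeg{\lm{g_i}}$ and $a_{\ell-1}=0$, where the paper loosely writes $\xdeg{g_i}$ and $\xdeg{g_{\ell-1}}=0$.
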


\begin{proof}
  Let $0 \le i < \ell$ and $r = n_0 + \cdots + n_{i-1} = \ydeg{\lm{g_i}}$. Fix
  \(0 \le j < n-n_i\) and recall that $M_{rj}$ is the \(\xring\)-coefficient of
  \(y^j\) in \(g_i \in \xring[y]\). Thus all monomials of \(y^j M_{rj}\) appear
  in \(g_i\), and the leading monomial of \(g_i\) appears in \(y^r M_{rr}\).
  We deduce
  \[
    \deg(M_{rj}) + j = \tdeg{y^j M_{rj}} \le \tdeg{\lm{g_{i}}} = \deg(M_{rr}) + r,
  \]
  since \(\ledrl\) refines the total degree. Moreover, this inequality is
  strict when \(j > r\): otherwise \(g_i\) would contain a monomial \(y^j x^t\)
  with \(t = \tdeg{g_i} - j\), which is absurd since \(\lm{g_i} = y^r x^{t+j-r}
  \ltdrl y^j x^t\) when \(j > r\). Since for \(0 \le k < n_i\) the row
  \(M_{r+k,*}\) is equal to \(M_{r*}\) shifted to the right by \(k\) indices,
  we deduce that for all \(r,j\) in \(\{0,\ldots,n-1\}\), \(\deg(M_{rj}) + j
  \le \deg(M_{rr}) + r\) holds with strict inequality if \(j > r\). Thus
  the $s$-leading matrix of $M$ is lower triangular and invertible, meaning that
  $M$ is in $s$-weak Popov form.

  It follows from the above degree properties that the \(s\)-row degree of
  \(M_{r*}\) is \(\deg(M_{rr}) + s_r\) for \(0 \le r < n\). Hence,
  \cref{lem:degdet_degadj} ensures
  \(
    \deg(\det(M))
    = \sum_{0 \le r < n} \deg(M_{rr})
    = \sum_{0 \le i < \ell-1} n_i \xdeg{g_i}
  \),
  since \(\xdeg{g_{\ell-1}} = 0\). By construction of the \(n_i\)'s,
  the latter sum is the number of monomials that are not divisible by one
  of the \(\lm{g_i}\)'s: this is the ideal degree \(\dd =
  \dim_\field(\xyring/\ideal)\) (see \cref{sec:change_order:prelim}).
\end{proof}

\cref{lem:block_sylvester_basis} shows that \(M\) is a basis of
$\ideal_{<(\cdot,n)} = \xyring_{<(\cdot,n)} \cap \ideal$, which is a
\(\xring\)-submodule of \(\xyring_{<(\cdot,n)}\), itself seen as
\(\xmatspace{1}{n}\) via the basis \((1,y,\ldots,y^{n-1})\). This core property
makes the lexicographic basis of \(\ideal\) appear in the HNF of \(M\)
(\cref{cor:get_lexicographic}). Moreover, \(M\) has displacement rank \(\le
\ell\) with explicit displacement generators
(\cref{lem:change_order:displacement}), making the HNF computation efficient
via \cref{sec:structured_system,sec:change_order}. As highlighted in
\cref{thm:hnf}, thanks to the knowledge of \(\deg(\det(M))\), the computation
is faster if \(\hyprow\) and/or \(\hypcol\) holds; in
\cref{cor:get_lexicographic}, we prove that \(\hyprow\) coincides with the so-called
\emph{shape position} assumption on the lexicographic basis of \(\ideal\).

\begin{lemma}
  \label{lem:block_sylvester_basis}%
  The matrix \(M\) defined above is a basis of $\ideal_{<(\cdot,n)}$.
\end{lemma}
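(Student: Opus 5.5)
We need to show that \(\rspan{M} = \ideal_{<(\cdot,n)}\), where polynomials in \(\xyring_{<(\cdot,n)}\) are identified with row vectors in \(\xmatspace{1}{n}\) through the basis \((1,y,\ldots,y^{n-1})\). The inclusion \(\rspan{M} \subseteq \ideal_{<(\cdot,n)}\) is immediate. Each row of \(M\) represents some \(y^k g_i\), which lies in \(\ideal\). By construction (the bound \(\ydeg{g_i} \le n - n_i\) established just before the statement) it also has \(y\)-degree \(< n\). Since \(\ideal_{<(\cdot,n)}\) is a \(\xring\)-module, every \(\xring\)-combination of the rows stays in it.

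For the reverse inclusion, I would use the \(s\)-weak Popov structure from \cref{lem:block_sylvester_weak_popov} with \(s=(0,1,\ldots,n-1)\), together with a division argument.

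\textbf{Division step.} Let \(f \in \ideal_{<(\cdot,n)}\), viewed as a row vector \(p\). I would reduce \(p\) by the rows of \(M\), in the same way as polynomial division by a basis in weak Popov form. The key observation is how the diagonal entries of \(M\) encode leading monomials. Row \(r = n_0+\cdots+n_{i-1}+k\) represents \(y^k g_i\). Its \(s\)-leading entry is at column \(r\) and has degree \(\deg(M_{rr})\), which equals the \(x\)-exponent of \(\lm{g_i}\); the \(y\)-exponent of \(\lm{y^k g_i}\) is exactly \(r\). For rows with \(i=\ell-1\), \(M_{rr}\) is a nonzero constant.

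Now take any \(f \ne 0\) and let \(\lm{f} = x^a y^b\) with respect to \(\ledrl\). Since \(f\in\ideal\), \(\lm{f}\) is divisible by some \(\lm{g_i}\). Because the \(y\)-exponents of the \(\lm{g_i}\) are increasing and those of the \(x\)-exponents decreasing, we can choose \(i\) with \(\ydeg{\lm{g_i}} \le b < \ydeg{\lm{g_{i+1}}}\) (with the convention that \(\ydeg{\lm{g_{\ell}}} = +\infty\)), and then \(\lm{g_i}\) divides \(x^a y^b\).

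The reducer will be \(y^{k} g_i\) with \(k = b - \ydeg{\lm{g_i}}\). When \(i<\ell-1\), we have \(k<n_i\), so \(y^k g_i\) is a row of \(M\). When \(i=\ell-1\), the power of \(y\) may exceed the rows available; I would handle this through the constraint \(b<n\), since then \(k < n - \ydeg{\lm{g_{\ell-1}}} = n_{\ell-1}\). Subtracting the appropriate multiple \(c\,x^{a-\deg_x \lm{g_i}} y^k g_i\) strictly decreases \(\lm{f}\) in \(\ledrl\). It also keeps the \(y\)-degree below \(n\), since the reducer is itself a row of \(M\), of \(y\)-degree \(<n\).

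\textbf{Termination.} This is the main obstacle: \(\ledrl\) is a well-order, but I would rather argue finitely. The cleanest route is a counting argument. The process strictly decreases leading monomials and stays inside \(\ideal\). It therefore terminates (by the well-ordering of \(\ledrl\)) with a remainder \(f' \in \ideal\) whose leading monomial, if \(f'\neq 0\), is not divisible by any \(\lm{g_i}\). Since \(\mathcal{G}\) is a Gröbner basis, this forces \(f'=0\). Hence \(f\) is a \(\xring\)-combination of the rows of \(M\).

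\textbf{Alternative via determinants.} If one prefers a purely module-theoretic argument: \(\rspan{M} \subseteq \ideal_{<(\cdot,n)} \subseteq \xyring_{<(\cdot,n)}\), and by \cref{lem:block_sylvester_weak_popov} we have \(\deg\det M = \dd\). On the other side, \(\dim_\field\bigl(\xyring_{<(\cdot,n)}/\ideal_{<(\cdot,n)}\bigr) \ge \dd\) requires checking that the \(\dd\) standard monomials have \(y\)-degree \(<n\), which holds since \(n \ge \ydeg{\lm{g_{\ell-1}}}\). Meanwhile \(\dim_\field(\xmatspace{1}{n}/\rspan{M}) = \deg\det M = \dd\). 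The chain of inclusions then forces equality. This dimension argument is shorter, and I would present it as the main proof. Its delicate point is verifying that no standard monomial has \(y\)-degree \(\ge n\) and that the quotient injects, which follows from \(\ideal_{<(\cdot,n)}\) containing no nonzero combination of standard monomials.
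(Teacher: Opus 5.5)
Your proposal is correct, and the dimension argument you would present as the main proof is essentially the paper's; your division argument is a genuinely different route that also works.

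\textbf{The dimension argument.} The paper takes a nonsingular basis \(P\) of \(\ideal_{<(\cdot,n)}\), whose existence it cites, and writes \(M=UP\). It identifies \(\deg\det P\) with \(\dim_\field(\xyring_{<(\cdot,n)}/\ideal_{<(\cdot,n)})\) via a cited lemma. It shows this dimension is \(\dd\) because all standard monomials have \(y\)-degree \(<n\). It then concludes that \(U\) is unimodular from \(\deg\det M=\dd\) (\cref{lem:block_sylvester_weak_popov}).

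You instead apply \(\dim_\field(\xmatspace{1}{n}/\rspan{M})=\deg\det M\) to \(M\) itself, along the chain \(\rspan{M}\subseteq\ideal_{<(\cdot,n)}\subseteq\xmatspace{1}{n}\). This needs only the lower bound \(\dim_\field(\xyring_{<(\cdot,n)}/\ideal_{<(\cdot,n)})\ge\dd\), i.e.\ linear independence of the standard monomials modulo \(\ideal\). It dispenses with the auxiliary basis \(P\), a mild streamlining.

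\textbf{The division argument.} This is complete. The key fact, which you verify, is that any monomial \(x^ay^b\) with \(b<n\) divisible by some \(\lm{g_i}\) is a pure power of \(x\) times the leading monomial of row \(b\) of \(M\). Hence Gr\"obner reduction of \(f\in\ideal_{<(\cdot,n)}\) by \(\mathcal{G}\) can be carried out using only \(\xring\)-multiples of rows of \(M\). It never leaves \(\xyring_{<(\cdot,n)}\) and must end at \(0\).

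\textbf{What each buys.} The division route is elementary and constructive, and it does not use \(\deg\det M=\dd\) at all. To get ``basis'' it only needs nonsingularity of \(M\), which the distinct \(y\)-exponents of the rows' leading monomials already give; you should state this explicitly in that route. The determinant route is shorter, given that \cref{lem:block_sylvester_weak_popov} is needed anyway for the algorithm.

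\textbf{Termination.} You announce a finite counting argument but then invoke well-ordering; you can make it genuinely finite. Since \(\ledrl\) refines total degree, every leading monomial met during the reduction has total degree at most \(\tdeg{f}\), so only finitely many can occur.
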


\begin{proof}
  Since \(\ideal\) is zero-dimensional, \(\ideal_{<(\cdot,n)}\) admits a module
  basis \(P\) which is a nonsingular matrix in \(\xmatspace{n}{n}\)
  \cite[Lem.\,3.4]{BerthomieuNeigerSafey2022}. Each row of $M$ represents a
  polynomial in $\ideal_{<(\cdot,n)}$, hence \(M = UP\) for some nonsingular
  \(U \in \xmatspace{n}{n}\). It remains to prove that \(U\) is unimodular; for
  this, it suffices to prove \(\deg(\det(P)) = \dd\), since the latter is
  \(\deg(\det(M))\) according to \cref{lem:block_sylvester_weak_popov}.
  Applying \cite[Lem.\,2.3]{NeigerVu2017}, we get \(\deg(\det(P)) =
  \dim_\field(\xyring_{<(\cdot,n)}/\ideal_{<(\cdot,n)})\). Now, recall from
  \cref{sec:change_order:prelim} that a vector space basis of
  \(\xyring/\ideal\) is formed by the \(\dd\) monomials that are not divisible
  by any of the \(\lm{g_i}\)'s. Since all these monomials are in
  \(\xyring_{<(\cdot,n)}\), they also form a vector space basis of
  \(\xyring_{<(\cdot,n)}/\ideal_{<(\cdot,n)}\), hence \(\deg(\det(P)) = \dd\).
\end{proof}

\subsection{Applying Algorithm~\ref{algo:HermiteFormSubmatrix}; proof of Theorem \ref{thm:change_order}}
\label{sec:change_order:contrib}

The next corollary of \cref{lem:block_sylvester_basis}, along with the
structure of \(M\) described in \cref{lem:change_order:displacement},
show that we can directly apply \algoName{algo:HermiteFormSubmatrix}
in order to compute \(\Glex\).

\begin{corollary}
  \label{cor:get_lexicographic}%
  Let \(m \in \{1,\ldots,n\}\) and let \(B\) be the \(m\times m\) leading
  principal submatrix of the HNF of \(M\). The elements of \(y\)-degree \(< m\)
  of the lexicographic Gr\"obner basis of \(\ideal\) can be read off from the
  rows of \(B\). Moreover, \(\ideal\) is in shape position if and only if
  \(M\) satisfies \(\hyprow\).
\end{corollary}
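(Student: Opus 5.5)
By \cref{lem:block_sylvester_basis}, the row span of \(M\) is \(\ideal_{<(\cdot,n)}\), identified with \(\xmatspace{1}{n}\) through the basis \((1,y,\ldots,y^{n-1})\). By \cref{lem:hnf_submat} (case \ref{lem:hnf_submat:leading}), the rows of \(B = H_{\indset\indset}\) with \(\indset=(0,\ldots,m-1)\) form the HNF basis of \(\module_\indset\). The first step is to translate this module. A vector \(p \in \xmatspace{1}{m}\) lies in \(\module_\indset\) exactly when the polynomial \(\sum_{k<m} p_k y^k\) is in \(\ideal\). So \(\module_\indset\) corresponds to \(\ideal_{<(\cdot,m)} = \ideal \cap \xyring_{<(\cdot,m)}\), which holds because \(m \le n\).

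Next I compare this with the reduced lexicographic basis \(\Glex = (q_0,\ldots,q_t)\), ordered by increasing \(y\)-degree. Write \(q_i = \lc{q_i}(x) y^{e_i} + \cdots\) with \(0 = e_0 < e_1 < \cdots < e_t\). The standard structure of bivariate lex bases (Lazard) gives that \(\lc{q_{i+1}}\) divides \(\lc{q_i}\), and that the \(\field(x)\)-free part is generated by \(y\)-monic structure. The claim to prove is that, for each \(k<m\), the \(k\)th row of the HNF \(B\) is the coefficient vector of the element of \(\ideal_{<(\cdot,m)}\) with \(y\)-degree exactly \(k\), \(y\)-leading coefficient \(B_{kk}\) monic of minimal degree, and reduced below. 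These rows are then exactly \(y^{k-e_i} q_i\) reduced with respect to earlier rows.

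The key observation is that \(B_{kk}\) generates the ideal of \(y^k\)-coefficients of elements of \(\ideal_{<(\cdot,k+1)}\). By the lex Gröbner property, this ideal is \(\langle \lc{q_i}\rangle\) for the largest \(i\) with \(e_i \le k\). Hence \(B_{kk} = \lc{q_i}\) up to a unit, and \(\deg B_{kk}\) drops strictly exactly at the indices \(k = e_i\). The \(x\)-reducedness of the off-diagonal entries in the HNF (\(\deg B_{kj} < \deg B_{jj}\)) matches the reducedness of \(\Glex\). Row uniqueness then gives \(B_{e_i *} \leftrightarrow q_i\) for every \(e_i < m\). This reads off every element of \(\Glex\) of \(y\)-degree \(< m\): take the rows where the diagonal degree strictly decreases.

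For the shape-position equivalence, consider first the case where \(\ideal\) is in shape position, \(\Glex=(f_0, y-f_1)\). Then \(\ideal_{<(\cdot,n)}\) has basis rows \(f_0\) and \(y^{k}-f_1y^{k-1}\) reduced. Reducing \(y^k\) modulo \(\ideal\) gives \(y^k - f_1^k \bmod f_0 \in \ideal\), so the HNF has the form \eqref{eqn:generic_row_hnf} with \(h_0=f_0\). Conversely, if \(M\) satisfies \(\hyprow\), then the HNF row \(1\) is \([h_1 \; 1 \; 0\cdots]\), so \(y+h_1 \in \ideal\), while \(h_0\) generates \(\ideal\cap\xring\). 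Since \(\deg h_0=\deg\det(M)=\dd\), the basis is \((h_0, y+h_1 \bmod h_0)\), which is shape position.

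I expect the main obstacle to be the careful identification of the diagonal entries \(B_{kk}\) with the lex leading coefficients, in particular making rigorous the "minimal \(y^k\)-coefficient ideal" argument using the lex Gröbner basis property (elimination of \(y\)-degree).
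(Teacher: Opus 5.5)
Your proof is correct, but it takes a different route from the paper. The paper only sketches the argument. It reduces the first claim to \cite[Thm.\,4.1]{BerthomieuNeigerSafey2022}, supplying as hypotheses \cref{lem:block_sylvester_basis} and the bound $\max_i \ydeg{g_i} < n$ on the $y$-degrees of all monomials of $\Glex$. It then declares the shape-position equivalence a consequence of the first claim.

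You instead reprove the needed bivariate fact from scratch:
\begin{itemize}
\item $\module_\indset$ is $\ideal_{<(\cdot,m)}$;
\item lower triangularity makes $B_{kk}$ the monic generator of the ideal of $y^k$-coefficients of $\ideal_{<(\cdot,k+1)}$;
\item the lex Gr\"obner property identifies this ideal with $\genBy{\lc{q_i}}$ for the largest $i$ with $e_i\le k$;
\item a uniqueness argument matches row $e_i$ of $B$ with $q_i$.
\end{itemize}
This makes the corollary self-contained. It also shows that only the elements of $\Glex$ with $e_i<m\le n$ matter, so the global $y$-degree bound is never used. The citation, in exchange, buys brevity.

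Three steps deserve to be written out.
\begin{itemize}
\item The coefficient-ideal identification needs no Lazard divisibility. It only needs that the $x$-degrees $d_j$ of the lex leading monomials $y^{e_j}x^{d_j}$ strictly decrease as $e_j$ increases, which follows from minimality. Then the Gr\"obner property forces every nonzero coefficient in the ideal to have degree at least $d_i$, while $y^{k-e_i}q_i$ realizes $\lc{q_i}$.
\item For ``row uniqueness'', take $j<e_i$ and $i'$ maximal with $e_{i'}\le j$. Reducedness of $q_i$ against $q_{i'}$ gives that the $y^j$-coefficient of $q_i$ has degree less than $d_{i'}=\deg B_{jj}$. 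The HNF degree constraints give the same for row $e_i$ of $B$. So the difference of the two is an element of $\ideal_{<(\cdot,e_i)}$ whose top nonzero coefficient would be a nonzero multiple of some $B_{jj}$ of smaller degree, which is impossible.
\item In the shape-position converse, conclude $\genBy{h_0,\,y+h_1}=\ideal$ from the inclusion together with $\dim_\field \xyring/\genBy{h_0,\,y+h_1}=\deg h_0=\dd$. Alternatively, apply your first claim with $m=2$, which is valid since $n\ge 2$.
\end{itemize}
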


\begin{proof}[Sketch of proof]
  The last equivalence follows from the first claim. The first claim follows
  from \cite[Thm.\,4.1]{BerthomieuNeigerSafey2022}, since \(M\) is a basis of
  \(\ideal_{<(\cdot,n)}\) by \cref{lem:block_sylvester_basis} and since all
  monomials appearing in the lexicographic basis of \(\ideal\) have
  \(y\)-degree \(\le \max_i (\ydeg{g_i}) < n\).
\end{proof}

To make the above matrix \(M\) suit the Hankel-type displacement of
\cref{sec:structured_system}, we reverse the order of rows and consider \(LM\).
Since \(L\) is unimodular, the HNF of \(LM\) is that of \(M\), and
\(\deg(\det(LM)) = \dd\). In addition, from \(\adj{L} = (-1)^n L\) we get
\(\adj{LM} = (-1)^n\adj{M}L\) and thus \(\deg(\adj{LM}) = \deg(\adj{M}) \le \dd
+ n\) by \cref{lem:degdet_degadj}.

\begin{lemma}
  \label{lem:change_order:displacement}%
  The displacement rank of $LM$ is at most $\ell$. Using \(\bigO{\ell n \dd}\)
  operations in \(\field\), from \(\mathcal{G}\) one can compute displacement
  generators $\genl,\genr \in \xmatspace{\ell}{n}$ for $LM$, both of
  degree at most \(\dd\).
\end{lemma}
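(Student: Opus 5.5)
The plan is to compute the displacement $Z_0 (LM) - (LM)\trsp{Z_1}$ explicitly and show that all but at most $\ell$ of its rows vanish. Generators can then be read off directly: $\genl$ is made of the corresponding columns of the identity, and $\genr$ collects the nonzero rows (up to transposition, to match the $\ell\times n$ convention of the statement).

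\emph{Describing the operator.} Left multiplication by $Z_0$ shifts rows down by one: row $i$ of $Z_0X$ is row $i-1$ of $X$, and row $0$ is zero. Right multiplication by $\trsp{Z_1}$ shifts each row cyclically one step to the right: $(X\trsp{Z_1})_{i,j} = X_{i,j-1}$ for $j\ge1$ and $X_{i,0}$ becomes $X_{i,n-1}$. Hence row $i$ of the displacement of $X=LM$ equals (row $i-1$ of $LM$) minus (row $i$ of $LM$ cyclically shifted). Since row $i$ of $LM$ is row $n-1-i$ of $M$, for $i\ge1$ this becomes $M_{n-i,*}$ minus the cyclic shift of $M_{n-1-i,*}$.

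\emph{Vanishing rows.} Suppose $n-1-i = r+k$ and $n-i = r+k+1$ lie in the same Toeplitz block $i'$, that is, $r = n_0+\cdots+n_{i'-1}$ and $0\le k\le n_{i'}-2$. Then row $r+k+1$ of $M$ is the non-cyclic right shift of row $r+k$. This non-cyclic shift coincides with the cyclic one because the last entry of row $r+k$ is zero: its support lies in $\{k,\ldots,n-n_{i'}+k\}$, and $n-n_{i'}+k\le n-2$ (this uses the bound $j\le n-n_{i'}$ established right before \cref{lem:block_sylvester_weak_popov}). So the displacement row is zero.

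The only possibly nonzero rows are therefore $i=0$ and those $i$ for which $n-i$ is the first row of one of the blocks $1,\ldots,\ell-1$. This gives at most $\ell$ rows, so the rank is at most $\ell$. Taking $\genl$ to consist of the identity columns at these indices and $\genr$ to consist of these rows gives $\genl\trsp{\genr}$ equal to the displacement.

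\emph{Degree bound and cost.} This is the step needing care. Entries of the nonzero rows are differences of entries of $M$, so it suffices to bound $\deg(M)$ by $\dd$. An entry in block $i'$ is an $x$-coefficient of $g_{i'}$, so its degree is at most $\tdeg{g_{i'}}=\tdeg{\lm{g_{i'}}}=a+b$, where $\lm{g_{i'}}=x^ay^b$. To bound this I would count standard monomials. By \cref{lem:appli:gbprops}, write $\lm{g_0}=x^{a_0}$ and $\lm{g_{\ell-1}}=y^{b}$ with $b=\ydeg{\lm{g_{\ell-1}}}$. Then $1,x,\ldots,x^{a_0-1}$ and $y,\ldots,y^{b-1}$ are all standard, so $\dd\ge a_0+b-1$. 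Since $x$-exponents of the leading monomials decrease and $y$-exponents strictly increase along the sorted basis, each $g_{i'}$ with $i'<\ell-1$ has $a+b\le a_0+(b-1)\le \dd$, while $g_{\ell-1}$ has total degree $b\le\dd$.

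For the cost, forming $\genr$ means at most $\ell$ subtractions of length-$n$ vectors of polynomials of degree at most $\dd$. The input is just the coefficients of the $g_{i'}$ placed at the right positions, so the total is $\bigO{\ell n\dd}$ operations. The main obstacle I expect is exactly this degree bound, and in particular checking the staircase counting cleanly in edge cases such as $\ell=2$ or $a_0$ small.
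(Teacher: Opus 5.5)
Your proposal is correct and follows the same route as the paper. The paper also computes $Z_0LM - LM\trsp{Z_1}$ row by row, phrased via the wrap-around operator $\psi$ on $\xyring_{<(\cdot,n)}$, shows the rows $y^{k+1}g_i - \psi(y^kg_i)$ vanish because $\ydeg{y^kg_i} < n-1$, and takes $\genl$ as identity columns and $\genr$ as the $\ell$ surviving rows. Your staircase count ($\dd \ge a_0+b-1$ from the standard monomials $1,\ldots,x^{a_0-1},y,\ldots,y^{b-1}$) justifies the bound $\xdeg{g_i} \le \tdeg{\lm{g_i}} \le \dd$, which the paper only asserts.
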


\begin{proof}
  We are interested in \(Z_0 LM - LM\trsp{Z_1}\). The rows of $Z_0 L M$
  represent the polynomials of \(\bar{\mathcal{G}}\), mirrored and shifted by
  one index, that is, \((0,y^{n_{\ell-1}-1}g_{\ell-1}, \ldots,yg_0)\). To
  describe the entries of $L M \trsp{Z_1}$, we define the operator
  $\psi:\xyring_{<(\cdot,n)} \to \xyring_{<(\cdot,n)}$ such that for $f =
  \sum_{i < n} f_i(x)y^{i}$, $\psi(f) = y(f - f_{n-1} y^{n-1}) + f_{n-1}$. Then
  the rows of $LM \trsp{Z_1}$ represent $(\psi(y^{n_{\ell-1}-1}g_{\ell-1}),
  \ldots, \psi(yg_0), \psi(g_0))$. As a result, the rows of \(Z_0 LM -
  LM\trsp{Z_1}\) represent the polynomials
  \begin{equation*}
    % \label{eq:appli:ZLMLMZ}%
    \begin{bmatrix}
      -\psi(y^{n_{\ell-1}-1}g_{\ell-1}) \\
      y^{n_{\ell-1}-1}g_{\ell-1} - \psi(y^{n_{\ell-1}-2}g_{\ell-1}) \\
      \vdots \\
      y^2 g_0 - \psi(yg_0) \\
      yg_0 - \psi(g_0)
    \end{bmatrix}
    \in \xyring_{<(\cdot,n)}^n.
  \end{equation*}
  There are two kinds of rows here. First, those of the form $y^{k+1}g_i -
  \psi(y^k g_i)$; since $\ydeg{y^{k}g_i} < n-1$, such rows are zero. Second,
  the \(\ell\) rows of the form $g_{i+1} - \psi(y^{n_i - 1} g_i)$ together with
  the very first row. These are the only rows that are possibly nonzero,
  implying that the displacement rank is at most \(\ell\). Their indices are
  $m_k = n - \sum_{i=0}^{k} n_i$ for $0 \le k < \ell$. Hence we can factor
  \(Z_0 LM - LM\trsp{Z_1} = \genl\trsp{\genr}\) with $\genl,\genr \in
  \xmatspace{n}{\ell}$, where the \(k\)th column of \(\genl\) is the \(m_k\)th
  column of the identity \(\ident{n}\) and the rows of \(\trsp{\genr}\) represent the
  \(\ell\) polynomials
  \[
    -\psi(y^{n_{\ell-1}-1}g_{\ell-1}),
    g_{\ell-1} - \psi(y^{n_{\ell-2}-1}g_{\ell-2}),
    \ldots,
    g_1 - \psi(y^{n_{0}-1}g_{0}).
  \]
  These polynomials have \(x\)-degree at most \(\max_{0\le i<\ell}
  \xdeg{g_i} \le \dd\), hence \(\deg(\genr) \le \dd\).
  The same \(x\)-degree bound also shows that
  the above \(\ell\) polynomials can be determined in time \(\bigO{\ell n \dd}\).
\end{proof}

From these, one obtains \cref{thm:change_order} as a consequence of
\cref{thm:hnf}. Indeed, using parameters \(\alpha = \ell\), \(n \leq 2d_y\),
\(\da = \dd + n\) and \(d_y \in \bigO{\dd}\), one gets the bound
\(\softO{\lceil m / \ell \rceil\ell^{\expmm-1}d_y \dd  + m^{\expmm-1}
d_y \dd }\); separating cases \(\ell \le m\) and \(\ell \ge m\)
shows that this is in \(\softO{(\ell^{\expmm-1}  + m^{\expmm-1}) d_y \dd}\).
Note that the parameter \(m\)
in \cref{thm:change_order}, does not need to
be known in advance. Indeed, since \(\ideal\) is zero-dimensional with known
degree \(\dd\), one can always check a posteriori whether \(m\) was chosen
large enough, by inspecting the leading monomials of the obtained polynomials.
Then, one may double \(m\) until \(\Glex\) is found; the impact on the
cost bound is at most a logarithmic factor.

%%%%%%%%%%%%%%%%%%%%%%%%%%%%%%%%%%%%%%%%%%%%%%%%%%%%%%%%%%%%%%%%%%%%%%%%%%%%%%%%%%%
%
%              BIBLIO
%
%%%%%%%%%%%%%%%%%%%%%%%%%%%%%%%%%%%%%%%%%%%%%%%%%%%%%%%%%%%%%%%%%%%%%%%%%%%%%%%%%%%
% Fakesection bibliography  (keep line, thanks)

\begin{acks}
  The authors are supported by
  the joint ANR-FWF ANR-22-CE91-00075 EAGLES project
  and the ANR-23-CE48-0003 CREAM project.
\end{acks}

%%% -*-BibTeX-*-
%%% Do NOT edit. File created by BibTeX with style
%%% ACM-Reference-Format-Journals [18-Jan-2012].

\newcommand{\Gathen}{\relax}

%%%%%%%%%%%%%%%%%%%%%%%%%%%%%%%%%%%%%%%%%%%%%%%%%%%%%%%%%%%%%%%%%%%%%%%%%%%%%%%%%%%
%
%              APPENDIX
%
%%%%%%%%%%%%%%%%%%%%%%%%%%%%%%%%%%%%%%%%%%%%%%%%%%%%%%%%%%%%%%%%%%%%%%%%%%%%%%%%%%%

% Fakesection appendices  (keep line, thanks)
\appendix

\section{Matrix construction in Section~\ref{sec:change_order:buildM}: example}
\label{app:biv_matrix_example}

In this appendix, we illustrate the ideal properties (\cref{fig:gb_monomials})
and matrix construction (\cref{fig:matrix_degrees}) from
\cref{sec:change_order}.

\begin{figure}[ht]
  \centering
  \begin{tikzpicture}[scale=0.45,
    lmgb/.style={pattern={Dots[angle=45,distance=2pt]}, pattern color=black!20},
    lmshift/.style={pattern={Lines[yshift=.5pt,distance=2pt]}, pattern color=black!20},
    monomial/.style={pattern={Lines[angle=-45,yshift=-.5pt,distance=2.2pt]}, pattern color=black!20},
    ]

    % == GRID ==
    \draw[step=1,gray!35,very thin] (0,0) grid (14,13);

    % monomial basis
    % rows 0,1,2
    \foreach \x in {0,...,9} {
      \foreach \y in {0,1,2} {
        \fill[black!10,opacity=0.5] (\x,\y) rectangle ++(1,1);
      }
    }

    % row 3
    \foreach \x in {0,...,7} {
      \fill[black!10,opacity=0.5] (\x,3) rectangle ++(1,1);
    }

    % rows 4,5,6
    \foreach \x in {0,...,4} {
      \foreach \y in {4,5,6} {
        \fill[black!10,opacity=0.5] (\x,\y) rectangle ++(1,1);
      }
    }

    % == STAIRCASE ==
    \draw[line width=0.2mm] (0,7) -- (5,7) -- (5,4) -- (8,4) -- (8,3) -- (10,3) -- (10,0);

    \draw[dashed] (5,0) -- (5,7);
    \draw[dashed] (8,0) -- (8,4);
    \draw[dashed] (10,0) -- (10,3);

    \draw[<->] (0,-0.3) -- (5,-0.3);
    \node[below] at (2.5,-0.3) {\small $n_0=5$};
    \draw[<->] (5,-0.3) -- (8,-0.3);
    \node[below] at (6.5,-0.3) {\small $n_1=3$};
    \draw[<->] (8,-0.3) -- (10,-0.3);
    \node[below] at (9,-0.3) {\small $n_2=2$};
    \draw[<->] (10,-0.3) -- (12,-0.3);
    \node[below] at (11,-0.3) {\small $n_3=2$};

    % == MONOMIALS IN SHIFTS ==
    \fill[lmgb] (0,7) rectangle ++(1,1);
    \fill[monomial] (0,8) rectangle ++(1,1);
    \fill[monomial] (0,9) rectangle ++(1,1);
    \fill[monomial] (0,10) rectangle ++(1,1);
    \fill[monomial] (0,11) rectangle ++(1,1);

    \fill[lmshift] (1,7) rectangle ++(1,1);
    \fill[monomial] (1,8) rectangle ++(1,1);
    \fill[monomial] (1,9) rectangle ++(1,1);
    \fill[monomial] (1,10) rectangle ++(1,1);
    \fill[monomial] (1,11) rectangle ++(1,1);

    \fill[lmshift] (2,7) rectangle ++(1,1);
    \fill[monomial] (2,8) rectangle ++(1,1);
    \fill[monomial] (2,9) rectangle ++(1,1);
    \fill[monomial] (2,10) rectangle ++(1,1);

    \fill[lmshift] (3,7) rectangle ++(1,1);
    \fill[monomial] (3,8) rectangle ++(1,1);
    \fill[monomial] (3,9) rectangle ++(1,1);

    \fill[lmshift] (4,7) rectangle ++(1,1);
    \fill[monomial] (4,8) rectangle ++(1,1);

    \fill[lmgb] (5,4) rectangle ++(1,1);
    \fill[monomial] (5,5) rectangle ++(1,1);
    \fill[monomial] (5,6) rectangle ++(1,1);
    \fill[monomial] (5,7) rectangle ++(1,1);

    \fill[lmshift] (6,4) rectangle ++(1,1);
    \fill[monomial] (6,5) rectangle ++(1,1);
    \fill[monomial] (6,6) rectangle ++(1,1);

    \fill[lmshift] (7,4) rectangle ++(1,1);
    \fill[monomial] (7,5) rectangle ++(1,1);

    \fill[lmgb] (8,3) rectangle ++(1,1);
    \fill[monomial] (8,4) rectangle ++(1,1);

    \fill[lmshift] (9,3) rectangle ++(1,1);

    \fill[lmgb] (10,0) rectangle ++(1,1);
    \fill[monomial] (10,1) rectangle ++(1,1);

    \fill[lmshift] (11,0) rectangle ++(1,1);

    % == LEADING MONOMIALS ==

    % y^10, y^8 x^3, y^5 x^4, x^7
    \draw[->] (3.5,10.2) -- (0.5,7.5);
    \draw[->] (7.5,6.2) -- (5.5,4.5);
    \draw[->] (9.5,4.2) -- (8.5,3.5);
    \draw[->] (11.5,1.2) -- (10.5,0.5);
    \node[right] at (3.4,10.4) {\small \(x^7 = \lm{g_0}\)};
    \node[right] at (7.4,6.4)  {\small \(x^4 y^5 = \lm{g_1}\)};
    \node[right] at (9.4,4.4)  {\small \(x^3 y^8 = \lm{g_2}\)};
    \node[right] at (11.4,1.4) {\small \(y^{10} = \lm{g_3}\)};

    % == x-AXIS / y-AXIS ==
    \draw[->,thick] (-0.5,0) -- (14.8,0) node[right] {$y$};
    \draw[->,thick] (0,-0.9) -- (0,13.4) node[above] {$x$};
  \end{tikzpicture}
  \caption{\textmd{\emph{Representation of monomials in \(\bar{\mathcal{G}}\) for an ideal
        \(\ideal\) with a minimal \(\ledrl\)-Gr\"obner basis of \(\ell=4\) elements
        \((g_0,g_1,g_2,g_3)\) whose leading monomials are \((x^7, x^4 y^5,
        x^3 y^8, y^{10})\). The greyed area is the
        \(\ledrl\)-monomial basis of \(\xyring/\ideal\); the dotted monomials are
        the above \(\lm{g_i}\) for \(0 \le i < \ell\); the
        horizontally striped ones are the other leading monomials of
        \(\bar{\mathcal{G}}\), that is, \(\lm{y^k g_i}\) for \(1 \le k < n_i\)
        and \(0 \le i < \ell\); and the diagonally striped ones are all other
        monomials that possibly appear in \(\bar{\mathcal{G}}\).
        Here, \(n = n_0 + n_1 + n_2 + n_3 = 12\) and \(\dd = n_0 \xdeg{g_0} + n_1
        \xdeg{g_1} + n_2 \xdeg{g_2} = 5\cdot 7 + 3 \cdot 4 + 2 \cdot 3 = 53\).
  }}}
  \label{fig:gb_monomials}
  \bigskip
  \scalebox{0.84}{
    \(
    \bbordermatrix{
              & 1 & y & y^2 & y^3 & y^4 & y^5 & y^6 & y^7 & y^8 & y^9 & y^{10} & y^{11} \cr
   \hphantom{y^0}g_0 & \mycircled{7}  & 5  & 4 & 3 & 2 & 1 & 0 \cr
  \hphantom{^0}y g_0 &    & \myboxed{7}  & 5 & 4 & 3 & 2 & 1 & 0  \cr
      y^2 g_0        &    &    & \myboxed{7} & 5 & 4 & 3 & 2 & 1 & 0  \cr
      y^3 g_0        &    &    &   & \myboxed{7} & 5 & 4 & 3 & 2 & 1 & 0  \cr
      y^4 g_0        &    &    &   &   & \myboxed{7} & 5 & 4 & 3 & 2 & 1 & 0  \cr
  \hphantom{y^0} g_1 & 9  & 8  & 7 & 6 & 5 & \mycircled{4} & 2 & 1 & 0 \cr
  \hphantom{^0}y g_1 &    & 9  & 8  & 7 & 6 & 5 & \myboxed{4} & 2 & 1 & 0  \cr
       y^2 g_1       &    &    & 9  & 8  & 7 & 6 & 5 & \myboxed{4} & 2 & 1 & 0    \cr
   \hphantom{y^0}g_2 & 11 & 10 & 9 & 8 & 7 & 6 & 5 & 4 & \mycircled{3} & 1 & 0   \cr
  \hphantom{^0}y g_2 &    & 11 & 10 & 9 & 8 & 7 & 6 & 5 & 4 & \myboxed{3} & 1 & 0   \cr
  \hphantom{y^0}g_3  & 10 & 9  & 8 & 7 & 6 & 5 & 4 & 3 & 2 & 1 & \mycircled{0} \cr
  \hphantom{^0}y g_3 &    & 10 & 9  & 8 & 7 & 6 & 5 & 4 & 3 & 2 & 1 & \myboxed{0} \cr
    }
    \)
  }
  %% before adding circles/boxes:
  % \bbordermatrix{
  %               & 1 & y & y^2 & y^3 & y^4 & y^5 & y^6 & y^7 & y^8 & y^9 & y^{10} & y^{11} \cr
  %  \hphantom{y^0}g_0  & 7  & 5  & 4 & 3 & 2 & 1 & 0 \cr
  %  \hphantom{^0}y g_0 &    & 7  & 5 & 4 & 3 & 2 & 1 & 0  \cr
  %      y^2 g_0        &    &    & 7 & 5 & 4 & 3 & 2 & 1 & 0  \cr
  %      y^3 g_0        &    &    &   & 7 & 5 & 4 & 3 & 2 & 1 & 0  \cr
  %      y^4 g_0        &    &    &   &   & 7 & 5 & 4 & 3 & 2 & 1 & 0  \cr
  %  \hphantom{y^0} g_1 & 9  & 8  & 7 & 6 & 5 & 4 & 2 & 1 & 0 \cr
  %  \hphantom{^0}y g_1 &    & 9  & 8  & 7 & 6 & 5 & 4 & 2 & 1 & 0  \cr
  %       y^2 g_1       &    &    & 9  & 8  & 7 & 6 & 5 & 4 & 2 & 1 & 0    \cr
  %   \hphantom{y^0}g_2 & 11 & 10 & 9 & 8 & 7 & 6 & 5 & 4 & 3 & 1 & 0   \cr
  %  \hphantom{^0}y g_2 &    & 11 & 10 & 9 & 8 & 7 & 6 & 5 & 4 & 3 & 1 & 0   \cr
  %  \hphantom{y^0}g_3  & 10 & 9  & 8 & 7 & 6 & 5 & 4 & 3 & 2 & 1 & 0 \cr
  %  \hphantom{^0}y g_3 &    & 10 & 9  & 8 & 7 & 6 & 5 & 4 & 3 & 2 & 1 & 0 \cr
  % }
  \caption{\textmd{\emph{For an ideal \(\ideal\) as in
        \cref{fig:gb_monomials}, the matrix \(M\) constructed in
        \cref{sec:change_order:buildM} is in \(\xmatspace{12}{12}\).
        Its columns are indexed by \((1,y,\ldots,y^{11})\) and it has a
        block-Toeplitz structure, with rows indexed by the polynomials \(y^k g_i\)
        for \(1 \le k < n_i\) and \(0 \le i < \ell\). The row corresponding to
        \(y^k g_i\) stores the \(\xring\)-coefficients of this polynomial. The
        entries of \(M\) have degree less than or equal to the corresponding
        number in this illustrative ``degree matrix''. The circled numbers
        correspond to the leading monomials \(\lm{g_i}\), whereas the boxed
        numbers correspond to the other leading monomials of
        \(\bar{\mathcal{G}}\). For these diagonal entries, the numbers in the
        degree matrix are the exact degrees in \(M\), which implies that \(M\)
  is in \((0,1,\ldots,n-1)\)-weak Popov form.}}}
  \Description{Representation of monomials appearing in the extended Gr\"obner basis.
  Shape and degrees of the matrix M.}
  \label{fig:matrix_degrees}
\end{figure}

\section{Algorithms operating on the left}
\label{app:left_rows_versions}

The same result as \cref{prop:StructuredModularSystem} holds for left solving,
that is, computing \(X M^{-1} \bmod A\) for some \(X \in \xmatspace{m}{n}\) of
degree less than \(\Delta\). This only requires a minor modification of
\cref{algo:StructuredRightSolve-WithPoints}, at \cref{algo:StructuredRightSolve-WithPoints:mul},
where one then needs to perform the multiplication \(X(a_i) M(a_i)^{-1}\). For
this, one can rely on the algorithm of
\cite[Sec.\,5]{BostanJeannerodMouilleron2017} in order to compute
\(\itrsp{M(a_i)} \trsp{X(a_i)}\) efficiently, since \(\itrsp{M(a_i)}\) is
\((\trsp{Z_1},Z_0)\)-structured with generators \((-\bar{H},\bar{G})\). Indeed,
\begin{align*}
  \trsp{Z_0} M(a_i)^{-1} - M(a_i)^{-1} Z_1 &= \bar{G} \trsp{\bar{H}} \\
  {}\Leftrightarrow \trsp{Z_1} \itrsp{M(a_i)} - \itrsp{M(a_i)} Z_0 &= (-\bar{H}) \trsp{\bar{G}}.
\end{align*}
The corresponding algorithm has the specification in
\cref{algo:StructuredLeftSolve-WithPoints}.

\begin{algorithm}[ht]
	\algoCaptionLabel{StructuredLeftSolve-WithPoints}{G,H,X,(a_i)_{0 \le i < \Delta}}
  \begin{algorithmic}[1]

    \Require{%
        \algoitem displacement generators $G,H \in\xmatspace{n}{\alpha}$ that represent
        $M\in\xmatspace{n}{n}$ through \(Z_0 M - M \trsp{Z_1} = G \trsp{H}\) (see \cref{sec:structured_system:prelim}); \\
        \algoitem a matrix $X\in\xmatspace{m}{n}$ of degree less than \(\Delta\); \\
        \algoitem pairwise distinct points $a_0,\ldots,a_{\Delta-1} \in \field$, for \(\Delta \in \ZZp\).%
    }

    \Ensure{Any one of \FlagFail;   % bad luck in the structured linear algebra preconditioning
      \((i,\FlagSing)\) for some \(0 \le i < \Delta\);     % not Failure and \(\det(M)(a_i) = 0\) and \(\det(M)(a_j) \neq 0\) for \(j < i\)
      and $F \in \xmatspace{m}{n}_{<\Delta}$ such that $F M = X \bmod \prod_{0 \le i < \Delta} (x-a_i)$.}
  \end{algorithmic}
\end{algorithm}

Based on this variant,
one can directly adapt
\cref{algo:StructuredModularRightSolve} to left solving, yielding an
algorithm 
with the following specification and 
properties akin to those in
\cref{prop:StructuredModularSolve}.

\begin{algorithm}[ht]
  \algoCaptionLabel{StructuredModularLeftSolve}{G,H,X,\Delta}
  \begin{algorithmic}[1]

    \Require{%
      \algoitem displacement generators $G,H \in\xmatspace{n}{\alpha}$ that represent
      $M\in\xmatspace{n}{n}$ through \(Z_0 M - M \trsp{Z_1} = G \trsp{H}\) (see \cref{sec:structured_system:prelim}); \\
      \algoitem a matrix $X\in\xmatspace{m}{n}$ of degree less than \(\Delta\); \\
      \algoitem an integer \(\Delta \in \ZZp\) such that \(\field\) has cardinality \(\ge \Delta\).%
    }

    \Ensure{Any one of \FlagFail;
      % or pairwise distinct points $(a_i)_{0\le i < \Delta}$ in $\field$
      $((a_j)_{0\le j \le i}, \FlagSing)$
      with \(0 \le i < \Delta\) and $a_j \in \field$;
      %where \(0\le i < \Delta\)
      %and the $a_j$'s are pairwise distinct with \(\det(M)(a_i) = 0\)
      %and \(\det(M)(a_j) \neq 0\) for \(j < i\),
      and \((A,F)\) where \(A \in \xring_\Delta\)
      is coprime with \(\det(M)\) and $F \in \xmatspace{m}{n}_{<\Delta}$
      is such that $FM = X \bmod A$.}
  \end{algorithmic}
\end{algorithm}

Finally, using 
\cref{algo:StructuredModularLeftSolve},
a 
straightforward 
modification of
\cref{algo:InverseCols} gives an algorithm 
with the following specification and
with properties like those in \cref{prop:columns_of_inverse}.

\begin{algorithm}[ht]
	\algoCaptionLabel{InverseRows}{G,H,\indset,\dd,\da}
  \begin{algorithmic}[1]
    \Require{%
        \algoitem displacement generators $G,H \in\xmatspace{n}{\alpha}$ that represent
        $M\in\xmatspace{n}{n}$ through \(Z_0 M - M \trsp{Z_1} = G \trsp{H}\) (see \cref{sec:structured_system:prelim}); \\
        \algoitem \(\indset = (j_0,\ldots,j_{m-1})\) with \(0 \le j_0 < j_1 < \cdots < j_{m-1} < n\); \\
        \algoitem degree bounds \(\dd \ge \deg(\det(M))\) and \(\da \ge \deg(\adj{M})\).%
    }
    \Ensure{Any one of \FlagFail;
      \FlagSing;
      and \((\mu, \mu N_{\indset*})\) where \(N = M^{-1}\) and
    \(\mu\) is the monic least common denominator of \(N_{\indset*}\).}
  \end{algorithmic}
\end{algorithm}

\section{Efficient relation basis computations}
\label{app:relbas}

We first prove the next result, which implies \cref{prop:HNFRelBas}.

\begin{proposition}
  \label{prop:PopovRelBas}%
  There is an algorithm which, given \(\dd \in \ZZp\), $\mu \in\xring_{\dd}$,
  $F \in\xmatspace{m}{n}_{<\dd}$, and \(s \in \ZZ^m\), computes the \(s\)-Popov
  basis of $\mathcal{R}(\mu,F)$ using \(\softO{m^{\expmm-1}n\dd}\) operations
  in \(\field\).
\end{proposition}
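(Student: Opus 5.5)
The plan is to reduce the computation of an \(s\)-Popov basis of \(\mathcal{R}(\mu,F)\) to a single minimal approximant basis (Hermite--Padé type) computation on a slightly larger matrix. Then I will invoke the known quasi-optimal algorithms for shifted Popov approximant bases, as in \cite{Neiger2016}.

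First, I will encode the modular relation as a kernel-type condition. A row \(p\in\xmatspace{1}{m}\) lies in \(\mathcal{R}(\mu,F)\) if and only if there is \(q\in\xmatspace{1}{n}\) with \([p\;\;q]\trsp{[\trsp{F}\;\;-\mu\ident{n}]} = 0\). I will stack \(F\) over \(\mu\ident{n}\) to get an \((m+n)\times n\) matrix \(\bar F\) and consider its left kernel. To make this tractable with approximants, I will instead use the standard trick: the relation condition \(pF \equiv 0 \bmod \mu\) is the approximant condition modulo \(\mu\). When \(\mu\) is a power of \(x\) this is directly an order-\(\dd\) approximation problem. For general \(\mu\), I will use the generalization of approximant bases to arbitrary moduli, namely relation bases with a diagonal modulus. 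Neiger's algorithm handles moduli \(\diag{\mu,\ldots,\mu}\) and returns the \(s\)-Popov basis in \(\softO{m^{\expmm-1}\sigma}\) operations, where \(\sigma = n\dd\) is the sum of the modulus degrees, provided \(n\le m\) or after suitable compression.

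Second, I must reconcile the cost \(m^{\expmm-1}n\dd\) when \(n>m\), which is the main case here since \(F\) is wide. The relevant cost bound in \cite{Neiger2016} for the module of relations of an \(m\times n\) matrix modulo \(\diag{\mu,\ldots,\mu}\) is \(\softO{m^{\expmm-1}\sigma}\) with \(\sigma=n\dd\). This gives exactly \(\softO{m^{\expmm-1}n\dd}\) without any requirement \(n\le m\). If the cited result is instead stated only for \(n\le m\), I would fall back on partial linearization or column compression. I would split the \(n\) columns into \(\lceil n/m\rceil\) groups of size at most \(m\) and compute relation bases incrementally. Each step updates the current basis \(P\) by solving for \(\mathcal{R}(\mu, P F_{*,\text{group}} \bmod \mu)\) with shift \(\rdeg[s]{P}\). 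The product of the successive Popov bases has determinant degree at most \(m\dd\), so each step costs \(\softO{m^{\expmm-1}\dd}\) after reducing \(PF\) modulo \(\mu\), where \(P\) has average row degree at most \(\dd\). Summing over the groups gives \(\softO{m^{\expmm-1}n\dd}\). Correctness of the incremental scheme follows from \(\mathcal{R}(\mu,[F_1\;F_2]) = \{uP_1 : u \in \mathcal{R}(\mu, P_1F_2)\}\) when \(P_1\) is a basis of \(\mathcal{R}(\mu,F_1)\). The product is then an \(s\)-reduced basis, and a final normalization to \(s\)-Popov form is done via the known-degree approach, since its \(s\)-pivot degrees are known.

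The main obstacle I expect is controlling the degrees in this incremental product. The matrices \(P\) can have unbalanced degrees, especially for shifts like \((0,\dd,2\dd,\ldots)\), so the products \(P F\bmod\mu\) and \(P_1P_2\) need shifted-degree-aware multiplication, using partial linearization in the style of Gupta--Sarkar--Storjohann--Valeriote. This is needed to stay within \(\softO{m^{\expmm-1}\dd}\) per step. I would rely on the fact that \(\sumTuple{\rdeg[s]{P}}-\sumTuple{s} = \deg\det P \le m\dd\) to apply these multiplication results.

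Finally, \cref{prop:HNFRelBas} follows by taking \(s=(0,\dd,\ldots,(m-1)\dd)\). The HNF of \(\mathcal{R}(\mu,F)\) has determinant dividing \(\mu^m\), hence of degree at most \(m\dd\), so the shift must be \((0,m\dd,\ldots)\) to be safe. This is fine since the Popov algorithm's cost is independent of the shift.
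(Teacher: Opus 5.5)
\textbf{There is a genuine gap.} Your main step extends \cite[Thm.\,1.4]{Neiger2016} beyond the range where the paper uses it. The paper invokes that theorem only for $n\in\bigO{m}$, and its proof exists precisely to handle the wide case $m\le n$ separately. So your argument rests on the fallback, and the cost analysis there is incomplete.

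First, a minor slip. A group of $m$ columns modulo $\mu$ has $\sigma=m\dd$, so each relation-basis step costs $\softO{m^{\expmm}\dd}$, not $\softO{m^{\expmm-1}\dd}$. Summed over $\lceil n/m\rceil$ groups, this does give $\softO{m^{\expmm-1}n\dd}$.

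The real problem is that you leave the accumulated product $P=P_k\cdots P_1$ unnormalized until the end. Each factor is a shifted Popov basis of a relation module containing $\mu\ident{m}$, so its pivot degrees, and hence all its entries, have degree at most $\dd$. The product does not inherit this bound. Take $m=2$ and $s=(0,S)$ with $S$ large. Let $F_1=\trsp{[f_0\;\;f_1]}$ with $f_0$ invertible modulo $\mu$, and $F_2=\trsp{[0\;\;1]}$. Your scheme then produces $P_1=[\begin{smallmatrix}\mu&0\\ b&1\end{smallmatrix}]$ and $P_2=\diag{1,\mu}$. Their product has an entry $\mu b$ of degree $2\dd-1$, whereas the actual Popov basis is $\mu\ident{2}$. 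In general the entries can grow with the number of steps.

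The identity $\sumTuple{\rdeg[s]{P}}-\sumTuple{s}=\deg\det P\le m\dd$ does not control this when $\max(s)-\min(s)$ is large. For $s=(0,S,2S,\ldots)$ with $S\gg m\dd$, a lower triangular matrix with diagonal entries of degree $\dd$ and subdiagonal entries of degree $m\dd$ is $s$-reduced with $\deg\det=m\dd$. Yet it has $\Theta(m^3\dd)$ coefficients, so no multiplication by it fits in $\softO{m^{\expmm}\dd}$.

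Consequently, the residual products $PF_{\mathrm{group}}\bmod\mu$ and the updates $P_kP$ are not shown to stay within your per-step budget. Shifted-degree-aware multiplication in the style of Zhou--Labahn--Storjohann does not rescue this for arbitrary shifts. You could repair it by renormalizing to $s$-Popov form after every step, which keeps all entries of degree at most $\dd$. But that needs an additional shifted normalization algorithm that you do not invoke or cost.

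The paper avoids all of this with a single compression. It computes a column basis $C\in\xmatspace{m}{r}$ of $F$, where $r=\rank{F}\le m$ and $\deg C\le\deg F<\dd$, using $\softO{m^{\expmm-1}n\dd}$ operations \cite{ZhouLabahn2013}. Since $FV=[C\;\;0]$ for some unimodular $V$, one has $\mathcal{R}(\mu,F)=\mathcal{R}(\mu,C)$. Then one call to \cite[Thm.\,1.4]{Neiger2016}, with $r\le m$ columns and the arbitrary shift $s$, costs $\softO{m^{\expmm-1}r\dd}$. This needs no intermediate products, no intermediate shifts, and no normalization.
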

\begin{proof}
  For \(n \in \bigO{m}\), this is a particular case of
  \cite[Thm.\,1.4]{Neiger2016}. Now consider \(m \le n\). Using
  \(\softO{m^{\expmm-1} n \dd}\) operations in \(\field\), one can compute a
  matrix \(C \in \xmatspace{m}{r}\) of degree \(\le \deg(F)\) whose columns
  form a basis of \(\cspan{F}\) \cite{ZhouLabahn2013}, where \(r = \rank{F}\).
  It is enough to prove $\mathcal{R}(\mu,F) = \mathcal{R}(\mu,C)$,
  since \(C\) has \(r \in \bigO{m}\) columns and thus computing the
  \(s\)-Popov basis of $\mathcal{R}(\mu,C)$ costs \(\softO{m^{\expmm-1} r
  \dd}\) according to \cite[Thm.\,1.4]{Neiger2016}.
  To prove this equality of relation modules, note that
  by definition of \(C\),
  there exists a unimodular \(V \in \xmatspace{n}{n}\)
  %% NOTE if ``there exists'' is too quick, here are more arguments:
  %% by def of C, one can write
  %%      F = C * U1, for some U1  r x n;
  %%      F * U2 = C, for some U2  n x r;
  %% then one has C = C * U1 * U2, and since C has full column rank
  %% this shows that U1 * U2 = I_r.
  %% In particular U1 has unimodular column bases (i.e., it is saturated, etc.)
  %% so it can be ``row-completed'' into an
  %% n x n unimodular matrix V = [U1 \\ V2]  (cf Zhou-Labahn 2014, for example)
  %% We have [C \;\; 0] V = F, so F * V^{-1} = [C \;\; 0].
  such that \(F V = [C \;\; 0]\).
  %We write \(V = [V_0 \;\; V_1]\) with \(V_0\)
  %having \(r\) columns, so that \(F V_1 = C\).
  Then, for \(p \in \xmatspace{1}{m}\), we have
  \begin{align*}
    pF = 0 \bmod \mu & \Leftrightarrow pFV = \mu q V \text{ for some } q \in \xmatspace{1}{n} \\
                     & \Leftrightarrow [pC \;\; 0] = \mu q \text{ for some } q \in \xmatspace{1}{n} \\
                     & \Leftrightarrow pC = \mu q \text{ for some } q \in \xmatspace{1}{r},
  \end{align*}
  where the second equivalence exploits the unimodularity of \(V\).
\end{proof}

One particularly favorable case is when we seek the HNF basis and this basis is
known to satisfy \(\hyprow\). When \(n = 1\), the basis can be found through
explicit formulas for its first column (\cref{lem:hrow_relbas_onecol}), and for
\(n > 1\) it can be found through a sequence of \(n-1\) relation bases of a
small \(2 \times n\) matrix (\cref{lem:hrow_relbas_manycol}). These are the
results used at
\crefrange{algo:HermiteFormSubmatrix:if_hypcol_hyprow}{algo:HermiteFormSubmatrix:hrow_hcol_relbas}
and
\crefrange{algo:HermiteFormSubmatrix:gcd}{algo:HermiteFormSubmatrix:hrow_relbas},
respectively, of \cref{algo:HermiteFormSubmatrix}.

\begin{lemma}
  \label{lem:hrow_relbas_onecol}%
  Let $\mu \in\xring_{\dd}$ be monic and $c \in\xmatspace{m}{1}_{<\dd}$ for
  some \(\dd \in \ZZp\). If \(\gcd(c_{0},\mu) = 1\), then the HNF basis of
  $\mathcal{R}(\mu,c)$ is
  \[ B=
    \begin{bmatrix}
      \mu               \\
      b_{1} & 1  \\
      \vdots &  & \ddots \\
      b_{m-1} & & & 1
    \end{bmatrix}
    \in \xmatspace{m}{m},
  \]
  where \(b_{i} = -c_i/c_0 \bmod \mu\) for \(1 \le i < m\).
\end{lemma}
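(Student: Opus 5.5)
We will show two inclusions, \(\rspan{B} \subseteq \mathcal{R}(\mu,c)\) and \(\mathcal{R}(\mu,c) \subseteq \rspan{B}\). Then we observe that \(B\) is in HNF, and conclude by uniqueness of the HNF basis of a free module of rank \(m\) (\cref{sec:hnf:prelim}).

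\emph{The matrix \(B\) is in HNF.} It is lower triangular with diagonal \((\mu,1,\ldots,1)\), so it is nonsingular. Its diagonal entries are monic. Off the diagonal, the only possibly nonzero entries are the \(b_i\) in column \(0\). Each \(b_i\) is a remainder modulo \(\mu\), so \(\deg(b_i) < \dd = \deg(B_{00})\). Every other column has diagonal entry \(1\) and no other nonzero entry. Hence \(\lmat[0]{\trsp{B}} = \ident{m}\).

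\emph{Inclusion \(\rspan{B} \subseteq \mathcal{R}(\mu,c)\).} It suffices to check each row. Row \(0\) gives \(\mu c_0 \equiv 0 \bmod \mu\). Since \(\gcd(c_0,\mu)=1\), \(c_0\) is invertible modulo \(\mu\), and \(b_i\) is well defined as \(b_i \equiv -c_i c_0^{-1} \bmod \mu\). Row \(i \ge 1\) then gives \(b_i c_0 + c_i \equiv -c_i + c_i = 0 \bmod \mu\).

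\emph{Inclusion \(\mathcal{R}(\mu,c) \subseteq \rspan{B}\).} Let \(p \in \mathcal{R}(\mu,c)\) and set \(q = p - \sum_{i\ge1} p_i B_{i*}\). This vector lies in \(\mathcal{R}(\mu,c)\) by the first inclusion, and it has the form \([q_0 \; 0 \; \cdots \; 0]\). The relation \(q c \equiv 0 \bmod \mu\) gives \(q_0 c_0 \equiv 0 \bmod \mu\). Coprimality of \(c_0\) and \(\mu\) then forces \(\mu \mid q_0\), so \(q\) is a multiple of \(B_{0*}\). Therefore \(p \in \rspan{B}\).

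The argument is elementary; the only delicate point is the use of \(\gcd(c_0,\mu)=1\). It is needed both to define the \(b_i\) and to deduce \(\mu \mid q_0\). The latter step is exactly what pins the first diagonal entry to \(\mu\), rather than to a proper divisor of \(\mu\).
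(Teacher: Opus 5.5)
Your proof is correct and is essentially the paper's argument. Both establish \(\rspan{B} = \mathcal{R}(\mu,c)\) through the same key step: reducing \(p_0\) using the invertibility of \(c_0\) modulo \(\mu\). Where the paper separately argues that the first diagonal entry of the HNF must equal \(\mu\), you instead check explicitly that \(B\) is in HNF and invoke uniqueness of the HNF basis. That check is left implicit in the paper, so your version closes the argument slightly more cleanly.
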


\begin{proof}
  Let \(H\) be the HNF basis of \(\mathcal{R}(\mu,c)\). Its first row
  has the form \([b_0 \; 0 \; \cdots \; 0]\) for a nonzero \(b_0 \in
  \xring\) such that \(\mu\) divides \(b_0 c_0\). As \(\gcd(\mu,c_0) = 1\),
  \(\mu\) divides \(b_0\). On the other hand, \([\mu \; 0 \; \cdots \;
  0]\) is in the row span of \(H\) which is lower triangular with nonzero
  diagonal entries, hence this vector is a multiple of \([b_0 \; 0
  \; \cdots \; 0]\). Thus \(\mu = b_0\).

  By construction, the row span of \(B\) is in \(\mathcal{R}(\mu,c)\).
  Conversely, let \(p \in \mathcal{R}(\mu,c)\).
  From \(p_0 c_0 + \cdots + p_{m-1}c_{m-1} = 0 \bmod \mu\) we deduce \(p_0 =
  p_1b_1 + \cdots + p_{m-1}b_{m-1} \bmod \mu\) since \(\gcd(c_0,\mu)=1\).
  This shows that \(p = [u \;\; p_1 \;\; \cdots \;\; p_{m-1}]B\) for some \(u \in
  \xring\), using the fact that \(b_0 = \mu\). We have proved \(p \in \rspan{B}\).
\end{proof}

\begin{lemma}
  \label{lem:hrow_relbas_manycol}%
  Let \(\dd \in \ZZp\),
  let $\mu \in\xring_{\dd}$ be monic, and let
  $R \in\xmatspace{m}{n}_{<\dd}$ be such that \(\gcd(\mu,R_{00},\ldots,R_{0,n-1})=1\).
  Let \(B \in \xmatspace{m}{m}\) be the HNF basis of \(\mathcal{R}(\mu,R)\).
  If \(B\) satisfies \(\hyprow\), then for any \(1 \leq i < m\), the HNF basis of
  \(\mathcal{R}(\mu,R_{(0,i),*})\) is
  \[
    B_{(0,i),(0,i)} =
    \begin{bmatrix}
      \mu & 0 \\ B_{i0} & 1
    \end{bmatrix}.
  \]
\end{lemma}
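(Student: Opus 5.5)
The plan is to prove that \(\mathcal{R}(\mu,R_{(0,i),*})\) has exactly \(B_{(0,i),(0,i)}\) as its HNF basis, by comparing membership in this small relation module with membership in the full module \(\mathcal{R}(\mu,R)\) whose HNF basis \(B\) is known. Since \(B\) satisfies \(\hyprow\) and is in HNF, \(B\) is lower triangular with diagonal \((b_0,1,\ldots,1)\) and nonzero entries only in column \(0\) and on the diagonal; in particular \(B_{(0,i),(0,i)} = \left[\begin{smallmatrix} b_0 & 0 \\ B_{i0} & 1\end{smallmatrix}\right]\) is already in HNF, with \(\deg(B_{i0}) < \deg(b_0)\). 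The proof then has two pieces: show \(b_0 = \mu\), and show that the row span of \(B_{(0,i),(0,i)}\) equals \(\mathcal{R}(\mu,R_{(0,i),*})\).

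First, for \(b_0=\mu\), I would argue as in \cref{lem:hrow_relbas_onecol}. The row \([b_0\;0\;\cdots\;0]\) lies in \(\mathcal{R}(\mu,R)\), so \(b_0 R_{0*} = 0 \bmod \mu\). Since \(\gcd(\mu,R_{00},\ldots,R_{0,n-1})=1\), a Bézout combination gives \(\mu \mid b_0\). Conversely, \([\mu\;0\;\cdots\;0]\in\mathcal{R}(\mu,R)\), and by triangularity of \(B\) it is a polynomial multiple of the first row, so \(b_0\mid\mu\). Since both are monic, \(b_0=\mu\).

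Second, for the inclusion \(\rspan{B_{(0,i),(0,i)}} \subseteq \mathcal{R}(\mu,R_{(0,i),*})\): the row \([\mu\;\,0]\) is clearly a relation. The \(i\)th row \(B_{i*}\) of \(B\) has support \(\{0,i\}\), with entries \(B_{i0}\) and \(1\). Since \(B_{i*}R = 0 \bmod \mu\), this says exactly that \([B_{i0}\;\,1]R_{(0,i),*} = 0 \bmod \mu\). For the reverse inclusion, take \(q=[q_0\;\,q_1]\) with \(qR_{(0,i),*}=0\bmod\mu\). Embed it as the vector \(p\in\xmatspace{1}{m}\) with \(p_0=q_0\), \(p_i=q_1\), and zeros elsewhere; then \(p\in\mathcal{R}(\mu,R)\), so \(p=vB\) for some \(v\). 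Because \(B\) is lower triangular with unit diagonal beyond index \(0\), reading the entries from the last index down to \(1\) forces \(v_k = p_k\) for \(k\ge1\). Hence \(v_k=0\) for \(k\notin\{0,i\}\) and \(v_i=q_1\), and therefore \(q=[v_0\;\,v_i]B_{(0,i),(0,i)}\).

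Combining the two pieces, \(B_{(0,i),(0,i)}\) is a basis of \(\mathcal{R}(\mu,R_{(0,i),*})\) in HNF. By uniqueness of the HNF basis, it is the HNF basis of that module. The only delicate point is the back-substitution step and keeping track that the off-diagonal entries live solely in column \(0\); it is routine once \(\hyprow\) is used to fix the shape of \(B\).
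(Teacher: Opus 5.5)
Your proof is correct, but you obtain the key inclusion $\mathcal{R}(\mu,R_{(0,i),*}) \subseteq \rspan{B_{(0,i),(0,i)}}$ by a different route than the paper.

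\textbf{The paper's route.} It stays inside the two-row module. It subtracts $p_1$ times the relation $B_{i0}R_{0*}+R_{i*}=0 \bmod \mu$ from $p_0R_{0*}+p_1R_{i*}=0 \bmod \mu$, which gives $(p_0-p_1B_{i0})R_{0*}=0 \bmod \mu$. The hypothesis $\gcd(\mu,R_{0*})=1$ then yields $\mu \mid p_0-p_1B_{i0}$. This uses only that the $i$th row of $B$ is a relation, not that $B$ generates $\mathcal{R}(\mu,R)$.

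\textbf{Your route.} You lift the relation to $\mathcal{R}(\mu,R)$ and write it as $vB$. You then use that columns $1,\ldots,m-1$ of $B$ are unit vectors to force $v$ to be supported on $\{0,i\}$. Here the gcd hypothesis is used only to prove $B_{00}=\mu$. The paper's proof uses this fact implicitly, in its last step and in the displayed form of $B_{(0,i),(0,i)}$, without justifying it, so your write-up is more complete on that point.

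\textbf{What each buys.} Your argument also shows that, under $\hyprow$ alone, the submatrix $B_{(0,i),(0,i)}$ (with top-left entry $B_{00}$) is always the HNF basis of $\mathcal{R}(\mu,R_{(0,i),*})$; the gcd condition only serves to identify $B_{00}$ with $\mu$. The paper's argument is shorter and more local.
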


\begin{proof}
  Let \(0 \leq i < m\). First, note that \(B_{(0,i),(0,i)}\) is in HNF. 
  As
  \(B\) satisfies \(\hyprow\), we have \(B_{i0} R_{0*} + R_{i*} = 0 \bmod
  \mu\), hence both rows of \(B_{(0,i),(0,i)}\) are in
  \(\mathcal{R}(\mu,R_{(0,i),*})\). 
  Thus, 
  it remains to prove that any \(p \in
  \mathcal{R}(\mu,R_{(0,i),*})\) is also in \(\rspan{B_{(0,i),(0,i)}}\).
  Indeed, from \(p \in \mathcal{R}(\mu,R_{(0,i),*})\) we get
  \(p_0 R_{0*} + p_1 R_{i*} = 0 \bmod \mu\), and 
  therefore
  \((p_0 - p_1 B_{i0}) R_{0*} = 0 \bmod \mu\) by an above remark.
  Since \(\gcd(\mu, R_{0*}) = 1\), 
  this implies 
  \(p_0 = p_1 B_{i0} \bmod
  \mu\), 
  and the latter means 
  \(p \in \rspan{B_{(0,i),(0,i)}}\).
\end{proof}

\end{document}